%% file: main.tex
\documentclass{article}
\usepackage{graphicx} 
\usepackage{authblk}

\title{Robust Prediction Variance Estimation for Gaussian Process Regression Under Covariance Smoothness Misspecification}

\author[1]{Roberto Rivera}

\affil[1]{Department of Mathematical Sciences, University of Mayagüez}

\usepackage{bm}
\usepackage{enumitem}

\usepackage[margin=1in]{geometry}

\usepackage[title]{appendix}

\usepackage{graphicx}
\usepackage{verbatim}   
\usepackage{color}      
\usepackage{subfigure}  
\usepackage{amssymb}    
\usepackage{amsmath}    
\usepackage{amsthm}     
\usepackage[dvips]{epsfig}
\usepackage{algorithm}%
\usepackage{algorithmic}
\usepackage{setspace}
\usepackage{listings}

\usepackage{booktabs}

\usepackage{rotfloat} 
\usepackage{float}

\usepackage{multirow}

\usepackage{natbib}

\floatstyle{plain}

\theoremstyle{plain}

\newtheorem{theorem}{Theorem}
\newtheorem{proposition}{Proposition}%

\newcommand{\Var}{\mathrm{Var}}

\restylefloat{figure}

\bibpunct{(}{)}{;}{a}{}{,}

\date{}

\begin{document}

\graphicspath{{figures/}}

\maketitle

\input{abstract}

\input{introduction}
\input{predictionandmisspec}

\bibliographystyle{apalike}
\bibliography{bibtexreferences2011}       
\input{appendix}
\end{document}

%% file: abstract.tex
\begin{abstract}
Best Linear Unbiased Prediction (BLUP) has been a dominant approach in Generalized Linear Mixed Models, spatial models, and Gaussian Process Regression (GPR). In addition to their optimal properties, BLUP procedures quantify prediction uncertainty. However, the general implementation of BLUP goes as follows: (i) assume the probability distribution and covariance function are known and that only the covariance parameter values are unknown; (ii) plug in parameter estimates into BLUP equations to get the Estimated Best Linear Unbiased Prediction (EBLUP) and its variance. In applications, the reality is that the true covariance function for the process is unknown and choosing the wrong covariance model, particularly its smoothness, to estimate parameters yields a quasi-EBLUP whose prediction variance is biased downward. Focusing on a GPR context, in this paper we first demonstrate that the effect of misspecification on the mean squared prediction error (MSPE) of the quasi-EBLUP converges to a positive constant when the working and true measures are non-equivalent, and is smooth in the prediction location. We then 
propose a new way to estimate the
MSPE of the quasi-EBLUP that accounts for covariance function uncertainty. Our
new estimator is compared to four other prediction variance
estimators. The new prediction variance
estimator generally performs better than all other competitors, and the larger the misspecification of the covariance smoothness, the wider the difference among MSPE estimators. 
\end{abstract}
keywords: predictor variance estimation; covariance misspecification; calibration ratio; gaussian process regression; kriging; quasi-EBLUP.

%% file: introduction.tex
\section{Introduction}
\label{sec:intro}

Best Linear Unbiased Prediction (BLUP) has been a foundational tool in machine
learning, spatial statistics and computer experiments. The concept has been broadly studied in the context of Generalized Linear Mixed Models, spatial models, and Gaussian Process Regression \citep{henderson1975,robinson1991,ruppertetal03,sacks1989,breslow1993,rasmussenwilliams06}. This work takes a Gaussian process regression (GPR) point of view. The method uses data to build a nonlinear model to predict an output using inputs.  
When dependence among inputs is known, the BLUP of the output at an
input value enjoys well-understood
optimality properties and an exact prediction variance
that can be used to construct prediction intervals with nominal
coverage \citep{rasmussenwilliams06,cressie93}.


Let $W(\boldsymbol{x}):\mathbb{R}^d \to \mathbb{R}$ be a latent function modeled as a
Gaussian process and $\mathcal{K}_*(\bm{x}, \bm{x}')$ denote the true covariance function,
such that $K_*$ is the $n \times n$ matrix with $(i,j)$ entry
$\mathcal{K}_*(\bm{x}_i, \bm{x}_j)$. Define
 $\boldsymbol{W}=(W(\boldsymbol{x}_1),...,W(\boldsymbol{x}_n))'$, and  $\boldsymbol{W} \sim \text{N}(0, K_*)$. 
Assuming stationarity, typical parameters include the sill $\sigma^2_w$, and the range $\ell$. Observations are
\begin{equation}\label{eq:obs_model}
  Y(\boldsymbol{x}_i) = W(\boldsymbol{x}_i) + \epsilon_i, \qquad
  \epsilon_i \overset{\mathrm{iid}}{\sim} N(0,\sigma^2),
  \qquad i=1,\dots,n,
\end{equation}
where $W(\boldsymbol{x}_i)$ is independent of $ \epsilon_i$. We suppress the
mean function for notational clarity. 
Define $\boldsymbol{Y} = (Y(\boldsymbol{x}_1),\dots,Y(\boldsymbol{x}_n))'$,
$\Sigma_* = K_* + \sigma^2 I$ (the $n\times n$ covariance
of $\boldsymbol{Y}$), and
$\bm{k}_*(\bm{x}_o) = (\mathcal{K}_*(\bm{x}_o, \bm{x}_1), \dots,
\mathcal{K}_*(\bm{x}_o, \bm{x}_n))'$.

When both $\theta =\theta_*$ and the form of $\Sigma_\theta=\Sigma_*$ are known, the BLUP of
$W(\boldsymbol{x}_{o})$ is
\begin{equation}
  \widetilde{W}_{*}(\boldsymbol{x}_{o}) = \boldsymbol{k}_{*}(\boldsymbol{x}_{o})'\,\Sigma_*^{-1}\,\boldsymbol{Y}
  = \bm{\lambda}_{*}'\boldsymbol{Y}\label{eq:krigcovknown}
\end{equation}
where $\bm{\lambda}_{*} = \Sigma_*^{-1} \boldsymbol{k}_{*}(\boldsymbol{x}_{o})$ and prediction variance is

\begin{eqnarray}
\tau^2_{\scriptscriptstyle\mathrm{B}}(\boldsymbol{x}_{o}) 
&=& E_\theta \left[ \left( W(\boldsymbol{x}_o) - \widetilde{W}_{*}( \boldsymbol{x}_o) \right)^2 \right] \nonumber \\
&=&
\mathcal{K}_*(\boldsymbol{x}_{o},\boldsymbol{x}_{o}) -
\boldsymbol{k}_{*}(\boldsymbol{x}_{o})'\Sigma^{-1}_{*}\boldsymbol{k}_{*}(\boldsymbol{x}_{o})
    \label{eq:mspeforkriging}
\end{eqnarray}
Suppose we aim to predict $W(\boldsymbol{x}_{o})$ for any input values 
$\boldsymbol{x}_{o}$, and use prediction intervals to assess uncertainty. Then for confidence level $1-\alpha$
\begin{eqnarray}
P\left(\widetilde{W}_{*}(\boldsymbol{x}_{o}) - z_{\alpha/2}\, \tau_{\scriptscriptstyle\mathrm{B}}(\boldsymbol{x}_{o}) \le
W(\boldsymbol{x}_{o}) \le \widetilde{W}_{*}(\boldsymbol{x}_{o}) +
z_{\alpha/2}\, \tau_{\scriptscriptstyle\mathrm{B}}(\boldsymbol{x}_{o})\right)= 1-\alpha,  \label{eq:krigpredint}
\end{eqnarray}
implying
that the corresponding prediction interval has nominal coverage of $1 - \alpha$. \\ \ \\
  The GPR predictor (\ref{eq:krigcovknown}) filters out the estimated measurement error, 
to predict the latent process $W(\boldsymbol{x}_o)$, based on $\boldsymbol{Y}$. \\

 In theory, the GPR predictor has two main beneficial
properties: it is 
optimal among all uniformly unbiased predictors  that are linear in the data while quantifying prediction uncertainty.

In practice, $\theta$ is estimated by $\widehat\theta$ (e.g.\ via maximum likelihood) and plugged into
(\ref{eq:krigcovknown}) assuming $\Sigma_W(.)$ known, giving the Empirical Best Linear Unbiased
Predictor (EBLUP),
\begin{equation}\label{eq:eblup}
  \widetilde{W}_{\widehat{\theta}}(\boldsymbol{x}_{o}) = \bm{k}_{\widehat\theta}(\boldsymbol{x}_{o})'\,
  \Sigma_{\widehat\theta}^{-1}\,\boldsymbol{Y}
  = \bm{\lambda}(\widehat{\theta})'\boldsymbol{Y}
\end{equation}

Similarly, by
plugging $\hat{\boldsymbol{\theta}}$ into (\ref{eq:mspeforkriging}) we
get,
\begin{eqnarray}
\widetilde{\tau}^2_{\scriptscriptstyle\mathrm{E}}(\boldsymbol{x}_{o}) &=&
\mathcal{K}_{\widehat{\theta}}(\boldsymbol{x}_{o},\boldsymbol{x}_{o}) -
\boldsymbol{k}_{\widehat{\theta}}(\boldsymbol{x}_{o})'\Sigma_{\widehat\theta}^{-1}\boldsymbol{k}_{\widehat{\theta}}(\boldsymbol{x}_{o}).
\label{eq:mspekrig}
\end{eqnarray}

When $\theta$
is known but the covariance matrix $K_\theta$
may differ from the true covariance matrix $K_*$, we call $\widehat{W}_{\theta}(\boldsymbol{x}_o) = \bm{\lambda}_\theta'\boldsymbol{Y}$
the quasi-BLUP: it is the BLUP under the chosen covariance model but is not necessarily optimal under the true covariance model. When $\theta$ is in addition replaced by its MLE/REML estimate $\widehat\theta$
under the covariance model, we call the resulting predictor $\widehat{W}_{\widehat{\theta}}(\boldsymbol{x}_o) = \bm{\lambda}_{\widehat\theta}'\boldsymbol{Y}$
the quasi-EBLUP. Under correct specification ($K_\theta \equiv K_*$), the quasi-BLUP coincides with the BLUP and the quasi-EBLUP coincides with the EBLUP.  Our primary estimation target throughout is the actual mean squared prediction error (MSPE) of the quasi-EBLUP under the true process, 
\begin{equation}
\tau^2_{\scriptscriptstyle\mathrm{Q}}(\boldsymbol{x}_o) = E_*[(W(\boldsymbol{x}_o) - \widehat{W}_{\widehat{\theta}}(\boldsymbol{x}_o))^2]\label{eq:true_qeblupmspe}
\end{equation}

When the covariance model is misspecified, \eqref{eq:eblup} remains unbiased since ML and REML estimators are even functions of $\boldsymbol{Y}$
under a zero-mean Gaussian process \citep{christensen91}. 
  While predictions themselves are relatively robust to covariance misspecification, especially under dense sampling and mild model errors \citep{stein99,zhang04a}, it has long been recognized that substituting
parameter estimates into \eqref{eq:mspekrig} (the so-called plug-in
estimator) systematically
underestimates $\tau^2_{\scriptscriptstyle B}(\boldsymbol{x}_o)$, producing prediction intervals that are narrower than
they should be \citep{zimmermancressie92}. The underestimation is most pronounced for small
sample sizes, weak dependence, or misspecified covariance smoothness \citep{zimmermancressie92,cressie93,stein99}. 

Several approaches have been proposed to correct this deficiency when estimating \eqref{eq:mspekrig}. One alternative is,
\begin{eqnarray}
\widetilde{\tau}^2_{\scriptscriptstyle\mathrm{B}}(\boldsymbol{x}_{o}) 
&\equiv&
 E_\theta \left(W(\boldsymbol{x}_{o}) -
\widetilde{W}_{\widehat{\theta}}(\boldsymbol{x}_{o})\right)^{2}\label{eq:mspeforpluginkriging}\\
&=&
E_\theta \left(W(\boldsymbol{x}_{o}) -
\widetilde{W}_{*}(\boldsymbol{x}_{o})\right)^{2} + E_\theta \left(\widetilde{W}_{*}(\boldsymbol{x}_{o})
-
\widetilde{W}_{\widehat{\theta}}(\boldsymbol{x}_{o})\right)^{2} +\nonumber\\
&& 2Cov_\theta \left(W(\boldsymbol{x}_{o}) -
\widetilde{W}_{*}(\boldsymbol{x}_{o}),\widetilde{W}_{*}(\boldsymbol{x}_{o}) -
\widetilde{W}_{\widehat{\theta}}(\boldsymbol{x}_{o})\right)\label{eq:estkrigmspe}
\end{eqnarray}

\citet{zimmermancressie92} demonstrated that if
$(Y(\boldsymbol{x}_{o}),\boldsymbol{Y}^{'})^{'}$ is multivariate Normal,
the covariance term of
(\ref{eq:estkrigmspe}) is zero. 
Therefore \eqref{eq:mspeforpluginkriging} is greater or
equal to 
$\tau^2_{\scriptscriptstyle\mathrm{B}}(\boldsymbol{x}_{o})$ obtained when the
parameters are known and the process follows a Gaussian
distribution \citep[page 298]{cressie93}.
 \citet{harvillejeske92} proposed an approximation of $\widetilde{\tau}^2_{\scriptscriptstyle\mathrm{B}}(\boldsymbol{x}_{o})$ in the general linear
mixed effects context based on the Taylor series expansion of
$\widetilde{W}_{\widehat{\theta}}(\boldsymbol{x}_{o})$ about the
parameters $\boldsymbol{\theta}$. These authors show that their new
estimator of $\widetilde{\tau}^2_{\scriptscriptstyle\mathrm{B}}(\boldsymbol{x}_{o})$ is approximately
unbiased but under some rather stringent assumptions, such as the
covariance function being a linear function of the
parameters and having unbiased estimators of the covariance parameters. 
For many commonly-used covariance structures, including exponential, $\Sigma_{\widehat{\boldsymbol{\theta}}}$
depends non-linearly on $\boldsymbol{Y}$,
in which case $\widetilde{W}_{\widehat{\theta}}$ in \eqref{eq:eblup} is no longer linear in $\boldsymbol{Y}$. 

On the other hand, \citet{zimmermancressie92} suggested that
$\tau^2_{\scriptscriptstyle\mathrm{B}}(\boldsymbol{x}_{o})$ provides prediction intervals with close to
nominal coverage when the correlation is at least moderate
and the sample size $n$ is not too small. \cite{wangwall03} proposed two parametric bootstrap estimators
that add a bootstrap-based correction to the plug-in.
While these estimators improve upon the plug-in for small
samples, they rely on the
decomposition of the EBLUP MSPE into a known-parameter component
and a parameter-uncertainty component; a decomposition that
requires the data-generating process to be Gaussian and the
covariance model to be correctly specified \citep{zimmermancressie92,riverameiring11}. 

\cite{sjostedt2003bootstrap} and \cite{schelinandsjostedt10} pursued bootstrap calibration of
prediction intervals without correcting the variance directly,
with the latter employing a semiparametric bootstrap to relax
distributional assumptions.

In the context of spatial models, \citet{stein99} strongly advocated for the use of the Mat\'ern covariance
function to model input dependence.  The Mat\'ern covariance
function includes a smoothness parameter, $\nu$, that controls the
differentiability of the underlying random field. Recent work has concluded that the estimability of
$\nu$ is context-dependent. In particular, the amount of information about
$\nu$ can be substantial under favorable sampling designs and signal-to-noise
regimes \citep{bevilacqua2019}. Furthermore, advances in computation,
including the use of automatic differentiation for exact likelihood
derivatives, have improved the numerical stability of likelihood-based
estimation of Mat\'ern parameters \citep{borovitskiy2020}. Nonetheless,
even with these advances, the smoothness parameter is often weakly identified, especially under infill asymptotics and in models that include a nugget effect where the profile likelihood may be nearly
flat \citep[p.~220]{stein99}, leading many
practitioners to fix $\nu$ at a small set of canonical values rather than
estimate it \citep{zhang04a,KaufmanShaby2013}.

Thus, EBLUP MSPE methods overall assume that the parametric form of the covariance function is correct, differing only in how they handle
uncertainty about the parameters within that assumed form. In practice, covariance misspecification is the norm rather than
the exception: a practitioner may fit a squared exponential
covariance when the true process is better described by a
Mat\'ern $\nu < \infty$, or may use an isotropic model when the true dependence
is anisotropic. Some forms of covariance misspecification do not negatively affect the asymptotic optimality of the prediction variance or interval coverage. One example of this is the method of covariance tapering to apply GPR using large datasets \citep{furreretal06}. However, when covariance misspecification occurs, ML may converge to a pseudo-true parameter minimizing Kullback-Leibler divergence, not the true covariance \citep{bachoc18,beckers2018mean}. This is akin to misspecifying the probability distribution of a random variable; thereby hindering estimation of parameters from the true distribution \citep{rivera2025estimating}.  This issue leads to substantial differences in forecasting error \citep{swamy2007empirical}. Therefore, all model-based MSPE
estimators are biased, and prediction intervals constructed from
them will have coverage below the nominal level. 

In contrast, cross-validation approaches are based on model-free out-of-sample errors that produced the EBLUP predictor, so they inherit whatever the misspecification does to the predictor. As such, they are more robust to covariance misspecification than maximum likelihood methods \citep{bachoc2013cross,bachoc2014asymptotic}. However, cross-validation sacrifices efficiency and incurs higher computational cost, while increasing variability, in exchange for greater robustness to covariance-model errors \citep{efron04,htf09,Batesetal2023}.

Instead of the EBLUP, the predictor itself may be replaced. For example, we may swap out ML/REML-based parameter estimates for best predictive estimates that minimize an observed, model-free MSPE criterion, such as Observed Best Prediction (OBP) \citep{jiang2011best,jiang2015observed}. However, OBP doesn't always outperform EBLUP, and MSPE estimation is more costly under the OBP paradigm than under the EBLUP plug-in approach \citep{chen2024effects}. Furthermore, OBP's robustness has been demonstrated for relatively simple variance structures. In GPR, one could have the wrong parametric family, wrong smoothness, wrong range, unaccounted-for anisotropy, or unaccounted-for nonstationarity \citep{rivera2015low}. The covariance function is an infinite-dimensional object being approximated by a low-dimensional parametric model.

The present paper makes three contributions.
First, we develop quasi-EBLUP theory and asymptotic analysis tying misspecification to measure equivalence. Second, we propose a corrected estimator of the quasi-EBLUP based on a
calibration ratio. Third, we provide simulations comparing five
estimators $\tau^2_{\scriptscriptstyle\mathrm{Q}}(\boldsymbol{x}_o)$ across a gradient of covariance
smoothness misspecifications, sample sizes, and sampling designs. The simulation study demonstrates that the corrected
estimator substantially improves prediction interval coverage
under covariance smoothness misspecification although it can overcorrect
under correct specification.

The remainder of the paper is organized as follows.
Section~\ref{sec:mspe_misspec} derives the MSPE under
covariance misspecification and decomposes it into known-parameter and
excess components.
It then presents an asymptotic expansion
of the EBLUP MSPE separating known-parameter and
parameter-uncertainty contributions.
Also, it develops the corrected estimator
based on the calibration ratio.
Moreover, it formalizes the
misspecification diagnostic and its extrapolation to prediction
locations.
Section~\ref{sec:simulations} presents the simulation study.
Section~\ref{sec:discussion} concludes with discussion and
directions for future research.

%% file: predictionandmisspec.tex

\section{Prediction Variance and Covariance Misspecification}
\label{sec:mspe_misspec}

The working model assumes covariance function $\mathcal{K}_\theta$, with
analogous matrix $K_\theta$, $\Sigma_\theta = K_\theta + \sigma^2 I$, and
$\bm{k}_\theta(\bm{x}_o) = (\mathcal{K}_\theta(\bm{x}_o, \bm{x}_1), \dots,
\mathcal{K}_\theta(\bm{x}_o, \bm{x}_n))'$, giving weight vector
$\bm{\lambda}_\theta = \Sigma_\theta^{-1} \bm{k}_\theta(\bm{x}_o)$.

\begin{proposition}[MSPE under misspecification]\label{prop:mspe_misspec}
Under~$K_*$, the $\mathrm{MSPE}$ of the quasi-BLUP $\widehat{W}_{\theta}(\boldsymbol{x}_{o})=\bm{\lambda}_\theta'\boldsymbol{Y}$ is,
\begin{equation}\label{eq:mspe_misspec}
  \tau^{\,2}_{\scriptscriptstyle\mathrm{mis}}(\boldsymbol{x}_{o})
  = E_*\!\big[\big(W(\boldsymbol{x}_{o}) - \widehat{W}_{\theta}(\boldsymbol{x}_{o})\big)^2\big]
  = \mathcal{K}_*(\boldsymbol{x}_{o},\boldsymbol{x}_{o})
  + \bm{\lambda}_\theta'\,\Sigma_*\,\bm{\lambda}_\theta
  - 2\,\bm{k}_*(\boldsymbol{x}_{o})'\,\bm{\lambda}_\theta.
\end{equation}
\end{proposition}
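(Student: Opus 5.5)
The plan is a direct second-moment computation under the true model. Since $\bm{\lambda}_\theta$ depends only on the working model and the design, not on $\boldsymbol{Y}$, the quasi-BLUP $\widehat{W}_{\theta}(\boldsymbol{x}_o)=\bm{\lambda}_\theta'\boldsymbol{Y}$ is a fixed linear combination of the data. Under $K_*$ all quantities have mean zero, because the mean function is suppressed. Hence the MSPE equals $\Var_*\big(W(\boldsymbol{x}_o)-\bm{\lambda}_\theta'\boldsymbol{Y}\big)$. Expanding the square gives
\begin{equation*}
E_*[W(\boldsymbol{x}_o)^2] + E_*[(\bm{\lambda}_\theta'\boldsymbol{Y})^2] - 2E_*[W(\boldsymbol{x}_o)\bm{\lambda}_\theta'\boldsymbol{Y}],
\end{equation*}
and I will identify each of the three terms in turn.

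\textbf{First term.} By stationarity and the definition of $\mathcal{K}_*$, $E_*[W(\boldsymbol{x}_o)^2]=\mathcal{K}_*(\boldsymbol{x}_o,\boldsymbol{x}_o)$.

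\textbf{Second term.} I will write $E_*[(\bm{\lambda}_\theta'\boldsymbol{Y})^2]=\bm{\lambda}_\theta' E_*[\boldsymbol{Y}\boldsymbol{Y}']\bm{\lambda}_\theta$. The observation model (\ref{eq:obs_model}) makes $\boldsymbol{W}$ independent of the iid noise $\boldsymbol{\epsilon}$, so $E_*[\boldsymbol{Y}\boldsymbol{Y}']=K_*+\sigma^2 I=\Sigma_*$. This gives $\bm{\lambda}_\theta'\Sigma_*\bm{\lambda}_\theta$.

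\textbf{Cross term.} Here $E_*[W(\boldsymbol{x}_o)\boldsymbol{Y}]=E_*[W(\boldsymbol{x}_o)\boldsymbol{W}]+E_*[W(\boldsymbol{x}_o)\boldsymbol{\epsilon}]$. The first piece is $\bm{k}_*(\boldsymbol{x}_o)$. The second piece vanishes, since the latent process at $\boldsymbol{x}_o$ is independent of the measurement errors. This yields $-2\,\bm{k}_*(\boldsymbol{x}_o)'\bm{\lambda}_\theta$, and summing the three terms gives (\ref{eq:mspe_misspec}).

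The only step needing care is the cross term. The paper states independence of $W(\boldsymbol{x}_i)$ and $\epsilon_i$ only at the sampled sites, so I will make explicit the standing assumption that the whole process $W$ (including at $\boldsymbol{x}_o$) is independent of $\boldsymbol{\epsilon}$. I will also note that the noise variance $\sigma^2$ is taken to be the true one, so that $E_*[\boldsymbol{Y}\boldsymbol{Y}']$ is exactly $\Sigma_*$. Neither Gaussianity nor any property of the working model beyond $\bm{\lambda}_\theta$ being nonrandom is used. The formula therefore holds for any fixed weight vector, and in particular it reduces to (\ref{eq:mspeforkriging}) when $\bm{\lambda}_\theta=\Sigma_*^{-1}\bm{k}_*(\boldsymbol{x}_o)$, which serves as a sanity check.
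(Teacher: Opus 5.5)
Your proposal is correct and follows the paper's proof essentially step for step. Both expand the square and substitute $E_*[W(\boldsymbol{x}_o)^2]=\mathcal{K}_*(\boldsymbol{x}_o,\boldsymbol{x}_o)$, $E_*[\boldsymbol{Y}W(\boldsymbol{x}_o)]=\bm{k}_*(\boldsymbol{x}_o)$ (using independence of the noise from the latent process), and $E_*[\boldsymbol{Y}\boldsymbol{Y}']=\Sigma_*$. Your remark that $W$ must be independent of the noise at $\boldsymbol{x}_o$ as well as at the sampled sites makes explicit an assumption the paper uses only implicitly, and stationarity is not actually needed for the first term.
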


For the proof of Proposition~\ref{prop:mspe_misspec} and all other results see the Appendix. When $K_\theta = K_*$, we have
$\bm{\lambda}_\theta = \Sigma_*^{-1}\bm{k}_*(\boldsymbol{x}_{o}) \equiv \bm{\lambda}_*$
and \eqref{eq:mspe_misspec} reduces to
$\mathcal{K}_*(\boldsymbol{x}_{o},\boldsymbol{x}_{o}) - \bm{k}_*(\boldsymbol{x}_{o})'\Sigma_*^{-1}\bm{k}_*(\boldsymbol{x}_{o})$,
the standard GPR predictive variance. When misspecification occurs, the second term in (\ref{eq:mspe_misspec}) is the excess MSPE from using incorrect
weights.

Proposition~\ref{prop:mspe_misspec} isolates the cost of using the wrong weights when both $\theta$ and $K_*$ are fixed. It does not yet address the additional variability introduced by estimating $\theta$ from the data. Under correct specification, that additional variability is the entire story, and it is the object of the classical expansions of \citet{harvillejeske92} and the bootstrap corrections of \citet{wangwall03}.

\begin{proposition}[Decomposition of excess MSPE]\label{prop:decomp}
Let $\bm{\lambda}_*$ denote the BLUP weights.
Then~\eqref{eq:mspe_misspec} admits the decomposition
\begin{equation}\label{eq:excess_mspe}
  \tau^{\,2}_{\scriptscriptstyle\mathrm{mis}}(\boldsymbol{x}_{o})
  = \underbrace{\mathcal{K}_*(\boldsymbol{x}_{o},\boldsymbol{x}_{o}) - \bm{k}_*(\boldsymbol{x}_{o})'\Sigma_*^{-1}\bm{k}_*(\boldsymbol{x}_{o})
    \vphantom{\big|}}_{\displaystyle\tau^{\,2}_{\scriptscriptstyle\mathrm{B}}(\boldsymbol{x}_{o};\,\theta_*)
    \;}
  \;+\;
  \underbrace{(\bm{\lambda}_\theta - \bm{\lambda}_*)'\,\Sigma_*\,
    (\bm{\lambda}_\theta - \bm{\lambda}_*)
    \vphantom{\big|}}_{\displaystyle\text{\small excess MSPE from
    misspecification}},
\end{equation}
where the second term is a weighted squared distance between working model
and BLUP weights under the true covariance, and is always
non-negative.
\end{proposition}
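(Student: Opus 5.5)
The plan is to start from the closed form in Proposition~\ref{prop:mspe_misspec} and complete the square in $\bm{\lambda}_\theta$ with respect to the $\Sigma_*$-inner product, exploiting the defining relation of the BLUP weights. That relation is $\Sigma_*\bm{\lambda}_* = \bm{k}_*(\boldsymbol{x}_{o})$, which follows from $\bm{\lambda}_* = \Sigma_*^{-1}\bm{k}_*(\boldsymbol{x}_{o})$ and the invertibility of $\Sigma_* = K_* + \sigma^2 I$; the latter holds because $K_*$ is positive semidefinite and $\sigma^2>0$.

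The key steps, in order, are as follows.

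\textbf{Step 1.} Expand the quadratic form
\begin{equation*}
(\bm{\lambda}_\theta - \bm{\lambda}_*)'\Sigma_*(\bm{\lambda}_\theta - \bm{\lambda}_*)
= \bm{\lambda}_\theta'\Sigma_*\bm{\lambda}_\theta - 2\,\bm{\lambda}_*'\Sigma_*\bm{\lambda}_\theta + \bm{\lambda}_*'\Sigma_*\bm{\lambda}_*.
\end{equation*}
This uses the symmetry of $\Sigma_*$ to merge the two cross terms.

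\textbf{Step 2.} Substitute $\Sigma_*\bm{\lambda}_* = \bm{k}_*(\boldsymbol{x}_{o})$. The cross term becomes $-2\,\bm{k}_*(\boldsymbol{x}_{o})'\bm{\lambda}_\theta$, and the last term becomes $\bm{k}_*(\boldsymbol{x}_{o})'\Sigma_*^{-1}\bm{k}_*(\boldsymbol{x}_{o})$.

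\textbf{Step 3.} Add $\mathcal{K}_*(\boldsymbol{x}_{o},\boldsymbol{x}_{o}) - \bm{k}_*(\boldsymbol{x}_{o})'\Sigma_*^{-1}\bm{k}_*(\boldsymbol{x}_{o})$ to both sides. The $\bm{k}_*'\Sigma_*^{-1}\bm{k}_*$ terms cancel, and the right-hand side is exactly \eqref{eq:mspe_misspec}. This gives \eqref{eq:excess_mspe}.

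An alternative, essentially equivalent route works at the level of random variables. Write $W(\boldsymbol{x}_{o}) - \bm{\lambda}_\theta'\boldsymbol{Y} = \big(W(\boldsymbol{x}_{o}) - \bm{\lambda}_*'\boldsymbol{Y}\big) - (\bm{\lambda}_\theta-\bm{\lambda}_*)'\boldsymbol{Y}$. The cross covariance vanishes because
\begin{equation*}
\mathrm{Cov}_*\big(W(\boldsymbol{x}_{o}) - \bm{\lambda}_*'\boldsymbol{Y},\,\boldsymbol{Y}\big) = \bm{k}_*(\boldsymbol{x}_{o})' - \bm{\lambda}_*'\Sigma_* = 0.
\end{equation*}
This is the orthogonality principle of the BLUP, and it does not require Gaussianity, only second moments.

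For non-negativity, note that $\Sigma_*$ is positive definite: for any $v\neq 0$, $v'\Sigma_* v \ge \sigma^2\|v\|^2 > 0$. Hence the excess term is $\ge 0$, with equality if and only if $\bm{\lambda}_\theta = \bm{\lambda}_*$.

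There is no real obstacle in this argument. The only point needing care is that $\bm{\lambda}_\theta$ is treated as fixed (deterministic $\theta$), so that Proposition~\ref{prop:mspe_misspec} applies directly. Symmetry of $\Sigma_*$ must also be used explicitly to combine the two cross terms.
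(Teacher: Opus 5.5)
Your proposal is correct and follows the paper's proof essentially line for line: expand the quadratic form, use $\bm{\lambda}_*'\Sigma_* = \bm{k}_*(\boldsymbol{x}_{o})'$ to rewrite the cross and last terms, then add the BLUP MSPE to recover \eqref{eq:mspe_misspec}. Your positive-definiteness argument for non-negativity (via $\sigma^2>0$), with equality exactly when $\bm{\lambda}_\theta=\bm{\lambda}_*$, is a small addition the paper asserts but leaves unproved.
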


The next subsection makes the parameter-uncertainty contribution explicit, so that the misspecification excess of Proposition~\ref{prop:mspe_misspec} and the parameter-uncertainty contribution can be compared on the same footing.

\subsection{Asymptotic MSPE Expansion Under Covariance Estimation}
\label{sec:mspe_expansion}

Under correct specification, the MSPE of the EBLUP can be expanded
analytically to separate BLUP prediction variance from the effect of
parameter estimation uncertainty.


The expansion below is stated under fixed-domain (infill) asymptotics
with a positive nugget. Under this regime, the full covariance
parameter vector $\bm \theta$ is generally not consistently estimable; only
the microergodic parameter combination is identified at the standard
root-$n$ rate \citep{stein99,zhang04a,KaufmanShaby2013}. We
therefore reparameterize: $\bm \theta = (\eta, \xi) \in
\mathbb{R}^{q_1}\times\mathbb{R}^{q_2}$, where $\eta$ collects the
microergodic parameter combinations under the covariance model family
$\mathcal{K}_\theta$ on the bounded domain $D \subset \mathbb{R}^d$ with $d \le
3$, and $\xi$ collects the non-microergodic complement. For the
Mat\'ern family with fixed smoothness $\nu$, $\eta =
\sigma_w^2/\ell^{2\nu}$ and $\xi = \ell$
\citep{zhang04a,KaufmanShaby2013}; for the squared exponential with a
positive nugget, the corresponding microergodic combinations are
characterized by \citet{Loh2005, Anderes2010}, and
\citet{vanderVaartvanZanten2011}. Let $\widehat\eta_n$ denote the
maximum-likelihood (or REML) estimator of $\eta$ under the selected covariance
model.

\begin{theorem}[Asymptotic MSPE expansion under infill]\label{thm:mspe_expansion}
	Suppose $W(\boldsymbol{x})$ is a zero-mean Gaussian process on a bounded domain
	$D\subset\mathbb{R}^d$ with $d \le 3$, observed under model
	\eqref{eq:obs_model} with positive nugget $\sigma^2 > 0$. 
	Assume:
	\begin{enumerate}
		\item[\emph{(A1)}] 
		$\widehat\eta_n$ is consistent under $P_*$ with
		$\widehat\eta_n - \eta_* = O_p(n^{-1/2})$ and
		$\mathrm{Var}_*(\widehat\eta_n) = O(n^{-1})$.
        
		\item[\emph{(A2)}]
There exists a $C^2$ map
$\eta \mapsto \widetilde{\bm{\lambda}}(\eta;\boldsymbol{x}_o)$
such that, for any sequence
$\theta_n = (\eta_n,\xi_n)$ with $\eta_n \to \eta_*$ and
$\xi_n$ confined to a compact subset of the $\xi$-space,
\[
\bm{\lambda}_{\theta_n}(\boldsymbol{x}_o)
= \widetilde{\bm{\lambda}}(\eta_n;\boldsymbol{x}_o)
+ s_n(\boldsymbol{x}_o),
\qquad
\|s_n(\boldsymbol{x}_o)\| = o_p(n^{-1/2}),
\]
the non-microergodic remainder additionally satisfies the
$L^2$-domination
\[
\big\{\, n\,\big(s_n(\boldsymbol{x}_o)'\boldsymbol{Y}\big)^2
\;:\; n \ge 1 \,\big\}
\quad\text{is uniformly integrable under } P_*,
\]
and the Jacobian
$\widetilde J_0(\boldsymbol{x}_o)
:= \partial \widetilde{\bm{\lambda}}(\eta;\boldsymbol{x}_o)/\partial\eta'
\big|_{\eta_*}$
exists and is continuous at $\eta_*$.
		\item[\emph{(A3)}] $E_*\|\widehat\eta_n - \eta_*\|^4 = O(n^{-2})$.
		\item[\emph{(A4)}] $\widehat\eta_n$ and the BLUP prediction error
		$W(\boldsymbol{x}_{o}) - \widetilde{W}_{\theta}(\boldsymbol{x}_{o})$ are
		asymptotically independent under $P_*$.
        \item[\emph{(A5)}]
Let $\bm{\lambda}_*$ denote the $P_*$-optimal prediction weights. The quasi-true working
weights $\bm{\lambda}(\theta_*;\boldsymbol{x}_o)$ and the working-model BLUP
variance $\tau^{\,2}_{\scriptscriptstyle\mathrm{B}}(\boldsymbol{x}_o;\theta_*)$
satisfy
\[
\big(\bm{\lambda}(\theta_*) - \bm{\lambda}_*\big)'\,\Sigma_*\,
\big(\bm{\lambda}(\theta_*) - \bm{\lambda}_*\big) = o(n^{-1}),
\qquad
\big|\,\tau^{\,2}_{\scriptscriptstyle\mathrm{B}}(\boldsymbol{x}_o;\theta_*)
- E_*\big[(W(\boldsymbol{x}_o) - \bm{\lambda}_*'\boldsymbol{Y})^2\big]\,\big|
= o(n^{-1}).
\]
	\end{enumerate}
	Then, for any fixed 
	$\boldsymbol{x}_o \in D$ with
	$\boldsymbol{x}_o \notin \{\boldsymbol{x}_1,\ldots,\boldsymbol{x}_n\}$,
	\begin{equation}\label{eq:mspe_expansion}
		\tau^{\,2}_{\scriptscriptstyle\mathrm{Q}}(\boldsymbol{x}_{o})
		= \tau^{\,2}_{\scriptscriptstyle\mathrm{B}}(\boldsymbol{x}_{o};\,\theta_*)
		\;+\;
		\mathrm{tr}\!\Big(
		\widetilde J_0'\,\Sigma_{\theta_*}\,\widetilde J_0\;
		\mathrm{Var}_*(\widehat\eta_n)
		\Big)
		\;+\; o(n^{-1}),
	\end{equation}
	where
	$\tau^{\,2}_{\scriptscriptstyle\mathrm{B}}(\boldsymbol{x}_{o};\,\theta_*)
	= \mathcal{K}_{\theta_*}(\boldsymbol{x}_{o},\boldsymbol{x}_{o})
	- \bm{k}_{\theta_*}(\boldsymbol{x}_{o})'
	\Sigma_{\theta_*}^{-1}\bm{k}_{\theta_*}(\boldsymbol{x}_{o})$,
	and
	$\widetilde J_0 = \widetilde J_0(\boldsymbol{x}_o)$
	is the $n\times q_1$ Jacobian of the reduced prediction-weight map
	with respect to the microergodic parameter only, evaluated at $\eta_*$.
\end{theorem}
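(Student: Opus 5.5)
The plan is to split the quasi-EBLUP prediction error into a fixed-weight part and a parameter-estimation part, and then control each piece with (A1)--(A5). Write $\widehat{\bm{\lambda}} = \bm{\lambda}_{\widehat\theta}$ and use (A2) to decompose
\[
\widehat{\bm{\lambda}} = \widetilde{\bm{\lambda}}(\eta_*) + \widetilde J_0(\widehat\eta_n-\eta_*) + r_n + s_n,
\]
where $r_n$ is the second-order Taylor remainder of the $C^2$ map $\eta\mapsto\widetilde{\bm{\lambda}}(\eta)$, so that $\|r_n\| = O(\|\widehat\eta_n-\eta_*\|^2)$. Then
\[
W(\boldsymbol{x}_o)-\widehat W_{\widehat\theta}(\boldsymbol{x}_o) = e_0 - (\widehat\eta_n-\eta_*)'\widetilde J_0'\boldsymbol{Y} - (r_n+s_n)'\boldsymbol{Y},
\]
with $e_0 = W(\boldsymbol{x}_o)-\widetilde{\bm{\lambda}}(\eta_*)'\boldsymbol{Y}$. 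Since (A2) applied to the constant sequence gives $\widetilde{\bm{\lambda}}(\eta_*) = \bm{\lambda}(\theta_*)$ up to $o(n^{-1/2})$, I will identify $e_0$ with the quasi-true fixed-weight error, absorbing the difference into $s_n$.

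Next I expand the square and take $E_*$ term by term.

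\textbf{Leading term.} $E_*e_0^2$ is, by Propositions~\ref{prop:mspe_misspec} and \ref{prop:decomp}, equal to $E_*[(W-\bm{\lambda}_*'\boldsymbol{Y})^2]$ plus the excess term $(\bm{\lambda}(\theta_*)-\bm{\lambda}_*)'\Sigma_*(\bm{\lambda}(\theta_*)-\bm{\lambda}_*)$. By (A5) this equals $\tau^2_{\mathrm B}(\boldsymbol{x}_o;\theta_*)+o(n^{-1})$.

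\textbf{Quadratic term.} Condition on the design, treat $\widehat\eta_n$ as asymptotically independent of the error process, and note that (A5) makes $\Sigma_*$ and $\Sigma_{\theta_*}$ interchangeable at the order of weights relevant here. This gives
\[
E_*[((\widehat\eta_n-\eta_*)'\widetilde J_0'\boldsymbol{Y})^2] = \mathrm{tr}\big(\widetilde J_0'\Sigma_{\theta_*}\widetilde J_0\,\mathrm{Var}_*(\widehat\eta_n)\big)+o(n^{-1}).
\]
The bias of $\widehat\eta_n$ contributes only $o(n^{-1})$ by (A1).

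\textbf{Cross term with $e_0$.} I argue it vanishes to order $o(n^{-1})$ in the Zimmerman--Cressie way. Under Gaussianity, $e_0$ (close to the optimal error by (A5)) is independent of $\boldsymbol{Y}$ up to the $o(n^{-1/2})$ weight discrepancy. Combined with (A4), the product with $(\widehat\eta_n-\eta_*)'\widetilde J_0'\boldsymbol{Y}$ then has mean $o(n^{-1})$.

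\textbf{Remainder terms.} For $r_n'\boldsymbol{Y}$, Cauchy--Schwarz together with (A3) gives
\[
E(r_n'\boldsymbol{Y})^2 \le C\,(E\|\widehat\eta_n-\eta_*\|^4)^{1/2}\,(E\|\boldsymbol{Y}\|^4)^{1/2}\text{-type bounds},
\]
after normalizing by the bounded operator norm of $\widetilde J$-scale quantities. This makes the term $O(n^{-2})$ times design factors, which I will need to control, and hence $o(n^{-1})$. For $s_n$, I note that $n(s_n'\boldsymbol{Y})^2 \to 0$ in probability, because $\|s_n\|=o_p(n^{-1/2})$ and $\boldsymbol{Y}$ is tight on the relevant scale. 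Uniform integrability from (A2) then upgrades this to $nE_*(s_n'\boldsymbol{Y})^2\to0$. The remaining cross terms among the pieces are bounded by Cauchy--Schwarz using these same rates.

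The main obstacle is the remainder control. Specifically, $\|\boldsymbol{Y}\|$ grows like $\sqrt n$ under infill, so naive Cauchy--Schwarz bounds on $r_n'\boldsymbol{Y}$ lose a factor of $n$. To avoid this, I would first try bounding the quadratic forms $r_n'\Sigma_* r_n$ directly rather than Euclidean norms, that is, working in the $\Sigma_*$-geometry. I would lean on the $C^2$ regularity of $\widetilde{\bm{\lambda}}$ in that norm, interpreting (A2) accordingly. The second delicate point is making the asymptotic independence in (A4) strong enough to factor expectations; for this I would use (A3) to obtain uniform integrability of $n(\widehat\eta_n-\eta_*)^{\otimes2}$ and then pass to the limit.
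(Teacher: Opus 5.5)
\textbf{Main gap: the quadratic term.} Your treatment of the quadratic term is where the argument breaks. You effectively write
\[
E_*\bigl[(\widehat\eta_n-\eta_*)'\widetilde J_0'\boldsymbol{Y}\boldsymbol{Y}'\widetilde J_0(\widehat\eta_n-\eta_*)\bigr]
\approx \mathrm{tr}\bigl(\widetilde J_0'\,E_*[\boldsymbol{Y}\boldsymbol{Y}']\,\widetilde J_0\,E_*[(\widehat\eta_n-\eta_*)(\widehat\eta_n-\eta_*)']\bigr).
\]
This treats $\widehat\eta_n$ as (asymptotically) independent of $\boldsymbol{Y}$. But $\widehat\eta_n$ is computed from $\boldsymbol{Y}$. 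To first order it solves a centered quadratic score equation, $\widehat\eta_n-\eta_*\approx n^{-1/2}\mathcal{I}_*^{-1}(Z'M_nZ-\operatorname{tr}M_n)$ with $\boldsymbol{Y}=\Sigma_{\theta_*}^{1/2}Z$. That is a quadratic form in the very Gaussian vector that appears in $\boldsymbol{Y}\boldsymbol{Y}'$.

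Split $\boldsymbol{Y}\boldsymbol{Y}'=\Sigma_{\theta_*}+(\boldsymbol{Y}\boldsymbol{Y}'-\Sigma_{\theta_*})$. Your factorization keeps only the first piece (the paper's Term~(I)) and silently drops the data-fluctuation term $E_*[(\widehat\eta_n-\eta_*)'\widetilde J_0'(\boldsymbol{Y}\boldsymbol{Y}'-\Sigma_{\theta_*})\widetilde J_0(\widehat\eta_n-\eta_*)]$, which is Term~(II). That term is a sixth-order Gaussian moment. The paper's Isserlis computation, via $E[\widetilde A\widetilde B\widetilde C]=8\operatorname{tr}(ABC)$ for centered quadratic forms, gives its leading part as $n^{-1}\cdot 8\sum_{a,b,c,d}(\mathcal{I}_*^{-1})_{ac}(\mathcal{I}_*^{-1})_{bd}(G'M^{(c)}M^{(d)}G)_{ab}$ with $G=\Sigma_{\theta_*}^{1/2}\widetilde J_0$. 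This is a priori of the same order $n^{-1}$ as the correction you are trying to isolate, so it cannot be assumed away.

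Assumption (A4) does not license the factorization. It concerns $\widehat\eta_n$ and the prediction error $W(\boldsymbol{x}_o)-\widetilde W_\theta(\boldsymbol{x}_o)$, not $\widehat\eta_n$ and $\widetilde J_0'\boldsymbol{Y}$. In fact the paper's proof never uses (A4), because the optimal error $A_*=W(\boldsymbol{x}_o)-\bm{\lambda}_*'\boldsymbol{Y}$ is exactly independent of all of $\boldsymbol{Y}$.

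The missing ingredient is the paper's Step~3. It uses a score expansion $\widehat\eta_n-\eta_*=n^{-1/2}S_n(Z)+r_n$, with the $r_n$ contributions controlled through (A3). It then evaluates Term~(II) by a Wick/Isserlis expansion and argues that it adds nothing beyond the trace term. Alternatively, you would have to \emph{prove}, not assume, asymptotic independence of $\sqrt n(\widehat\eta_n-\eta_*)$ and $\widetilde J_0'\boldsymbol{Y}$. That amounts to showing contractions such as $G'M^{(c)}G$ and $G'M^{(c)}M^{(d)}G$ are negligible.

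\textbf{Smaller steps that also need repair.}

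\emph{Bias of $\widehat\eta_n$.} (A1) does not make the squared bias of $\widehat\eta_n$ negligible. (A1) and (A3) only give $\|E_*\widehat\eta_n-\eta_*\|=O(n^{-1/2})$, so the squared bias is $O(n^{-1})$, the same order as the variance. The paper imports the REML rate $E_*[\widehat\eta_n]-\eta_*=O(n^{-1})$ from \citet{KaufmanShaby2013} for this.

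\emph{Covariance interchange.} (A5) does not make $\Sigma_*$ and $\Sigma_{\theta_*}$ interchangeable. It controls $d_n'\Sigma_*d_n$ (with $d_n=\bm{\lambda}(\theta_*)-\bm{\lambda}_*$) and one scalar variance, not $\widetilde J_0'(\Sigma_*-\Sigma_{\theta_*})\widetilde J_0$. Since $E_*[\boldsymbol{Y}\boldsymbol{Y}']=\Sigma_*$, your factorization naturally produces $\Sigma_*$; the paper identifies the two by convention in its Term~(II) calculation.

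\emph{Cross terms with $e_0$.} For the cross terms between $e_0$ and the remainders, bare Cauchy--Schwarz gives only $O(1)\cdot o(n^{-1/2})$. Instead, do as the paper does:
\begin{enumerate}
  \item Group everything into the $\boldsymbol{Y}$-measurable quantity $B=\widehat W_{\widehat\theta}(\boldsymbol{x}_o)-\widetilde W_\theta(\boldsymbol{x}_o)$.
  \item Write $e_0=A_*-d_n'\boldsymbol{Y}$, so that $E_*[A_*B]=0$ exactly.
  \item Only $E_*[(d_n'\boldsymbol{Y})B]$ then needs Cauchy--Schwarz, and it is $o(n^{-1})$.
\end{enumerate}

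\emph{Size of $\|\boldsymbol{Y}\|$.} Your worry that $\|\boldsymbol{Y}\|\asymp\sqrt n$ spoils the $r_n$ and $s_n$ bounds is legitimate; the paper's Step~2 effectively treats $\boldsymbol{Y}$ as $O_p(1)$ there. However, it is secondary to the missing Term~(II) analysis.
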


For the Mat\'ern family with fixed $\nu$ and a positive
nugget in $d \le 3$, root-$n$ asymptotic normality of $\widehat\eta_n$
is established by \citet{KaufmanShaby2013} and related results under
generalized Wendland covariances appear in \citet{bevilacqua2019}.
For the squared exponential with a positive nugget, the analogous
results follow from \citet{Loh2005}, \citet{Anderes2010}, and
\citet{vanderVaartvanZanten2011}. Assumption (A2) asserts that the
prediction-weight functional depends on non-microergodic parameters
only through higher-order terms, which is consistent with 
\citet{stein1990bounds,stein99}. 
Assumption (A3) is a standard moment
bound on $\widehat\eta_n$ alone, 
(A4) is a consequence of the orthogonality of
prediction errors to the data used for prediction under the Gaussian
covariance model, while (A5) restricts the present expansion to the
equivalent-measures regime 
in which
the working predictor at the quasi-true parameter is, to leading
order, the $P_*$-optimal predictor \citep{stein1990bounds,stein99}. 
The uniform-integrability clause in (A2) is the minimal condition
under which the in-probability negligibility of the non-microergodic
remainder transfers to negligibility in mean squared prediction error.
It is implied by, but weaker than, a uniform bound
$\sup_n E_*\big[\,n\,(s_n(\boldsymbol{x}_o)'\boldsymbol{Y})^2\,
\big]^{1+\delta} < \infty$ for some $\delta>0$, which holds under
equivalence-of-measures regularity because the prediction
functional depends on the non-microergodic direction only through
higher-order terms \citep{stein1990bounds,stein99}.


Three consequences of Theorem~\ref{thm:mspe_expansion} are worth
noting before proceeding. First, within the equivalent-measures
branch to which assumption~(A5) restricts the expansion, the
parameter-uncertainty term vanishes asymptotically at rate $n^{-1}$
in the microergodic direction. The complementary orthogonal branch is
not described by~\eqref{eq:mspe_expansion} at all: there the
misspecification excess of Proposition~\ref{prop:mspe_misspec} is an $O(1)$
quantity governed by the gap between working and true weights, not by
the precision of $\widehat\eta_n$, and Theorem~\ref{thm:infill-weight-discrepancy}
shows it converges to a strictly positive constant. 
Any estimator
that targets only the parameter-uncertainty contribution is
consequently asymptotically blind to misspecification in the
orthogonal regime, regardless of how well it performs 
under correct specification. Second,
Theorem~\ref{thm:mspe_expansion} holds at any fixed prediction input
$\boldsymbol{x}_o$; the dependence on $\boldsymbol{x}_o$ enters through
both $\tau^{\,2}_{\scriptscriptstyle\mathrm{B}}(\boldsymbol{x}_{o};\theta_*)$
and $\widetilde J_0(\boldsymbol{x}_o)$. 
Third, the restriction to the
microergodic parameter resolves a regime tension that would otherwise
arise under infill asymptotics: the full covariance vector $\theta$ is
generally not consistently estimable in this regime, but the
prediction weights depend on $\theta$ at leading order only through
$\eta$, so the parameter-uncertainty contribution to
$\tau^{\,2}_{\scriptscriptstyle\mathrm{Q}}(\boldsymbol{x}_{o})$ is
well-defined. Its order-$n^{-1}$ rate is not immediate: it follows
from the reduction of the data-fluctuation cross term (Term~(II) in
the proof) to the same microergodic sandwich contraction as the
leading variance term, established by the Wick expansion of
Appendix~\ref{app:term2}.

The next result establishes that dependence of $\tau^{\,2}_{\scriptscriptstyle\mathrm{Q}}(\boldsymbol{x}_{o})$ on $\boldsymbol{x}_o$ is smooth, which is what allows a correction calibrated at $\boldsymbol{x}_i$ to be modeled to $\boldsymbol{x}_o$.

\begin{proposition}[Smoothness of the EBLUP MSPE]
\label{prop:smooth_mspe}
Let $\mathcal{K}_\theta(\cdot)$ and $\mathcal{K}_*(\cdot)$ be isotropic covariance functions
that are infinitely differentiable for $h>0$.  Then $\tau^{\,2}_{\scriptscriptstyle\mathrm{Q}}(\boldsymbol{x}_{o})$ in~\eqref{eq:mspe_expansion},
is $C^\infty$ in~$\boldsymbol{x}_o$ for all
$\boldsymbol{x}_o \notin
\{\boldsymbol{x}_1, \dots, \boldsymbol{x}_n\}$,
where the smoothness holds for each of the three terms
individually.  The same is true under correct
specification ($\mathcal{K}_\theta(\cdot) =\mathcal{K}_*(\cdot)$), where the first term
reduces to the true BLUP MSPE and the second captures
parameter estimation uncertainty alone.
\end{proposition}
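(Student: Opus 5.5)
The plan is to treat the right-hand side of \eqref{eq:mspe_expansion} term by term and show that every quantity in it that depends on $\boldsymbol{x}_o$ is a finite composition of $C^\infty$ operations applied to $C^\infty$ functions of $\boldsymbol{x}_o$ on the open set $U = D^\circ\setminus\{\boldsymbol{x}_1,\dots,\boldsymbol{x}_n\}$. The three terms are the BLUP variance, the trace term, and the remainder. In the misspecified reading, I would pair this with the explicit quadratic form of Proposition~\ref{prop:mspe_misspec}.

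The basic building block is this. For $\boldsymbol{x}_o\in U$, each distance $h_i(\boldsymbol{x}_o)=\|\boldsymbol{x}_o-\boldsymbol{x}_i\|$ is strictly positive. The map $\boldsymbol{x}\mapsto\|\boldsymbol{x}\|$ is $C^\infty$ away from the origin, so $h_i$ is $C^\infty$ on $U$. By isotropy and the hypothesis that the covariance functions are $C^\infty$ for $h>0$, the chain rule then makes each entry $\mathcal{K}_\theta(h_i(\boldsymbol{x}_o))$ and $\mathcal{K}_*(h_i(\boldsymbol{x}_o))$ a $C^\infty$ function on $U$. Hence $\bm{k}_\theta(\boldsymbol{x}_o)$ and $\bm{k}_*(\boldsymbol{x}_o)$ are $C^\infty$ vector-valued maps.

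The diagonal values $\mathcal{K}_\theta(\boldsymbol{x}_o,\boldsymbol{x}_o)=\mathcal{K}_\theta(0)$ and $\mathcal{K}_*(0)$ are constants. The matrices $\Sigma_{\theta_*}$ and $\Sigma_*$ do not depend on $\boldsymbol{x}_o$, and they are invertible because $\sigma^2>0$. Therefore $\bm{\lambda}_\theta(\boldsymbol{x}_o)=\Sigma_\theta^{-1}\bm{k}_\theta(\boldsymbol{x}_o)$ is $C^\infty$, being a fixed linear map applied to a smooth vector. It follows that $\tau^2_{\mathrm B}(\boldsymbol{x}_o;\theta_*)=\mathcal{K}_{\theta_*}(0)-\bm{k}_{\theta_*}'\Sigma_{\theta_*}^{-1}\bm{k}_{\theta_*}$ is smooth. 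In the misspecified case, the excess form $(\bm{\lambda}_\theta-\bm{\lambda}_*)'\Sigma_*(\bm{\lambda}_\theta-\bm{\lambda}_*)$ of Proposition~\ref{prop:decomp} is also a polynomial in smooth functions, hence smooth.

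For the second term, I would write $\widetilde J_0(\boldsymbol{x}_o)=\partial_\eta[\Sigma_\theta^{-1}\bm{k}_\theta(\boldsymbol{x}_o)]$ at $\theta_*$. Expanding the derivative gives
\[
\widetilde J_0(\boldsymbol{x}_o) = -\Sigma_{\theta}^{-1}(\partial_\eta\Sigma_\theta)\Sigma_\theta^{-1}\bm{k}_\theta(\boldsymbol{x}_o)+\Sigma_\theta^{-1}\partial_\eta\bm{k}_\theta(\boldsymbol{x}_o),
\]
evaluated at $\theta_*$ and taken modulo the reparameterization in (A2). The first piece is smooth by the previous paragraph. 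The second needs $\partial_\eta\mathcal{K}_\theta(h)$ to be $C^\infty$ in $h>0$. For the Matérn and squared exponential families this holds because the covariance is jointly smooth in $(\theta,h)$ for $h>0$, so mixed partial derivatives exist and commute.

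This is the step I expect to be the main obstacle. The statement's hypothesis only gives smoothness in $h$. I would therefore add, or read into (A2), the mild requirement that $\partial_\eta\mathcal{K}_\theta(h)$ exists and is $C^\infty$ in $h>0$ near $\eta_*$. I would verify it explicitly for the Matérn family, where the covariance is $\eta$ times a $\xi$-dependent smooth function of $h$ after reparameterization.

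$\mathrm{Var}_*(\widehat\eta_n)$ does not depend on $\boldsymbol{x}_o$, since $\widehat\eta_n$ is computed from $\boldsymbol{Y}$ only. The trace term is therefore a fixed bilinear contraction of the smooth matrix function $\widetilde J_0(\boldsymbol{x}_o)$, and so it is $C^\infty$.

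For the remainder term, the claim should be read at the level of the finite-$n$ representation. The quantity $\tau^2_{\mathrm Q}(\boldsymbol{x}_o)$ is itself a quadratic expression in $W(\boldsymbol{x}_o)$ and $\boldsymbol{Y}$, with weights $\bm{\lambda}_{\widehat\theta}(\boldsymbol{x}_o)$ that are smooth in $\boldsymbol{x}_o$ for each fixed $\widehat\theta$. Differentiation under $E_*$ is then justified by dominated convergence. The dominating bounds would come from continuity of derivatives of $\mathcal{K}$ on compact subsets of $U$, which keep $h$ bounded away from zero, combined with Gaussian moments of $\boldsymbol{Y}$ and boundedness of $\|\Sigma_{\widehat\theta}^{-1}\|\le\sigma^{-2}$.

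Since $\tau^2_{\mathrm Q}$ and the first two terms are smooth, the remainder, being their difference, is smooth as well. This gives smoothness of all three terms individually.

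The correct-specification case follows by setting $\mathcal{K}_\theta=\mathcal{K}_*$. The excess form then vanishes identically, which is trivially smooth, and the same argument applies verbatim.
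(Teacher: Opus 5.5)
Your handling of the first two terms matches the paper's argument. Distances away from the design points give a smooth $\bm{k}(\boldsymbol{x}_o)$. The matrices are fixed and invertible, the Jacobian columns are $\partial_\eta\Sigma_\theta^{-1}\,\bm{k}+\Sigma_\theta^{-1}\partial_\eta\bm{k}$, and $\mathrm{Var}_*(\widehat\eta_n)$ is constant in $\boldsymbol{x}_o$. Your caveat that $\partial_\eta\mathcal{K}_\theta(h)$ must itself be $C^\infty$ in $h>0$ is correct; the paper simply asserts it ``by assumption'' although the hypothesis only concerns smoothness in $h$.

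For the remainder your route differs. The paper argues that the $o(n^{-1})$ term ``involves'' higher $\eta$-derivatives of $\widetilde{\bm{\lambda}}$, each smooth in $\boldsymbol{x}_o$. You instead prove $\tau^{\,2}_{\scriptscriptstyle\mathrm{Q}}$ smooth directly and obtain the remainder by subtraction. Since the remainder is only defined as that difference, your structure is the sounder one, but the step that carries it does not work as written.

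You propose to differentiate $E_*[(W(\boldsymbol{x}_o)-\bm{\lambda}_{\widehat\theta}(\boldsymbol{x}_o)'\boldsymbol{Y})^2]$ under $E_*$ by dominated convergence. But the integrand depends on $\boldsymbol{x}_o$ through $W(\boldsymbol{x}_o)$ as well as through the weights. The true kernels the hypothesis admits include Mat\'ern $\nu=1/2,3/2,5/2$, all $C^\infty$ for $h>0$, and for these the field is at most twice mean-square differentiable. So there are no pathwise derivatives of all orders to dominate.

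The missing idea is to integrate $W(\boldsymbol{x}_o)$ out first. $\widehat\theta$ is a function of $\boldsymbol{Y}$ alone, and $W(\boldsymbol{x}_o)-\bm{\lambda}_*(\boldsymbol{x}_o)'\boldsymbol{Y}$ is independent of $\boldsymbol{Y}$ (Step~1 of the proof of Theorem~\ref{thm:mspe_expansion}). Hence
\[
\tau^{\,2}_{\scriptscriptstyle\mathrm{Q}}(\boldsymbol{x}_o)
=\mathcal{K}_*(0)-\bm{k}_*(\boldsymbol{x}_o)'\Sigma_*^{-1}\bm{k}_*(\boldsymbol{x}_o)
+E_*\Big[\big((\bm{\lambda}_{\widehat\theta}(\boldsymbol{x}_o)-\bm{\lambda}_*(\boldsymbol{x}_o))'\boldsymbol{Y}\big)^2\Big],
\]
and this integrand is $C^\infty$ in $\boldsymbol{x}_o$ for each fixed $\boldsymbol{Y}$.

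Even then, the bounds you cite do not dominate. Continuity of $\partial_h^m\mathcal{K}$ on compacts of $U$ and $\|\Sigma_{\widehat\theta}^{-1}\|\le\sigma^{-2}$ do not control $\partial^\alpha_{\boldsymbol{x}_o}\bm{k}_{\widehat\theta}(\boldsymbol{x}_o)$ uniformly over the random $\widehat\theta$; it grows with $\widehat\sigma^2_w$, for instance. You need an extra condition, for example either of the following:
\begin{itemize}
\item[] $\widehat\theta$ is confined to a compact set on which $\partial_h^m\mathcal{K}_\theta(h)$ is jointly continuous in $(\theta,h)$ for $h$ bounded away from zero;
\item[] $E_*\big[\sup_{\boldsymbol{x}_o\in C}\|\partial^\alpha_{\boldsymbol{x}_o}\bm{\lambda}_{\widehat\theta}(\boldsymbol{x}_o)\|^2\|\boldsymbol{Y}\|^2\big]<\infty$ for every compact $C\subset U$.
\end{itemize}
Note also that $\|\Sigma_{\widehat\theta}^{-1}\|\le\sigma^{-2}$ itself presumes the working nugget is held at $\sigma^2$. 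With the conditioning step and a domination condition of this kind, your argument goes through.
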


Define the population calibration ratio
\begin{equation}
r_W(\boldsymbol{x}_o)
= \tau^{\,2}_{\scriptscriptstyle\mathrm{Q}}(\boldsymbol{x}_o) /
  \tau^{\,2}_{\scriptscriptstyle K_{\theta}}(\boldsymbol{x}_o)\label{eq:rwparam}
\end{equation}
where the numerator is the true quasi-EBLUP MSPE under $K_*$ and the denominator is the model-based MSPE under $K_\theta$. Under the assumptions of Proposition \ref{prop:smooth_mspe}, both numerator and denominator are $C^\infty$ in~$\boldsymbol{x}_o$ for
$\boldsymbol{x}_o \neq \boldsymbol{x}_i$, and the denominator is bounded away from zero on compact subsets thereof. Hence $r_W(\boldsymbol{x}_o)$ is $C^\infty$ on the same set. This is the formal basis for kernel-smoothing the empirical
analogue of $r_W(\boldsymbol{x}_i)$ at observed inputs and
extrapolating to $\boldsymbol{x}_o$: the population object being
estimated is itself smooth. 

\begin{theorem}[Asymptotic order of the weight discrepancy under infill asymptotics]
\label{thm:infill-weight-discrepancy}
Let $W$ be a zero-mean Gaussian process on a bounded domain $D \subset \mathbb{R}^d$, $d \leq 3$, with true isotropic covariance $K_*$. 

Consider an infill design sequence $\{x_1, \dots, x_n\}_{n \geq 1}$ that becomes dense in $D$ as $n \to \infty$, and let $\widehat{\theta}_n$ denote the maximum likelihood estimator of $\theta$ under the working model $K_\theta$ from observations $\boldsymbol{Y}$ generated under $K_*$. Assume:

\begin{enumerate}
    \item[(B1)] $\mathcal{K}_*(\cdot)$ and $\mathcal{K}_\theta(\cdot)$ are isotropic and continuous on $D \times D$, twice continuously differentiable away from the diagonal, and bounded.
    \item[(B2)] The design sequence is space-filling: $\sup_{x \in D} \min_{1 \leq i \leq n} \|x - x_i\| \to 0$ as $n \to \infty$.
    \item[(B3)] $x_o \in D$ with $x_o \neq x_i$ for all $i$.
    \item[(B4)] The working and true models share a common, strictly positive nugget variance $\sigma_*^2 = \sigma^2 > 0$, treated as known.
    \item[(B5)] The working family $\{\mathcal{K}_\theta : \theta \in \Theta\}$ is sufficiently regular that, in case~(i) below, the microergodic parameter combination of $K_\theta$ is identifiable and the MLE $\hat\eta_n$ of that combination is consistent under $P_*$. In case~(ii) below, $\widehat{\theta}_n$ takes values in a compact set $\overline{\Theta}$ and has at least one subsequential limit point in $\overline{\Theta}$ under $P_*$.
\end{enumerate}

Let $P_*^n$ and $P_\theta^n$ denote the laws of $Y$ on the design $\{x_1,\dots,x_n\}$ induced by $K_*$ and $K_\theta$ respectively, and let $P_*^\infty$ and $P_\theta^\infty$ denote the laws on $D$ of the underlying Gaussian process $W$. Then:

\begin{enumerate}
    \item[(i)] If $P_*^\infty$ and $P_\theta^\infty$ are equivalent Gaussian measures on $D$ for some $\theta = \theta^\dagger \in \Theta$ \citep{ibragimov1978gaussian,stein99}, and if the MLE of the corresponding microergodic parameter is consistent (Assumption B5), then
    \begin{equation*}
        (\bm{\lambda}_{\widehat{\theta}_n} - \bm{\lambda}_*)^\prime\, \Sigma_* \,(\bm{\lambda}_{\widehat{\theta}_n} - \bm{\lambda}_*) \;\xrightarrow{P_*}\; 0
        \qquad \text{as } n \to \infty.
    \end{equation*}
    \item[(ii)] If $P_*^\infty$ and $P_\theta^\infty$ are mutually singular for every $\theta \in \overline{\Theta}$, 
    then there exists a constant $Q_\infty(x_o) > 0$ such that, for every $P_*$-subsequential limit point $\theta^\dagger$ of $\widehat{\theta}_n$,
    \begin{equation*}
        (\bm{\lambda}_{\widehat{\theta}_n} - \bm{\lambda}_*)^\prime\, \Sigma_* \,(\bm{\lambda}_{\widehat{\theta}_n} - \bm{\lambda}_*) \;\xrightarrow{P_*}\; Q_\infty(x_o) > 0
        \qquad \text{as } n \to \infty,
    \end{equation*}
    where the convergence holds along the same subsequence along which $\widehat{\theta}_n \to \theta^\dagger$.
\end{enumerate}
\end{theorem}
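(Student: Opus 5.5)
The plan is to rewrite the weight discrepancy as a difference of prediction errors, so that both cases reduce to statements about predicting $W(x_o)$ from an increasingly dense design. By Proposition~\ref{prop:decomp}, applied at the fixed value $\theta=\widehat\theta_n$ and conditionally on it, we have
\[
Q_n(\theta) := (\bm{\lambda}_\theta-\bm{\lambda}_*)'\Sigma_*(\bm{\lambda}_\theta-\bm{\lambda}_*) = E_*\big[(\bm{\lambda}_\theta'\boldsymbol{Y}-\bm{\lambda}_*'\boldsymbol{Y})^2\big] = \tau^2_{\mathrm{mis}}(x_o;\theta)-\tau^2_{\mathrm{B}}(x_o;\theta_*).
\]
So $Q_n(\theta)$ is the excess $P_*$-MSPE of the $\theta$-BLUP over the $*$-BLUP. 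I will first prove the statements for a deterministic $\theta$, or a deterministic sequence $\theta_n$. I will then transfer them to $\widehat\theta_n$, using that $\theta\mapsto\bm\lambda_\theta$ is continuous by (B1) and the fact that $\Sigma_\theta\succeq\sigma^2 I$ from (B4). A subsequence argument turns convergence along every deterministic sequence into convergence in $P_*$-probability.

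\emph{Case (i).} Take $\theta^\dagger$ with $P^\infty_{\theta^\dagger}\equiv P^\infty_*$. The noisy laws on the design are then also equivalent, because the nuggets are shared. I will invoke the asymptotic efficiency theorem of \citet{stein1990bounds} (see also \citet[Ch.~4]{stein99}), specialised to the white-noise-plus-signal setting. It says that under equivalence the ratio $\tau^2_{\mathrm{mis}}(x_o;\theta^\dagger)/\tau^2_{\mathrm{B}}(x_o;\theta_*)\to1$ uniformly along dense designs. Since $\tau^2_{\mathrm B}$ is bounded, this gives $Q_n(\theta^\dagger)\to0$. To handle $\widehat\theta_n$, I use (B5): $\widehat\eta_n\to\eta^\dagger$. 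Since every $\theta$ sharing the microergodic value $\eta^\dagger$ gives a measure equivalent to $P^\infty_*$, the same argument applies locally uniformly in the non-microergodic coordinate $\xi$. Together with continuity in $\eta$, this yields $Q_n(\widehat\theta_n)\to_{P_*}0$. The uniformity in $\xi$ over compacts is the delicate point here. I will obtain it from Stein's bound, which controls the efficiency loss by the Hilbert--Schmidt distance between the two covariance operators, and that distance is continuous in $\theta$.

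\emph{Case (ii), the main obstacle.} Fix a subsequential limit $\theta^\dagger$ and pass to the corresponding subsequence. With the nugget present, $\tau^2_{\mathrm B}(x_o;\theta_*)$ need not tend to $0$ for a finite design. Still, I will show that both the $*$-BLUP and the $\theta^\dagger$-BLUP converge in $L^2(P_*)$ to limits. These limits are projections of $W(x_o)$ onto closed subspaces built from the ``local averages'' $\bar Y$ near $x_o$. Since $\bar Y\to W$ in the fill-in limit, the $*$-BLUP tends to $W(x_o)$ itself. The $\theta^\dagger$-BLUP tends to a kernel smoother of $W$ whose weights come from the $\theta^\dagger$-RKHS geometry. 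Its limiting error $W(x_o)-\mathcal{L}_{\theta^\dagger}W$ must be non-degenerate under $P_*$, because singularity means the two models disagree on the high-frequency (microergodic) behaviour that governs interpolation. This gives
\[
Q_\infty(x_o) = \lim_n Q_n(\theta^\dagger) = E_*\big[(\mathcal{L}_{\theta^\dagger}W-W(x_o))^2\big],
\]
and the remaining task is to prove it is strictly positive. I would try a contradiction argument. If $Q_\infty=0$, then the $\theta^\dagger$-predictors are asymptotically $P_*$-efficient at $x_o$. Applying this to all points of a dense set, via the equicontinuity of the weights given by (B1), would then force agreement of the spectral tails. By the Feldman--H\'ajek / Ibragimov--Rozanov criterion \citep{ibragimov1978gaussian}, that agreement would make $P^\infty_*$ and $P^\infty_{\theta^\dagger}$ equivalent, which is a contradiction. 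Making this converse-to-Stein step rigorous for a single point $x_o$ is where I expect the real difficulty. A fallback is to establish positivity through the spectral asymptotics of the Fourier representation of the weights. Finally, continuity of $Q_n$ in $\theta$ transfers the limit from $\theta^\dagger$ to $\widehat\theta_n$ along the same subsequence.
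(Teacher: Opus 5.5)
Your case (i) follows the paper's route: the identity from Proposition~\ref{prop:decomp}, Stein's asymptotic efficiency under equivalent measures, then microergodic consistency from (B5). You are also more careful than the paper about the uniformity in $\xi$ needed to pass from $\theta^\dagger$ to $\widehat\theta_n$. The genuine gap is in case (ii), and it goes deeper than the technical ``converse-to-Stein'' step you flag.

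\textbf{First failing step.} You claim the $\theta^\dagger$-BLUP converges to a smoother $\mathcal{L}_{\theta^\dagger}W$ whose error $W(x_o)-\mathcal{L}_{\theta^\dagger}W$ is non-degenerate under $P_*$. Singularity does not give this, and it is false for the SE--Mat\'ern pairs the theorem targets. With a fixed shared nugget $\sigma^2>0$ and a filling design, the working predictor is kernel ridge regression with effective regularisation of order $\sigma^2/n\to0$.
\begin{itemize}
\item Its noise part $\sigma^2\|\bm{\lambda}_{\theta^\dagger}\|^2$ is the same under $P_*$ and $P_{\theta^\dagger}$. It is bounded by the working-model MSPE, which tends to $0$: under either covariance the BLUP beats a local average of the $Y(x_i)$ over a ball of radius $r$ about $x_o$. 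That average has MSPE at most $\max_{\|x_i-x_o\|\le r}2\{\mathcal{K}(0)-\mathcal{K}(\|x_o-x_i\|)\}+\sigma^2/N_r$, and $N_r\to\infty$ under (B2).
\item Its signal error under $P_*$ is, heuristically in spectral terms, $\int|1-\widehat h_n(\omega)|^2f_*(\omega)\,d\omega$ plus aliasing terms. Here the equivalent-kernel response $\widehat h_n=f_{\theta^\dagger}/(f_{\theta^\dagger}+c\,\sigma^2/n)$ tends to $1$ at every frequency, and $f_*$ is integrable. For SE working against Mat\'ern truth this goes to $0$, at a logarithmic rate.
\end{itemize}
So the working BLUP is itself $L^2(P_*)$-consistent for $W(x_o)$, just like the $*$-BLUP, and $\lim_n Q_n(\theta^\dagger)=0$.

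\textbf{Second failing step.} Your contradiction argument conflates an absolute statement with a relative one. You correctly note that $\mathrm{MSPE}_*(\widetilde W_*(x_o))\to0$. Given that, the hypothesis $Q_\infty=0$ only says the working predictor is consistent. It does not say $\mathrm{MSPE}_*(\widehat W_{\theta^\dagger}(x_o))/\mathrm{MSPE}_*(\widetilde W_*(x_o))\to1$. Singularity, via Feldman--H\'ajek, bears on such ratios, not on differences. Even for ratios, Stein's results run from equivalence (or tail-ratio conditions on spectral densities) to efficiency, not conversely.

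\textbf{The paper's route does not rescue this.} The paper gets positivity differently. Its Step~2 asserts $\mathrm{MSPE}_*(\widetilde W_*(x_o))\to\tau^2_\infty(x_o)>0$ ``because of the nugget,'' and multiplies this by an inefficiency gap $\delta(\theta^\dagger)$. Your own observation that the $*$-BLUP converges to $W(x_o)$, confirmed by the local-average bound above, shows that ingredient is false. So switching to the paper's route would not help. Neither argument can extract $Q_\infty(x_o)>0$ from singularity alone, and for positive spectral densities such as SE versus Mat\'ern the limit is actually $0$ whether the measures are equivalent or singular. The dichotomy governs ratios of MSPEs, not their difference. What can plausibly be proved in case (ii) is a relative statement, such as $\liminf_n \mathrm{MSPE}_*(\widehat W_{\theta^\dagger}(x_o))/\mathrm{MSPE}_*(\widetilde W_*(x_o))>1$ under explicit spectral conditions, not a strictly positive constant limit of the weight discrepancy.
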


Some technical points are worth noting. 
First, when the working Kullback--Leibler projection $\theta^\dagger$ of $P_*^\infty$ onto the working family is unique, $\widehat{\theta}_n \xrightarrow{P_*} \theta^\dagger$ and the limit is unambiguous; uniqueness can be verified case by case from the structure of the covariance model family \citep{bachoc18}.

Second, by \citet[Theorem~2]{zhang04a} Mat\'ern measures with different smoothness $\nu$ are mutually singular under infill in $d \leq 3$ with a positive nugget. Theorem~\ref{thm:infill-weight-discrepancy} is central to our simulation design in Section~\ref{sec:simulations}, predicting a persistent $O(1)$ excess MSPE that no parameter-uncertainty correction can offset.

Third, (B5) is strictly weaker than (A1) in case (ii) of Theorem~\ref{thm:infill-weight-discrepancy} (it asks only for compactness and a limit point, not a rate) and essentially coincides with (A1) in case (i) (both ask for microergodic consistency).

Proposition~\ref{prop:mspe_misspec} identifies the misspecification excess through the squared weight discrepancy under the true covariance. Theorem~\ref{thm:mspe_expansion} shows that parameter-uncertainty corrections contribute only at order $O(n^{-1})$ under correct specification. Proposition ~\ref{prop:smooth_mspe} justifies spatial smoothing of the calibration ratio through smoothness of the MSPE field, while Theorem~\ref{thm:infill-weight-discrepancy} characterizes the asymptotic order of the misspecification excess under infill asymptotics. Under equivalent Gaussian measures the excess vanishes asymptotically, whereas under orthogonal measures it remains $O(1)$. Consequently, parameter-uncertainty corrections alone cannot asymptotically recover nominal coverage under covariance smoothness misspecification.

\subsection{Exploiting the Model-Dependence of the Plug-In MSPE}
\label{sec:misspec_diagnostic}

All estimators below target $\tau^{\,2}_{\scriptscriptstyle\mathrm{Q}}(\boldsymbol{x}_{o})$ or $\tau^{\,2}_{\scriptscriptstyle\mathrm{Q}}(\boldsymbol{x}_i)$.
At an observed input $\boldsymbol{x}_i$, we can either obtain a model-based $\mathrm{MSPE}$ estimate, or a model-free (empirical) estimate. The latter is not available at unobserved inputs.

Under~$K_\theta$, we may choose any model-based representation of $\mathrm{MSPE}$ for estimation of $\tau^{\,2}_{\scriptscriptstyle\mathrm{Q}}(\boldsymbol{x}_{i})$. 

Equation \eqref{eq:mspekrig} estimates
$\tau^{\,2}_{\scriptscriptstyle\mathrm{B}}(\boldsymbol{x}_{o})$, not
$\tau^{\,2}_{\scriptscriptstyle\mathrm{Q}}(\boldsymbol{x}_{o})$, and therefore underestimates the true quasi-EBLUP
prediction variance.  It is available at any
input~$\boldsymbol{x}_{o}$, observed or unobserved.

Alternatively we can use a model-free estimator, $\widehat{\tau}^{\,2}_{\mathrm{emp}}(\boldsymbol{x}_i)$ based on cross-validation (see below).

Covariance function misspecification can bias parameter estimates, including the nugget. Instead of using the misspecified model's $\widehat{\sigma}^2$
we use a robust model-free estimator, $\widehat{\sigma}^2_{rob}$, based on the empirical semivariogram at short lags \citep{cressie93}. It computes $\widehat{\gamma}(h_1)$ and $\widehat{\gamma}(h_2)$
at the two shortest lags and linearly extrapolates to $h=0$.

  $\widehat{\tau}^{\,2}_{\mathrm{emp}}(\boldsymbol{x}_i)$ does not require the
covariance model to be correctly specified, but has high
variance since it is based on a single squared residual.  It
is available only at observed inputs.

The MSPE evaluated under the working
model $K_{\theta}$ rather than the true covariance $K_*$ is:
\begin{equation}
\tau^{\,2}_{\scriptscriptstyle\mathrm{K}_\theta}(\boldsymbol{x}_o) = E_{\theta}
[(W(\boldsymbol{x}_o) -\widehat{W}_{\widehat{\theta}}(\boldsymbol{x}_o))^2]\label{eq:tau_mod}
\end{equation}
Under correct specification, $\tau^{\,2}_{\scriptscriptstyle\mathrm{K}_\theta}(\boldsymbol{x}_o)=\tau^{\,2}_{\mathrm{Q}}(\boldsymbol{x}_o)$.

Let $\widehat\tau^{\,2}_{\scriptscriptstyle\mathrm{K}_\theta}(\boldsymbol{x}_{o})$ be the plug in estimator of $\tau^{\,2}_{\scriptscriptstyle\mathrm{K}_\theta}(\boldsymbol{x}_{o})$. Then we propose the following estimator of $\tau^{\,2}_{\scriptscriptstyle\mathrm{Q}}(\boldsymbol{x}_{o})$:
\begin{equation}\label{eq:tau_cct_hat_mult}
   \widehat\tau^{\,2}_{\scriptscriptstyle\mathrm{Q}}(\boldsymbol{x}_{o})
   = \widehat{r}_W(\boldsymbol{x}_{o})\cdot
     \widehat\tau^{\,2}_{\scriptscriptstyle\mathrm{K}_\theta}(\boldsymbol{x}_{o}).
 \end{equation}
where $\widehat{r}_W(\boldsymbol{x}_{o})$ is a smoothed calibration ratio.

\subsection{The corrected MSPE estimator through calibration ratios}
\label{ssec:corrected_estimator}

As stated in Section \ref{sec:intro}, cross-validation approaches are more robust to covariance misspecification than model-based estimators of true MSPE. Therefore, disagreements of this type of estimators can be informative. The divergence between a model-based estimator of $\tau^{\,2}_{\scriptscriptstyle\mathrm{Q}}(\boldsymbol{x}_i)$ and empirical-based estimator can be quantified as a difference, say $\widehat\Delta(\boldsymbol{x}_i)
= \widehat\tau^{\,2}_{\scriptscriptstyle\mathrm{emp}}(\boldsymbol{x}_i)
- \widehat\tau^{\,2}_{\scriptscriptstyle\mathrm{K}_{\theta}}(\boldsymbol{x}_i)$, or an estimator of ratio $r_W(\boldsymbol{x}_i)$ as in \eqref{eq:rwparam}. The
ratio has more stable variance than the difference when the MSPE
varies substantially across input space: a large~$\widehat\Delta(\boldsymbol{x}_i)$
at an input with large MSPE is less informative than the
same~$\widehat\Delta(\boldsymbol{x}_i)$ at an input with small MSPE.
Thus, we work with a calibration ratio at each observed input. 

 Our estimator of $r_W(\boldsymbol{x}_i)$ compares a model-free estimator in the numerator (valid regardless of the covariance model) with a
 model-dependent estimator in the denominator (efficient under
 correct specification but inconsistent under misspecification).
 This is analogous to the Hausman specification test in
 econometrics \citep{hausman1978specification,greene2018econometric}, where the discrepancy between an efficient and a
 consistent estimator diagnoses model misspecification.  Here,
 rather than testing a null hypothesis, we use the discrepancy
 constructively to correct the model-based prediction variance.

Under correct specification, $E_*[\widehat{r}_W(\boldsymbol{x}_i)]$ should be close to 1.   
Under covariance smoothness misspecification, $E_*[\widehat{r}_W(\boldsymbol{x}_i)]$ is substantially larger than 1, and, crucially, the calibration ratio at observed inputs measures the multiplicative gap between empirical predictive performance and the covariance model's variance estimate. Smoothing and propagating this ratio provides a misspecification-aware multiplicative correction to the plug-in at the prediction location.  This is stronger than either prediction variance estimator alone:
cross-validation by itself indicates poor prediction but not why; the
model-based estimator alone gives a structural decomposition that is wrong
under covariance smoothness misspecification.  Together, their disagreement localizes
covariance smoothness misspecification in the input space.

$\widehat\tau^{\,2}_{\scriptscriptstyle Q}(\boldsymbol{x}_o)$ combines 
$\widehat\tau^{\,2}_{\scriptscriptstyle K_{\theta}}(\boldsymbol{x}_o)$ with a
data-driven multiplicative correction. The correction is constructed pointwise at the
observed inputs and kernel-smoothed to the prediction input. Pursuing
a smoothed pointwise correction, rather than fitting a separate interpolator
or computing a single pooled scalar, allows spatially varying severity of
misspecification to be reflected at $\bm x_o$. The smoothed pointwise
approach also avoids the computational cost of full bagging, while producing
a multiplicative correction that is interpretable and well behaved.

The smoothing operator can in principle be a Gaussian process with a flexible
kernel, a local polynomial regression, or any other nonparametric smoother.
Because the function being smoothed is expected to be slowly varying and
of low-complexity, errors in the smoother propagate only into a multiplicative
correction (not into the primary prediction), and no uncertainty
quantification for the smoother itself is required at $\bm x_o$, the precise
specification of the smoother is far less consequential than for the primary
GP model of $W(\bm x)$. We adopt a Gaussian kernel smoother in log-space,
and which is computationally trivial. 
Our focus is $d = 2
$, yet  for $d > 3$ another smoother may be more suitable.

\subsection{$K$-fold cross-validation at observed inputs.} 

We use $K$-fold cross-validation with $M$ independent random partitions to
generate held-out predictions and held-out plug-in variances at each observed
input. For partition $m\in\{1,\dots,M\}$ and fold $k\in\{1,\dots,K\}$, let
$\mathrm{tr}_{m,k}\subset\{1,\dots,n\}$ denote the training indices and
$\mathrm{te}_{m,k}=\{1,\dots,n\}\setminus\mathrm{tr}_{m,k}$ the test indices.
On each training fold we re-estimate the covariance parameters by
maximum likelihood,
\[
\widehat{\bm\theta}^{\,(m)}_{-k}\;=\;\arg\max_{\bm\theta}\;
\ell\!\left(\bm\theta;\bm Y_{\mathrm{tr}_{m,k}},\bm X_{\mathrm{tr}_{m,k}}\right),
\]
warm-started at the full-data estimate $\widehat{\bm\theta}$. For each
held-out input $\bm x_i$ with $i\in\mathrm{te}_{m,k}$, we compute both
the held-out EBLUP
\[
\widehat{W}^{\,(m)}_{-i}(\bm x_i)
\;=\;
\bm k_{\widehat{\bm\theta}^{\,(m)}_{-k}}(\bm x_i)^{\!\top}\,
\Sigma_{\widehat{\bm\theta}^{\,(m)}_{-k}}^{-1}\,\bm Y_{\mathrm{tr}_{m,k}}
\]
and the corresponding plug-in prediction variance 
\[
\widehat\tau^{\,2,(m)}_{\scriptscriptstyle K_{\theta},-i}(\bm x_i)
\;=\;
K_{\widehat{\bm\theta}^{\,(m)}_{-k}}(\bm x_i,\bm x_i)
\;-\;
\bm k_{\widehat{\bm\theta}^{\,(m)}_{-k}}(\bm x_i)^{\!\top}\,
\Sigma_{\widehat{\bm\theta}^{\,(m)}_{-k}}^{-1}\,
\bm k_{\widehat{\bm\theta}^{\,(m)}_{-k}}(\bm x_i),
\]
both evaluated under the fold-specific estimate
$\widehat{\bm\theta}^{\,(m)}_{-k}$. Let
\[
\mathcal{M}_i \;=\; \bigl\{(m,k):\,i\in\mathrm{te}_{m,k}\bigr\}
\]
be the set of (partition, fold) pairs in which $\bm x_i$ was held out, and
define the fold-averaged squared prediction error and fold-averaged plug-in
variance:
\begin{align}
\bar e^{\,2}(\bm x_i)
&=\frac{1}{|\mathcal{M}_i|}
   \sum_{(m,k)\in\mathcal{M}_i}
   \bigl(Y(\bm x_i)-\widehat{W}^{\,(m)}_{-i}(\bm x_i)\bigr)^{2},
\label{eq:ebar2-fold}\\[2pt]
\bar v(\bm x_i)
&=\frac{1}{|\mathcal{M}_i|}
   \sum_{(m,k)\in\mathcal{M}_i}\widehat\tau^{\,2,(m)}_{\scriptscriptstyle K_{\theta},-i}(\bm x_i).
\label{eq:vbar-fold}
\end{align}
Both \eqref{eq:ebar2-fold} and \eqref{eq:vbar-fold} are evaluated under the
same fold-specific predictor: the numerator measures the predictor's
realized performance, while the denominator estimates the variance of that same predictor. This internal consistency is essential
for the calibration ratio to be interpretable as a ratio of comparable
quantities. In particular, the denominator must reflect prediction at $\bm x_i$
from a training set that does not contain $Y(\bm x_i)$. 

If
$\bar v(\bm x_i)$ were replaced by the full-data plug-in
$\widehat\tau^{\,2}_{\scriptscriptstyle K_{\theta}}(\bm x_i)$ at an observed input, the resulting ratio would not be interpretable as a calibration ratio: $\bar e^{\,2}(\bm x_i)$ measures the realized prediction error of a held-out predictor trained on $n(K-1)/K$ observations, while $\widehat\tau^{\,2}_{\scriptscriptstyle K_{\theta}}(\bm x_i)$ measures the fitted BLUP variance of a predictor trained on all $n$ observations under the working model. These quantities target different objects regardless of design density and signal-to-noise ratio. The mismatch can be especially pronounced when $\widehat\tau^{\,2}_{\scriptscriptstyle K_{\theta}}(\bm x_i)$ is small relative to $\bar e^{\,2}(\bm x_i)$ — for instance, in dense, high-SNR settings where the full-data plug-in approaches the filtering variance 
- but it is present in all regimes and would artificially inflate the calibration ratio.


The squared prediction error $\bar e^{\,2}(\bm x_i)$ targets prediction of
$Y(\bm x_i)$, which includes measurement-error variance.
To obtain an estimate of the EBLUP MSPE on the $W(\bm x)$ scale we subtract
a model-free estimate of the measurement-error variance:
\begin{equation}
\widehat\tau^{\,2}_{\mathrm{emp}}(\bm x_i)
\;=\;
\bar e^{\,2}(\bm x_i)\;-\;\widehat\sigma^{\,2}_{\mathrm{rob}},
\label{eq:tau2emp-rob}
\end{equation}
where $\widehat\sigma^{\,2}_{\mathrm{rob}}$ is the robust nugget estimator
described in Section~\ref{sec:misspec_diagnostic}. 
We use $\widehat\sigma^{\,2}_{\mathrm{rob}}$ rather than the full-data REML nugget
$\widehat\sigma^{\,2}$ because, under covariance smoothness misspecification, the REML
nugget absorbs short-range structure that the model covariance cannot
capture and consequently overstates the measurement-error
variance, deflating $\widehat\tau^{\,2}_{\mathrm{emp}}(\bm x_i)$. Diagnostic
runs in which $\widehat\sigma^{\,2}$ replaced $\widehat\sigma^{\,2}_{\mathrm{rob}}$
in \eqref{eq:tau2emp-rob} produced systematic overcorrection of the plug-in
across all misspecification scenarios, motivating the robust choice. The
expectation of $\widehat\tau^{\,2}_{\mathrm{emp}}(\bm x_i)$ targets the true MSPE regardless of the
assumed covariance model. This estimator
is available only at observed inputs.


\begin{proposition}[Variance decomposition of the corrected estimator]
\label{prop:var_cct}
Let $\widehat r_W(\boldsymbol{x}_o)$ and $\widehat\tau^{\,2}_{\scriptscriptstyle\mathrm{K}_\theta}(\boldsymbol{x}_{o})$ denote the smoothed calibration ratio and the full-data plug-in variance entering the corrected estimator $\widehat\tau^{\,2}_{\mathrm{Q}}(\boldsymbol{x}_o)$. Suppose that, as $n \to \infty$ and the number of cross-validation partitions $M = M_n \to \infty$:
\begin{enumerate}
    \item[(C1)] $\widehat\tau^{\,2}_{\scriptscriptstyle\mathrm{K}_\theta}(\boldsymbol{x}_{o})$ has finite fourth moment under $P_*$, with $E_*[\widehat\tau^{\,2}_{\scriptscriptstyle\mathrm{K}_\theta}(\boldsymbol{x}_{o})] = \tau^{\,2}_{\scriptscriptstyle\mathrm{K}_\theta}(\boldsymbol{x}_{o}) + O(n^{-1})$ and $\mathrm{Var}_*(\widehat\tau^{\,2}_{\scriptscriptstyle\mathrm{K}_\theta}(\boldsymbol{x}_{o})) = O(n^{-1})$.
    \item[(C2)] $\widehat r_W(\boldsymbol{x}_o)$ has finite fourth moment under $P_*$, with $E_*[\widehat r_W(\boldsymbol{x}_o)] = 1 + O(n^{-1})$ under correct specification.
    \item[(C3)] 
    The standardized covariance
    \[
      \rho_n \;:=\; \frac{\mathrm{Cov}_*\!\bigl(\widehat r_W(\boldsymbol{x}_o),\,\widehat\tau^{\,2}_{\scriptscriptstyle\mathrm{K}_\theta}(\boldsymbol{x}_{o})\bigr)}
      {\sqrt{\mathrm{Var}_*(\widehat r_W(\boldsymbol{x}_o))\,\mathrm{Var}_*(\widehat\tau^{\,2}_{\scriptscriptstyle\mathrm{K}_\theta}(\boldsymbol{x}_{o}))}} \to 0
    \]
    as $n, M_n \to \infty$.
    \item[(C4)] Defining 
\begin{equation}\label{eq:variance-leading}
  L_n \;:=\; \mathrm{Var}_*(\widehat r_W(\boldsymbol{x}_o))\,\tau^4_{\scriptscriptstyle\mathrm{K}_\theta}(\boldsymbol{x}_{o})
  \;+\; E_*[\widehat r_W^2(\boldsymbol{x}_o)]\,\mathrm{Var}_*(\widehat\tau^{\,2}_{\scriptscriptstyle\mathrm{K}_\theta}(\boldsymbol{x}_{o})),
\end{equation}
$\mathrm{Cov}_*\!\bigl(\widehat r_W^2(\boldsymbol{x}_o),\,(\widehat\tau^{\,2}_{\scriptscriptstyle\mathrm{K}_\theta}(\boldsymbol{x}_{o}))^2\bigr) = o(L_n)$ as $n, M_n \to \infty$.
\end{enumerate}
Then, 
the variance of the corrected estimator satisfies
\begin{equation}\label{eq:variance-decomp}
  \mathrm{Var}_*\!\bigl(\widehat\tau^{\,2}_{\mathrm{Q}}(\boldsymbol{x}_o)\bigr)
  \;=\; L_n \;+\; R_n,
  \qquad
  R_n \;=\; o(L_n),
\end{equation}
where the remainder satisfies the explicit bound
\[
  |R_n| \;\leq\; 2\,|\rho_n|\,\sqrt{\mathrm{Var}_*(\widehat r_W(\boldsymbol{x}_o))\,\mathrm{Var}_*(\widehat\tau^{\,2}_{\scriptscriptstyle\mathrm{K}_\theta}(\boldsymbol{x}_{o}))}\,\tau^{\,2}_{\scriptscriptstyle\mathrm{K}_\theta}(\boldsymbol{x}_{o})\,(1 + o(1)).
\]
\end{proposition}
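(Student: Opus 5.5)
The plan is to expand $\mathrm{Var}_*(AB)$ exactly for the product of two square-integrable random variables and then compare each extra term with $L_n$. Write $A=\widehat r_W(\boldsymbol{x}_o)$ and $B=\widehat\tau^{\,2}_{\scriptscriptstyle\mathrm{K}_\theta}(\boldsymbol{x}_o)$, so $\widehat\tau^{\,2}_{\mathrm{Q}}(\boldsymbol{x}_o)=AB$. By (C1)--(C2), $A$ and $B$ have finite fourth moments, so every moment below is finite. I would use the identities $E_*[A^2B^2]=E_*[A^2]E_*[B^2]+\mathrm{Cov}_*(A^2,B^2)$ and $E_*[AB]=E_*[A]E_*[B]+\mathrm{Cov}_*(A,B)$. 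Together with $E_*[A^2]E_*[B^2]-(E_*A)^2(E_*B)^2=E_*[A^2]\,\mathrm{Var}_*(B)+(E_*B)^2\,\mathrm{Var}_*(A)$, they give the exact identity
\[
\mathrm{Var}_*(AB)=E_*[A^2]\mathrm{Var}_*(B)+(E_*B)^2\mathrm{Var}_*(A)-2E_*[A]E_*[B]\,\mathrm{Cov}_*(A,B)-\mathrm{Cov}_*(A,B)^2+\mathrm{Cov}_*(A^2,B^2).
\]
Everything else is bookkeeping against $L_n$.

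First I would replace $(E_*B)^2$ by $\tau^4_{\scriptscriptstyle\mathrm{K}_\theta}(\boldsymbol{x}_o)$. Since $E_*B=\tau^2_{\scriptscriptstyle\mathrm{K}_\theta}+O(n^{-1})$ by (C1), and $\tau^2_{\scriptscriptstyle\mathrm{K}_\theta}(\boldsymbol{x}_o)$ is bounded away from zero for $\boldsymbol{x}_o\neq\boldsymbol{x}_i$ (as noted after \eqref{eq:rwparam}), we have $(E_*B)^2\mathrm{Var}_*(A)=\tau^4_{\scriptscriptstyle\mathrm{K}_\theta}\mathrm{Var}_*(A)(1+O(n^{-1}))$. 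This reproduces the first summand of $L_n$ up to a relative $O(n^{-1})$ error. The term $E_*[A^2]\mathrm{Var}_*(B)$ is the second summand of $L_n$ exactly.

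Next come the three remainder terms.
\begin{enumerate}
\item[(a)] The term $\mathrm{Cov}_*(A^2,B^2)$ is $o(L_n)$ directly by (C4).
\item[(b)] For the cross term, write $\mathrm{Cov}_*(A,B)=\rho_n\sqrt{\mathrm{Var}_*(A)\mathrm{Var}_*(B)}$. Its modulus is then $2|\rho_n|\,|E_*A|\,\tau^2_{\scriptscriptstyle\mathrm{K}_\theta}\sqrt{\mathrm{Var}_*(A)\mathrm{Var}_*(B)}\,(1+o(1))$, which is the displayed bound on $R_n$ once $|E_*A|$ is absorbed into the $(1+o(1))$ factor.
\item[(c)] For $\mathrm{Cov}_*(A,B)^2=\rho_n^2\mathrm{Var}_*(A)\mathrm{Var}_*(B)$, use $\mathrm{Var}_*(A)\le E_*[A^2]$. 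This gives $\mathrm{Cov}_*(A,B)^2\le\rho_n^2\,E_*[A^2]\mathrm{Var}_*(B)\le\rho_n^2L_n=o(L_n)$ by (C3).
\end{enumerate}
To see that the cross term in (b) is $o(L_n)$, apply AM--GM to the two summands of $L_n$:
\[
L_n\ge 2\tau^2_{\scriptscriptstyle\mathrm{K}_\theta}\sqrt{\mathrm{Var}_*(A)\,E_*[A^2]\,\mathrm{Var}_*(B)}\ge 2\,|E_*A|\,\tau^2_{\scriptscriptstyle\mathrm{K}_\theta}\sqrt{\mathrm{Var}_*(A)\mathrm{Var}_*(B)}.
\]
Hence the cross term is at most $|\rho_n|L_n(1+o(1))=o(L_n)$ by (C3). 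The $O(n^{-1})$ correction from the previous step is also $o(L_n)$, because it is a vanishing fraction of the first summand of $L_n$.

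Collecting these pieces gives \eqref{eq:variance-decomp} with $R_n=o(L_n)$. The step I expect to be the main obstacle is matching the explicit constant in the stated bound on $|R_n|$. The natural bound carries the factor $|E_*A|$, and this factor is $1+O(n^{-1})$ only under correct specification, by (C2). Under misspecification $E_*A$ is generally larger than one. So I would either state the bound with $|E_*\widehat r_W(\boldsymbol{x}_o)|$ in place of $1$, or restrict the displayed constant to the correctly specified case. Either way the $o(L_n)$ conclusion is unaffected, since the AM--GM argument uses $|E_*A|$ and so does not depend on $E_*A\approx1$.
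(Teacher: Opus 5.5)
Your argument follows the paper's proof closely. It uses the exact identity for $\mathrm{Var}_*(AB)$, replaces $(E_*B)^2$ by $\tau^4$, compares the cross term with $L_n$ by AM--GM, and then invokes (C3) and (C4). It correctly yields $R_n=o(L_n)$.

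Two of your steps are cleaner than the appendix. The bound $\mathrm{Cov}_*(A,B)^2\le\rho_n^2E_*[A^2]\mathrm{Var}_*(B)\le\rho_n^2L_n$ via $\mathrm{Var}_*(A)\le E_*[A^2]$ is more direct than the paper's appeal to ``the same inequality''. You also rightly keep $\mathrm{Cov}_*(A^2,B^2)$ as a separate $o(L_n)$ term instead of folding it into $|R_n|/L_n\le|\rho_n|(1+o(1))$. Your remark about the constant is correct as well: the paper obtains the factor $1$ only by importing $E_*A=1+O(n^{-1})$ from (C2), i.e.\ under correct specification.

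The gap is in the explicit bound on $|R_n|$, and it goes beyond the factor $|E_*A|$. Write $\tau^2$ for $\tau^{\,2}_{\scriptscriptstyle\mathrm{K}_\theta}(\boldsymbol{x}_o)$. In (b) you show that the cross term $-2E_*[A]E_*[B]\,\mathrm{Cov}_*(A,B)$ has the displayed size, and then treat this as the bound on $R_n$. But $R_n$ also contains three other pieces: $\mathrm{Cov}_*(A^2,B^2)$, $-\mathrm{Cov}_*(A,B)^2$, and the slack $\bigl((E_*B)^2-\tau^4\bigr)\mathrm{Var}_*(A)$. The first and the last are not controlled by $|\rho_n|$.

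For a concrete failure, take $A=1+\delta\xi$ and $B=\tau^2+n^{-1/2}\xi s$, with $\xi\sim N(0,1)$, $s$ an independent random sign, and $\delta>0$ fixed. Then (C1)--(C4) hold and $\rho_n=0$, yet $R_n=\mathrm{Cov}_*(A^2,B^2)=2\delta^2n^{-1}>0$. The displayed inequality would force $R_n=0$. So it is valid only for the cross-term part of $R_n$. A correct explicit bound must add $|\mathrm{Cov}_*(A^2,B^2)|+\rho_n^2\mathrm{Var}_*(A)\mathrm{Var}_*(B)+|(E_*B)^2-\tau^4|\,\mathrm{Var}_*(A)$ to the right-hand side, and carry $|E_*A|$ as you say. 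The paper's final sentence of the proof has the same defect.

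A smaller point, also shared with the paper, concerns the lower bound on $\tau^2$. The remark after \eqref{eq:rwparam} bounds $\tau^2$ away from zero over $\boldsymbol{x}_o$ at fixed $n$, not uniformly in $n$, and under infill it typically shrinks. Your relative $O(n^{-1})$ error therefore needs an explicit assumption such as $n\,\tau^2\to\infty$. That is exactly what makes $E_*B/\tau^2\to1$ and the slack $o(L_n)$.
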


The first equality in (C1) is a statement about the estimator's $P_*$-bias
relative to the working-model MSPE that it is constructed to estimate.
It does not assert agreement between the working-model and
true-process expectations of the prediction error.

Assumption (C3) is justified by the structure of the corrected estimator: $\widehat\tau^{\,2}_{\scriptscriptstyle\mathrm{K}_\theta}(\boldsymbol{x}_{o})$ is a function of the full-sample MLE $\widehat\theta_n$, while $\widehat r_W(\boldsymbol{x}_o)$ is the kernel-smoothed average of pointwise calibration ratios computed from fold-specific re-estimates $\widehat\theta^{(m)}_{-k}$ over $M$ independent random partitions. Both quantities ultimately depend on the data $\boldsymbol{Y}$, but the dependence enters through different functionals: $\widehat\tau^{\,2}_{\scriptscriptstyle\mathrm{K}_\theta}(\boldsymbol{x}_{o})$ through the global fit, $\widehat r_W$ primarily through the empirical squared cross-validation errors. As the number of partitions $M_n$ grows, the calibration ratio stabilizes around a quantity that depends on the local prediction geometry rather than on the particular realization of $\widehat\theta_n$ \citep[Ch.~7]{htf09}, attenuating the residual correlation between $\widehat r_W(\boldsymbol{x}_o)$ and $\widehat\tau^{\,2}_{\scriptscriptstyle\mathrm{K}_\theta}(\boldsymbol{x}_{o})$. The simulation study uses $M = 20$ throughout, and the decomposition in~\eqref{eq:variance-decomp} is to be interpreted as a guide to the variance budget of $\widehat\tau^{\,2}_{\mathrm{Q}}$ at moderate $M$ rather than as a tight bound.

The first term of~\eqref{eq:variance-decomp}, $\mathrm{Var}_*(\widehat r_W(\boldsymbol{x}_o))\,\tau^4_{\scriptscriptstyle\mathrm{K}_\theta}(\boldsymbol{x}_{o})$, dominates in practice. Under standard Gaussian error assumptions, cross-validation residuals can be viewed as approximately normally distributed prediction errors; consequently, their squared standardized values are approximately $\chi^2_1$-distributed, which induces a pronounced right-skewed (heavy-tailed) variability \citep{htf09, davisonhinkley97} that the calibration ratio inherits. By contrast, $\mathrm{Var}_*(\widehat\tau^{\,2}_{\scriptscriptstyle\mathrm{K}_\theta}(\boldsymbol{x}_{o}))$ is small because the plug-in variance is a smooth function of the estimated covariance parameters. This motivates winsorizing the observed ratios $\widehat r_W(\boldsymbol{x}_i)$ to the interval $[r_L, r_U]$ before smoothing: winsorization directly reduces $\mathrm{Var}(\widehat r_W(\boldsymbol{x}_i))$ at the cost of bounding the maximum correction, a bias--variance trade-off that favors variance reduction when the ratios are noisy.


\begin{proposition}[Asymptotic bias of $\widehat r_{W}(\boldsymbol{x}_i)$ under correct specification]
\label{prop:cctbias}
Let $b_{\mathrm{emp},i}$ and $b_{\scriptstyle\mathrm{K}_{\theta}, i}$ be random errors of $\widehat\tau^{\,2}_{\mathrm{emp}}(\boldsymbol{x}_i)$ and $\widehat\tau^{\,2}_{\scriptscriptstyle\mathrm{K}_\theta}(\boldsymbol{x}_{i})$ as estimators of $\tau^{\,2}_{\scriptscriptstyle\mathrm{Q}}(\boldsymbol{x}_i)$. Suppose $K_\theta=K_*$, assumptions of Theorem~\ref{thm:mspe_expansion} hold (in particular (A5)), and that the number of cross-validation partitions $M = M_n \to \infty$ as $n \to \infty$, and define
\[
  h_n \;:=\; \max\!\bigl(n^{-1/2},\, M_n^{-1/2}\bigr).
\]
Assume:
\begin{enumerate}
    \item[\emph{(D1)}]
    \[
      \widehat\tau^{\,2}_{\mathrm{emp}}(\boldsymbol{x}_i) = \tau^{\,2}_{\scriptscriptstyle\mathrm{Q}}(\boldsymbol{x}_i) + b_{\mathrm{emp},i},
      \qquad
      \widehat\tau^{\,2}_{\scriptscriptstyle\mathrm{K}_\theta}(\boldsymbol{x}_{i}) = \tau^{\,2}_{\scriptscriptstyle\mathrm{Q}}(\boldsymbol{x}_i) + b_{\scriptstyle\mathrm{K}_{\theta}, i},
    \]
    where $b_{\mathrm{emp},i} = O_p(h_n)$ and $b_{\scriptstyle\mathrm{K}_{\theta}, i} = O_p(n^{-1/2})$.
    
    \item[\emph{(D2)}]
    \[
      E_*[b_{\mathrm{emp},i}^2] = O(h_n^2),
      \qquad
      E_*[b_{\scriptstyle\mathrm{K}_{\theta}, i}^2] = O(n^{-1}),
      \qquad
      E_*[|b_{\mathrm{emp},i}\,b_{\scriptstyle\mathrm{K}_{\theta}, i}|] = O(h_n \cdot n^{-1/2}).
    \]
    \item[\emph{(D3)}]
    $\delta_{\mathrm{emp},i} := E_*[b_{\mathrm{emp},i}]$ and $\delta_{\scriptstyle\mathrm{K}_{\theta},i} := E_*[b_{\scriptstyle\mathrm{K}_{\theta}, i}]$ satisfy $\delta_{\mathrm{emp},i} > 0$ and $\delta_{\scriptstyle\mathrm{K}_{\theta},i} < 0$ for all sufficiently large~$n$.
\end{enumerate}
Then
\begin{equation}\label{eq:cal-ratio-bias}
  E_*\!\bigl[\widehat r_W(\boldsymbol{x}_i)\bigr]
  \;=\; 1
  \;+\; \frac{\delta_{\mathrm{emp},i} - \delta_{\scriptstyle\mathrm{K}_{\theta},i}}{\tau^{\,2}_{\scriptscriptstyle\mathrm{Q}}(\boldsymbol{x}_i)}
  \;+\; \frac{\mathrm{Var}_*\!\bigl(\widehat\tau^{\,2}_{\scriptscriptstyle\mathrm{K}_\theta}(\boldsymbol{x}_{i})\bigr) - \mathrm{Cov}_*\!\bigl(\widehat\tau^{\,2}_{\mathrm{emp}}(\boldsymbol{x}_i),\,\widehat\tau^{\,2}_{\scriptscriptstyle\mathrm{K}_\theta}(\boldsymbol{x}_{i})\bigr)}{\tau^4_{\scriptscriptstyle\mathrm{Q}}(\boldsymbol{x}_i)}
  \;+\; o(h_n^2).
\end{equation}
\end{proposition}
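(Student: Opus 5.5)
The pointwise calibration ratio at an observed input is $\widehat r_W(\boldsymbol{x}_i)=\widehat\tau^{\,2}_{\mathrm{emp}}(\boldsymbol{x}_i)/\widehat\tau^{\,2}_{\scriptscriptstyle\mathrm{K}_\theta}(\boldsymbol{x}_i)$. We insert the representations of (D1) and take a second-order delta-method (Taylor) expansion of the ratio $f(a,b)=a/b$ about $(\tau^2_{\scriptscriptstyle\mathrm{Q}},\tau^2_{\scriptscriptstyle\mathrm{Q}})$, where $\tau^2_{\scriptscriptstyle\mathrm{Q}}=\tau^2_{\scriptscriptstyle\mathrm{Q}}(\boldsymbol{x}_i)$. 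Writing $a=\tau^2_{\scriptscriptstyle\mathrm{Q}}+b_{\mathrm{emp},i}$ and $b=\tau^2_{\scriptscriptstyle\mathrm{Q}}+b_{\scriptstyle\mathrm{K}_\theta,i}$, we expand $1/b$ geometrically:
\[
\frac{a}{b}=\Bigl(1+\frac{b_{\mathrm{emp},i}}{\tau^2_{\scriptscriptstyle\mathrm{Q}}}\Bigr)\Bigl(1-\frac{b_{\scriptstyle\mathrm{K}_\theta,i}}{\tau^2_{\scriptscriptstyle\mathrm{Q}}}+\frac{b_{\scriptstyle\mathrm{K}_\theta,i}^2}{\tau^4_{\scriptscriptstyle\mathrm{Q}}}\Bigr)+\mathcal R_i ,
\]
so that, before taking expectations,
\[
\frac{a}{b}=1+\frac{b_{\mathrm{emp},i}-b_{\scriptstyle\mathrm{K}_\theta,i}}{\tau^2_{\scriptscriptstyle\mathrm{Q}}}+\frac{b_{\scriptstyle\mathrm{K}_\theta,i}^2-b_{\mathrm{emp},i}b_{\scriptstyle\mathrm{K}_\theta,i}}{\tau^4_{\scriptscriptstyle\mathrm{Q}}}+\mathcal R_i .
\]
The remainder $\mathcal R_i$ collects cubic terms: $b_{\scriptstyle\mathrm{K}_\theta,i}^3/(\tau^4_{\scriptscriptstyle\mathrm{Q}}\, b)$ and $b_{\mathrm{emp},i}b_{\scriptstyle\mathrm{K}_\theta,i}^2/(\tau^4_{\scriptscriptstyle\mathrm{Q}}\, b)$. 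We need $\tau^2_{\scriptscriptstyle\mathrm{Q}}(\boldsymbol{x}_i)$ bounded away from zero. This holds because, under correct specification and (A5), Theorem~\ref{thm:mspe_expansion} gives $\tau^2_{\scriptscriptstyle\mathrm{Q}}(\boldsymbol{x}_i)=\tau^2_{\scriptscriptstyle\mathrm{B}}(\boldsymbol{x}_i;\theta_*)+O(n^{-1})$, and the nugget is positive.

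Taking expectations, the linear term gives $(\delta_{\mathrm{emp},i}-\delta_{\scriptstyle\mathrm{K}_\theta,i})/\tau^2_{\scriptscriptstyle\mathrm{Q}}$. For the quadratic term, we rewrite $E[b_{\scriptstyle\mathrm{K}_\theta,i}^2]=\mathrm{Var}_*(\widehat\tau^{\,2}_{\scriptscriptstyle\mathrm{K}_\theta})+\delta_{\scriptstyle\mathrm{K}_\theta,i}^2$ and $E[b_{\mathrm{emp},i}b_{\scriptstyle\mathrm{K}_\theta,i}]=\mathrm{Cov}_*(\widehat\tau^{\,2}_{\mathrm{emp}},\widehat\tau^{\,2}_{\scriptscriptstyle\mathrm{K}_\theta})+\delta_{\mathrm{emp},i}\delta_{\scriptstyle\mathrm{K}_\theta,i}$. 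This uses the fact that $\tau^2_{\scriptscriptstyle\mathrm{Q}}$ is deterministic, so the errors inherit the variances and covariance of the estimators. Two leftovers remain: $\delta_{\scriptstyle\mathrm{K}_\theta,i}^2$ and $\delta_{\mathrm{emp},i}\delta_{\scriptstyle\mathrm{K}_\theta,i}$. To show they are $o(h_n^2)$, we argue that the biases are of smaller order than the standard deviations. Theorem~\ref{thm:mspe_expansion} shows the plug-in bias is $O(n^{-1})$, and (C1)-type reasoning gives the same order, so $\delta_{\scriptstyle\mathrm{K}_\theta,i}^2=O(n^{-2})=o(h_n^2)$. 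Also $\delta_{\mathrm{emp},i}\delta_{\scriptstyle\mathrm{K}_\theta,i}=O(h_n)\cdot O(n^{-1})=o(h_n^2)$, because $n^{-1}\le h_n^2$ while $h_n\to0$. (D3) fixes the signs of the biases and is used only to interpret the leading term as a positive overcorrection. If $\delta_{\scriptstyle\mathrm{K}_\theta,i}$ were only $O(n^{-1/2})$, these products would instead be absorbed into the displayed bias term, so we rely on the $O(n^{-1})$ rate.

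The main obstacle is controlling $E[\mathcal R_i]=o(h_n^2)$, since the ratio involves a random denominator whose moments are not directly assumed. Cubic expected terms are bounded by Hölder's inequality with (D2): for example, $E|b_{\mathrm{emp},i}b_{\scriptstyle\mathrm{K}_\theta,i}^2|$ is of order $h_n\cdot n^{-1}=o(h_n^2)$. To justify this bound, we split on the event $A_n=\{|b_{\scriptstyle\mathrm{K}_\theta,i}|\le\tau^2_{\scriptscriptstyle\mathrm{Q}}/2\}$. On $A_n$ the denominator exceeds $\tau^2_{\scriptscriptstyle\mathrm{Q}}/2$, so the cubic bound applies. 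On $A_n^c$, Chebyshev's inequality with (D2) gives $P(A_n^c)=O(n^{-1})$, and we bound the ratio's contribution using the finite-moment condition (C2) together with Cauchy--Schwarz. I would try this truncation first. If it is too weak, I would add a higher-moment condition on $b_{\scriptstyle\mathrm{K}_\theta,i}$, analogous to (A3), so that $P(A_n^c)=O(n^{-2})$. Collecting the linear, quadratic and remainder terms then yields \eqref{eq:cal-ratio-bias}.
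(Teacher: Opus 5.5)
Your expansion of $(1+a)/(1+q)$ and the rewriting of $E_*[b_{\mathrm{K}_\theta,i}^2]$ and $E_*[b_{\mathrm{emp},i}b_{\mathrm{K}_\theta,i}]$ as variance/covariance plus bias products follow the paper's proof exactly, as does the appeal to $\delta_{\mathrm{K}_\theta,i}=O(n^{-1})$. The gap is the step you flag yourself, $E_*[\mathcal R_i]=o(h_n^2)$, and your truncation does not close it under the stated hypotheses.

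On $A_n$, (D2) controls only second moments, so the best it gives is
$E_*[b_{\mathrm{K}_\theta,i}^2|b_{\mathrm{emp},i}|\mathbf{1}_{A_n}]\le \tfrac12\tau^2_{\mathrm{Q}}(\boldsymbol{x}_i)\,E_*|b_{\mathrm{emp},i}b_{\mathrm{K}_\theta,i}|=O(h_n n^{-1/2})$
and $E_*[|b_{\mathrm{K}_\theta,i}|^3\mathbf{1}_{A_n}]=O(n^{-1})$. Neither is $o(h_n^2)$ when $M_n\ge c\,n$, since then $h_n^2\asymp n^{-1}$. Your claimed $O(h_n n^{-1})$ needs $E_*b_{\mathrm{K}_\theta,i}^4=O(n^{-2})$, which is not assumed.

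On $A_n^c$, Chebyshev gives $P_*(A_n^c)=O(n^{-1})$. Cauchy--Schwarz against a bounded second moment of the ratio then gives only $O(n^{-1/2})$. Even with your fallback $P_*(A_n^c)=O(n^{-2})$, Cauchy--Schwarz gives $O(n^{-1})$, not $o(n^{-1})$. Also, (C2) is a hypothesis of Proposition~\ref{prop:var_cct} about the smoothed ratio at $\boldsymbol{x}_o$; it is not available here and says nothing about $\widehat r_W(\boldsymbol{x}_i)$.

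In fact (D1)--(D3) alone do not even make $E_*[\widehat r_W(\boldsymbol{x}_i)]$ finite. Take $b_{\mathrm{emp},i}\equiv h_n$, and let $\tau^2_{\mathrm{Q}}(\boldsymbol{x}_i)+b_{\mathrm{K}_\theta,i}$ be uniform on $(0,\tau^2_{\mathrm{Q}}(\boldsymbol{x}_i))$ with probability $n^{-2}$, with $b_{\mathrm{K}_\theta,i}=0$ otherwise. Then (D1)--(D3) all hold, yet $E_*[\widehat r_W(\boldsymbol{x}_i)]=\infty$.

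So an extra hypothesis is unavoidable. One that suffices is $E_*b_{\mathrm{K}_\theta,i}^4=O(n^{-2})$ together with $\sup_n E_*|\widehat r_W(\boldsymbol{x}_i)|^p<\infty$ for some $p>2$. Markov then gives $P_*(A_n^c)=O(n^{-2})$. H\"older bounds $E_*[|\widehat r_W(\boldsymbol{x}_i)|\mathbf{1}_{A_n^c}]$ by $O(n^{-2+2/p})=o(n^{-1})$. The polynomial part on $A_n^c$ and the cubic terms on $A_n$ become $O(h_n n^{-1}+n^{-3/2})=o(h_n^2)$. The paper's proof does not fill this hole either: it asserts uniform integrability ``under (D1)--(D2)'', and the example above shows that does not follow.

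Two further inputs are imported rather than derived.

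First, your argument that $\tau^2_{\mathrm{Q}}(\boldsymbol{x}_i)$ stays bounded away from zero is not valid. Theorem~\ref{thm:mspe_expansion} is stated only for $\boldsymbol{x}_o\notin\{\boldsymbol{x}_1,\dots,\boldsymbol{x}_n\}$. Moreover, a positive nugget does not keep the MSPE for the latent $W$ away from zero under infill: averaging the growing number of nearby noisy observations drives it to zero. The paper itself notes this shrinkage at observed inputs in Section~\ref{sec:simulations}. Since (D1)--(D2) give absolute rates, while the expansion is in $b/\tau^2_{\mathrm{Q}}$ and the remainder carries a factor $\tau^{-6}_{\mathrm{Q}}(\boldsymbol{x}_i)$, you must either assume $\inf_n\tau^2_{\mathrm{Q}}(\boldsymbol{x}_i)>0$ or state (D1)--(D2) for the normalized errors $a,q$. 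The paper silently treats $\tau^2_{\mathrm{Q}}$ as a fixed constant.

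Second, $\delta_{\mathrm{K}_\theta,i}=O(n^{-1})$ does not follow from (D1)--(D3), which give only $|\delta_{\mathrm{K}_\theta,i}|\le (E_*b_{\mathrm{K}_\theta,i}^2)^{1/2}=O(n^{-1/2})$. It also does not follow from Theorem~\ref{thm:mspe_expansion}, which concerns $\tau^2_{\mathrm{Q}}$ and not $E_*[\widehat\tau^{\,2}_{\mathrm{K}_\theta}(\boldsymbol{x}_i)]$, and (C1) belongs to a different proposition. The paper imports this rate from the REML bias result of \citet{KaufmanShaby2013}. You should list it explicitly as a hypothesis: without it, $\delta_{\mathrm{K}_\theta,i}^2$ and $\delta_{\mathrm{emp},i}\delta_{\mathrm{K}_\theta,i}$ are only $O(h_n^2)$, and the displayed formula no longer follows.
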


In particular, under (D3), $E_*[\widehat r_W(\boldsymbol{x}_i)] > 1$ for sufficiently large $n$ and $M_n$. The rate for $b_{\scriptstyle\mathrm{K}_{\theta}, i}$ in (D1) follows from Theorem~\ref{thm:mspe_expansion} via the delta method applied to $\widehat\eta_n$, since $\widehat\tau^{\,2}_{\scriptstyle{\mathrm{K}_{\theta}}}(\boldsymbol{x}_i)$ depends on $\theta$ at leading order only through $\widehat\eta_n$ by (A2). The rate for $b_{\mathrm{emp},i}$ has two contributions: (a) the difference between the fold-trained quasi-EBLUP MSPE and the full-data quasi-EBLUP MSPE, which is $O(n^{-1})$ by Theorem~\ref{thm:mspe_expansion} applied to training sets of size $n(K-1)/K$; and (b) the sampling fluctuation of the fold-averaged squared cross-validation residual, which under Gaussian errors is approximately $\chi^2_1$-distributed per fold-pair and averages over $|\mathcal{M}_i| \approx M_n$ such pairs, contributing $O_p(M_n^{-1/2})$. The dominant rate is therefore $\max(n^{-1/2}, M_n^{-1/2}) = h_n$, with the second term dominating when $M_n$ grows slower than $n$.

The bias established in Proposition~\ref{prop:cctbias} has two practical consequences. It tells us that some overcorrection under correct specification is unavoidable at finite $n$ and~$M_n$ and is not a defect of any particular smoother. It also sets the scale on which winsorisation or shrinkage of the observed ratios must operate to remain neutral when the covariance model is in fact correct. The asymptotic regime $M_n \to \infty$ matters: with $M_n$ fixed, the cross-validation fluctuation contribution to $b_{\mathrm{emp},i}$ does not vanish, and the overcorrection persists at order $M^{-1/2}$ rather than shrinking with~$n$.

Note that 
\begin{equation}
\frac{1}{n}\sum_{i=1}^n\left(\widehat{\tau}^{\,2}_{\mathrm{K}_{\theta}}(\boldsymbol{x}_i)-\widehat{\tau}^{\,2}_{\mathrm{Q}}(\boldsymbol{x}_i)  \right)\nonumber
\end{equation}
is an empirical analogue of $b_{\scriptstyle\mathrm{K}_{\theta}, i}$, averaged across observed inputs. Under correct specification it is dominated by the parameter-uncertainty term of Theorem \ref{thm:mspe_expansion}; under misspecification it incorporates an additional contribution from the misspecification gap in the BLUP MSPE itself.


\subsection{Pointwise calibration ratios with winsorization.}

At each observed input $\bm x_i$ for which $\bar v(\bm x_i)>0$ and
$\widehat\tau^{\,2}_{\mathrm{emp}}(\bm x_i)>0$, define the pointwise
calibration ratio
\begin{equation}
\widehat r_W(\bm x_i)
\;=\;
\frac{\widehat\tau^{\,2}_{\mathrm{emp}}(\bm x_i)}{\bar v(\bm x_i)}.
\label{eq:rhat-pointwise}
\end{equation}
Let $\mathcal V$ denote the set of observed inputs for which
\eqref{eq:rhat-pointwise} is finite and positive. Following the variance
analysis of Proposition~\ref{prop:var_cct}, we winsorize the observed ratios to a fixed
interval $[r_L,r_U]$ before smoothing:
\begin{equation}
\widetilde r_W(\bm x_i)
\;=\;\min\!\bigl\{\max\{\widehat r_W(\bm x_i),\,r_L\},\;r_U\bigr\},
\qquad i\in\mathcal V.
\label{eq:winsorise}
\end{equation}
Because $\bar e^{\,2}(\bm x_i)$ is built from a single squared
residual per fold-pass, its right-skewness translates into a heavy-tailed
$\widehat r_W(\bm x_i)$, and winsorization directly reduces $\Var(\widehat r_W(\bm x_i))$
at the cost of bounding the  maximum admissible correction---a bias--variance
trade-off that favors variance reduction.

\subsection{Kernel-smoothed log-ratios extrapolated to the prediction input.}

For each prediction input $\bm x_o$ we extrapolate the pointwise ratios
$\{\widetilde r_W(\bm x_i):i\in\mathcal V\}$ using a Gaussian kernel smoother
in log-space (i.e., a weighted geometric mean):
\begin{equation}
\widehat r_W(\bm x_o)
\;=\;
\exp\!\Biggl(
\sum_{i\in\mathcal V} w_i(\bm x_o)\,\log\widetilde r(\bm x_i)
\Biggr),
\qquad
w_i(\bm x_o)=\frac{\exp\!\bigl(-\|\bm x_o-\bm x_i\|^{2}/(2c^{2})\bigr)}
   {\sum_{j\in\mathcal V}\exp\!\bigl(-\|\bm x_o-\bm x_j\|^{2}/(2c^{2})\bigr)}.
\label{eq:rhat-kernel}
\end{equation}
The bandwidth $c$ is chosen by a $k$-nearest-neighbor rule with
$k=\lfloor\sqrt{n}\rfloor$. For each observed input $\bm x_i$, let $d_i^{(k)}$
denote the distance from $\bm x_i$ to its $k$th nearest distinct
observed neighbor, and set
\begin{equation}
c\;=\;\mathrm{median}\bigl\{d_i^{(k)}:\,i=1,\dots,n\bigr\}.
\label{eq:bw-knn}
\end{equation}
This rule adapts the bandwidth to the realized design density without
requiring tuning, and reduces to a sensible scale in dimensions
$d\le 3$.

The choice of a geometric mean (rather than arithmetic) in
\eqref{eq:rhat-kernel} is deliberate. Calibration ratios are dimensionless
multiplicative corrections: a value of $r_W(\bm x_o)=2$ (model underestimates by a
factor of $2$) and $r_W(\bm x_o)=1/2$ (model overestimates by a factor of $2$) are
symmetric departures from $r_W(\bm x_o)=1$ that should average to $r_W(\bm x_o)=1$ when present
in equal proportions. The arithmetic mean of $\{\tfrac12,2\}$ is $1.25$,
which spuriously biases the correction upward; the geometric mean is $1$,
preserving the symmetry. Working in log-space therefore both stabilises the
smoother under heavy-tailed ratios and respects the multiplicative scale on
which the correction operates.

\subsection{Corrected MSPE at the prediction input.}

The corrected MSPE estimator at $\bm x_o$ is
\begin{equation}
\widehat\tau^{\,2}_{Q}(\bm x_o)
\;=\;
\widehat r_W(\bm x_o)\cdot\widehat\tau^{\,2}_{\scriptscriptstyle K_{\theta}}(\bm x_o),
\label{eq:tau2cct}
\end{equation}
where $\widehat\tau^{\,2}_{\scriptscriptstyle K_{\theta}}(\bm x_o)
=\mathcal{K}_{\widehat{\bm\theta}}(\bm x_o,\bm x_o)
-\bm k_{\widehat{\bm\theta}}(\bm x_o)^{\!\top}
 \Sigma_{\widehat{\bm\theta}}^{-1}\bm k_{\widehat{\bm\theta}}(\bm x_o)$
is the full-data plug-in prediction variance evaluated at the
full-sample MLE $\widehat{\bm\theta}$. Spatial variation of the corrected
MSPE across prediction inputs arises from two sources: the plug-in
factor $\widehat\tau^{\,2}_{\scriptscriptstyle K_{\theta}}(\bm x_o)$ captures the local prediction
geometry under the working model, while the multiplicative factor
$\widehat r_W(\bm x_o)$ captures the local severity of misspecification as
inferred from observed inputs in the neighborhood of $\bm x_o$. The
plug-in factor is computed once at $\widehat{\bm\theta}$ and reused
throughout, while $\widehat r_W(\bm x_o)$ depends on $\bm x_o$ through the
kernel weights $\{w_i(\bm x_o)\}$ in \eqref{eq:rhat-kernel}.

The plug-in variance plays a dual role: it appears (in fold-specific
form $\bar v$) in the denominator of \eqref{eq:rhat-pointwise} and (in
full-data form $\widehat\tau^{\,2}_{\scriptscriptstyle K_{\theta}}$) as the multiplier in
\eqref{eq:tau2cct}. The pointwise ratio $\widehat r_W(\bm x_i)$ measures the
factor by which the model under- or overestimates the true MSPE at
observed inputs; its kernel-smoothed value $\widehat r_W(\bm x_o)$ then
inflates or deflates the plug-in at $\bm x_o$ to match local empirical
predictive performance. The asymmetry between the fold-specific denominator
in \eqref{eq:rhat-pointwise} and the full-data multiplier in
\eqref{eq:tau2cct} reflects a deliberate scale choice: at observed inputs
the relevant comparison is to a held-out predictor on the same scale as the
empirical squared errors, whereas at the unobserved $\bm x_o$ the relevant
quantity is the prediction made from all $n$ observations. $\widehat r_W$ is intended to be a property of the model fit (the
relative severity of misspecification) and is approximately invariant to
the size of the training set used to evaluate it, so transporting it
multiplicatively from the fold scale to the full-data scale is justified
to leading order.

When $K_{\bm\theta}=K_{*}$, the plug-in variance at observed inputs
accurately reflects the prediction error of the held-out predictor up to the
$O(n^{-1})$ parameter-uncertainty correction. Under covariance smoothness misspecification,
the multiplicative correction
brings prediction-interval coverage closer to nominal.

\subsection{Empirical tail adjustment of the prediction-interval quantile.}\label{sec:piquantile}

A corrected variance estimator alone does not guarantee nominal coverage:
the $z_{1-\alpha/2}$ quantile
assumes that the standardized prediction error
$(W(\bm x_o)-\widehat{W}_{\widehat{\theta}}(\bm x_o))/\sqrt{\tau^{\,2}_Q(\bm x_o)}$ is
approximately Gaussian. Under severe covariance smoothness misspecification the quasi-EBLUP error distribution can be heavier-tailed than
Gaussian, and a $z$-quantile under-covers regardless of how accurate
$\widehat\tau^{\,2}_{Q}(\bm x_o)$ is on average.  We adopt a Student-$t$ calibration as a parsimonious one-parameter approximation to the empirically observed excess kurtosis of standardized prediction residuals. 
Let
\[
\widehat\rho_i^{\,(m,k)} \;=\; \frac{Y(\bm x_i)-\widehat{W}^{\,(m)}_{-i}(\bm x_i)}
{\sqrt{\widehat\tau^{\,2,(m)}_{\scriptscriptstyle K_{\theta},-i}(\bm x_i)+\widehat\sigma^{\,2,(m)}_{-k}}},
\qquad (m,k)\in\mathcal{M}_i,
\]
denote the standardized held-out residual at $\bm x_i$ from the fold-pass $
(m,k)$, 
where the denominator combines the fold-specific predictive
variance and nugget so that $\widehat\rho_i^{\,(m,k)}$ is dimensionless and centered
when the model is correct. From the pooled collection
$\{\widehat\rho_i^{\,(m,k)}:i\in\mathcal V,\,(m,k)\in\mathcal{M}_i\}$ we compute the
moment-based sample kurtosis $\widehat\kappa$,

\begin{equation}
\widehat\kappa
\;=\;
\frac{\frac{1}{N}\sum_{i,(m,k)} \bigl(\widehat\rho_i^{\,(m,k)}-\bar\rho\bigr)^{4}}
     {\Bigl(\frac{1}{N}\sum_{i,(m,k)} \bigl(\widehat\rho_i^{\,(m,k)}-\bar\rho\bigr)^{2}\Bigr)^{2}},
\qquad
\bar\rho \;=\; \frac{1}{N}\sum_{i,(m,k)} \widehat\rho_i^{\,(m,k)},
\label{eq:kappa-hat}
\end{equation}
with $N=\sum_{i\in\mathcal V}|\mathcal M_i|$ and the sums taken over
$i\in\mathcal V$ and $(m,k)\in\mathcal M_i$. The sample kurtosis approaches
$3$ under Gaussian residuals.
Then we invert the kurtosis-degrees-of-freedom
relationship for a Student-$t$ distribution,
$\kappa_t(\eta_t)=3(\eta_t-2)/(\eta_t-4)$, by method of moments:
\begin{equation}
\widehat\eta_t \;=\;
\begin{cases}
\max\!\bigl(4,\;4 + 6/(\widehat\kappa-3)\bigr) & \text{if }\widehat\kappa>3,\\[2pt]
\infty & \text{if }\widehat\kappa\le 3.
\end{cases}
\label{eq:eta-mom}
\end{equation}
The margin of error of the prediction interval is then
\begin{equation}
t_{\widehat\eta_t,\,1-\alpha/2}\cdot
\sqrt{\widehat\tau^{\,2}_{Q}(\bm x_o)},
\label{eq:hw-t}
\end{equation}
where $t_{\eta_t,\,q}$ denotes the $q$-quantile of a Student-$t$
distribution with $\eta_t$ degrees of freedom. Under correct specification,
$\widehat\kappa$ fluctuates around $3$ and the rule
\eqref{eq:eta-mom} returns $\widehat\eta_t=\infty$ in roughly half of
replicates, in which case $t_{\widehat\eta_t,\,1-\alpha/2}=z_{1-\alpha/2}$
and the adjustment is exactly inactive; for the remainder, $\widehat\eta_t$
is large and the inflation of the margin of error is small. Under misspecification,
$\widehat\kappa$ rises with the roughness gap between the true and working
covariances, $\widehat\eta_t$ falls accordingly, and the $t$-quantile
widens the interval in proportion to the empirical tail evidence. Because
$\widehat\eta_t$ is computed once per dataset from $|\mathcal V|\cdot M$
residuals (rather than from the top $2.5\%$ of them), it is far more stable
at small $n$ than a direct sample-quantile estimator, and it provides a
smooth off-switch: the adjustment vanishes continuously as the residuals
approach Gaussianity. The $t$-quantile is applied only to
$\widehat\tau^{\,2}_{Q}$; competitor estimators continue to use
$z_{1-\alpha/2}$, so any improvement in coverage attributable to the
quantile adjustment is reported as part of the proposed method rather than
as a confound across methods.


Algorithm 1 below summarizes the computation of
$\widehat\tau^{\,2}_{Q}(\bm x_o)$. Default values used throughout the simulation study are $K=5$, $M=20$,
$[r_L,r_U]=[0.5,4]$, and $k=\lfloor\sqrt{n}\rfloor$.

\begin{center}
\begin{minipage}{0.96\linewidth}
\hrule\smallskip
\textbf{Algorithm 1 (\,Calibration-ratio corrected $\widehat\tau^{\,2}_{Q}(\bm x_o)$\,).}
\smallskip\hrule
\begin{enumerate}[leftmargin=2.2em,itemsep=2pt,topsep=4pt]
\item Fit the working GP to the full data and obtain $\widehat{\bm\theta}$,
      $\widehat\tau^{\,2}_{\scriptscriptstyle K_{\theta}}(\bm x_o)$, and $\widehat\sigma^{\,2}_{\mathrm{rob}}$.
\item For $m=1,\dots,M$, draw a random $K$-fold partition of
      $\{1,\dots,n\}$. For each fold $k$:
      \begin{enumerate}[label=(\roman*),leftmargin=1.6em,itemsep=1pt,topsep=2pt]
        \item re-estimate $\widehat{\bm\theta}^{\,(m)}_{-k}$ on
              $\bm Y_{\mathrm{tr}_{m,k}}$, warm-started at $\widehat{\bm\theta}$;
        \item for each $i\in\mathrm{te}_{m,k}$, compute
              $\widehat{W}^{\,(m)}_{-i}(\bm x_i)$ and
              $\widehat\tau^{\,2,(m)}_{\scriptscriptstyle K_{\theta},-i}(\bm x_i)$;
        \item accumulate $\bar e^{\,2}(\bm x_i)$, $\bar v(\bm x_i)$.
      \end{enumerate}
\item Form
      $\widehat\tau^{\,2}_{\mathrm{emp}}(\bm x_i)
       =\bar e^{\,2}(\bm x_i)-\widehat\sigma^{\,2}_{\mathrm{rob}}$
      and the pointwise ratio
      $\widehat r_W(\bm x_i)=\widehat\tau^{\,2}_{\mathrm{emp}}(\bm x_i)/\bar v(\bm x_i)$
      for $i\in\mathcal V$.
\item Winsorize: $\widetilde r_W(\bm x_i)=\min\{\max(\widehat r_W(\bm x_i),r_L),r_U\}$.
\item Compute the bandwidth $c$ via the $k$-NN rule \eqref{eq:bw-knn} with
      $k=\lfloor\sqrt n\rfloor$.
\item Compute $\widehat r_W(\bm x_o)$ as the Gaussian-kernel-weighted geometric
      mean of $\{\widetilde r_W(\bm x_i)\}$ via \eqref{eq:rhat-kernel}.
\item Compute
      $\widehat\tau^{\,2}_{Q}(\bm x_o)=\widehat r_W(\bm x_o)\cdot
       \widehat\tau^{\,2}_{\scriptscriptstyle K_{\theta}}(\bm x_o)$.
\item Return the prediction interval via margin of error \eqref{eq:hw-t}.
\end{enumerate}
\hrule\label{alg:cct}
\end{minipage}
\end{center}


\section{Simulation Study}
\label{sec:simulations}

We conduct a Monte Carlo simulation to compare the five
estimators of $\tau^{\,2}_{Q}(\bm x_o)$ developed in previous sections across a gradient of
covariance misspecification severities.  The simulation is
designed to reflect practical settings in which a Gaussian process
is fit with a commonly used but potentially incorrect covariance
function.

The latent process is modeled
as a zero-mean Gaussian process with isotropic covariance
$\mathcal{K}_*(h) = \sigma^2_w C_*(h/\ell_*)$, where $h = \|\boldsymbol{x}
- \boldsymbol{x}'\|$ is the Euclidean distance between inputs.
Observations follow the model~\eqref{eq:obs_model}.  The true parameters are set to
$\sigma^2_w = 5.5$, $\ell_* = 0.3$, and $\sigma^2 = 0.55$,
giving a signal-to-noise ratio of
$\mathrm{SNR} = \sigma^2_w/\sigma^2 = 10$.

The simulation employs two input designs: regular sampling and a maximin Latin hypercube design (LHD) on the unit square.
Sample sizes are $n \in \{36, 49, 100\}$
with observations drawn from~\eqref{eq:obs_model}. For the regular design,  grid coordinates along
each axis are
$x_k = (2k-1)/(2\sqrt{n})$, $k = 1, \dots, \sqrt{n}$, placing
observations symmetrically within the domain.  Three prediction
inputs are fixed at
$\boldsymbol{x}_{o,1} = (0.51, 0.51)$ (domain interior),
$\boldsymbol{x}_{o,2} = (0.20, 0.10)$ (near a corner), and
$\boldsymbol{x}_{o,3} = (0.90, 0.74)$ (near the boundary),
 representing qualitatively different prediction geometries. 

A Latin hypercube design is constructed by
partitioning each coordinate axis into $n$ equal intervals and placing exactly
one design point in each interval along every axis. This yields one-dimensional
projection uniformity by construction. Among the combinatorially many LHDs of
size $n$, the maximin LHD is the one that maximizes the minimum pairwise
Euclidean distance between design points,
\[
\mathbf{X}^{\star}
\;=\; \arg\!\max_{\mathbf{X} \in \mathcal{L}_{n,d}}
\min_{1 \le i < j \le n} \|\mathbf{x}_i - \mathbf{x}_j\|_2,
\]
where $\mathcal{L}_{n,d}$ denotes the set of all LHDs of size $n$ in dimension
$d$ \citep{morris1995exploratory}. The maximin criterion produces a space-filling design
that avoids the clustering of nearby points typical of pure independent
uniform sampling, while breaking the lattice structure of a regular grid.
For each sample size, the design is generated using the
\texttt{maximinLHS} function in the \texttt{lhs} R package, which performs a
stochastic search over LHD permutations \citep{carnellr25}.

Three properties of this design are relevant for the simulation study. First,
unlike a regular grid, the maximin LHD samples the input space irregularly
enough to break the spectral pathologies identified by \citet{bachoc2014asymptotic},
while remaining sufficiently uniform to support stable GPR prediction
across the domain. Second, the prediction inputs $\mathbf{x}_o$ are held fixed
at the same three locations used in the grid-based simulations. 
Third,
the irregular nearest-neighbor structure of the LHD induces genuine variation
in leave-one-out GPR configurations across design points, in contrast to
the homogeneous configurations of a regular grid. This variation is essential
for evaluating cross-validation-based MSPE estimators, whose theoretical
behavior is degenerate under perfectly regular sampling.

To isolate covariance smoothness misspecification effects from design effects, the maximin LHD is
generated once per sample size using a deterministic seed and reused across
all covariance-misspecification scenarios. 
This pairing permits cross-scenario comparisons that
are not confounded with design randomness. 

 We simulate four covariance misspecification scenarios. In all scenarios the model covariance function used is the squared exponential,
\[
  \mathcal{K}_{\scriptscriptstyle\mathrm{SE}}(h)
  = \sigma_w^2 \exp\!\left(-\frac{h^2}{2\ell^2}\right),
\]
which is infinitely mean-square differentiable. 
  Four choices
of true covariance function define a gradient of
misspecification severity:
\begin{enumerate}
  \item \textbf{Correct specification.}
    The working model matches the
    truth.  This scenario serves as the baseline, verifying that
    the corrected estimator does not degrade performance when
    no correction is needed.

  \item \textbf{Mild misspecification.}
    Mat\'ern, $\nu = 5/2$.  The true process is twice
    mean-square differentiable.

  \item \textbf{Moderate misspecification.}
    Mat\'ern, $\nu = 3/2$.  The true process is once
    mean-square differentiable.

  \item \textbf{Severe misspecification.}
    Mat\'ern, $\nu = 1/2$, equivalently the exponential
    covariance.  The true process is continuous but nowhere
    mean-square differentiable, producing sample paths that are
    substantially rougher than those assumed by the working model.
\end{enumerate}

\noindent
The squared
exponential, being infinitely smooth, cannot capture the
short-range roughness of the Mat\'ern processes, and the
maximum likelihood estimator compensates by absorbing input-dependence
signal into the measurement error component, inflating the
estimated nugget $\widehat\sigma^2$ relative to its
true value.

We compare against two parametric-bootstrap estimators introduced by \citet{wangwall03}, $\widehat\tau^{\,2}_{\scriptscriptstyle \mathrm{WW}}(\boldsymbol{x}_o)$ and its bias-corrected variant $\widehat\tau^{\,2}_{\scriptscriptstyle 2\mathrm{WW}}(\boldsymbol{x}_o)$. Both estimate the parameter-uncertainty contribution to MSPE by re-estimating $\theta$ from $B$ parametric bootstrap replicates drawn from the fitted working model and applying the resulting weights to the original observations. The bias-corrected version multiplies the bootstrap-based contribution by a factor of 2, following \citet{harvillejeske92}. Full algorithmic details are given in \citet{wangwall03}.

Estimators of $\tau^{\,2}_{\scriptscriptstyle\mathrm{Q}}$ are compared at each prediction input: plug-in ($\widehat\tau^{\,2}_{\scriptscriptstyle K_{\theta}}$); Wang--Wall ($\widehat\tau^{\,2}_{\scriptscriptstyle\mathrm{WW}}$); bias-corrected Wang--Wall
    ($\widehat\tau^{\,2}_{\scriptscriptstyle 2\mathrm{WW}}$); empirical
    ($\widehat\tau^{\,2}_{\scriptscriptstyle\mathrm{emp}}$); misspecification corrected
    ($\widehat\tau^{\,2}_{\scriptscriptstyle\mathrm{Q}}$).

For the bootstrap estimators,
$B = 300$ bootstrap replicates are generated from the fitted
model at each simulation replicate.  Bootstrap parameter
re-estimation is warm-started at the full-data estimate
$\widehat\theta$ to improve computational efficiency without
sacrificing numerical precision. 
For the $\widehat\tau^{\,2}_{\scriptscriptstyle\mathrm{emp}}$,
$K$-fold cross-validation is performed with $K = 5$ folds and
$M = 20$ independent random partitions.  

Each scenario--sample size combination is repeated
$N_{\mathrm{sim}} = 1{,}000$ times and we choose a nominal prediction interval coverage of 0.95.  
The following
performance metrics are computed at each prediction input:
average MSPE estimate, prediction interval coverage (the proportion of
    replicates in which the $95\%$ prediction interval
    contains the true process value $W(\boldsymbol{x}_o)$), average interval length, MSPE ratio:
    $\widehat\tau^{\,2}_a(\boldsymbol{x}_o) /
    \tau_{\scriptscriptstyle\mathrm{Q}}^{\,2}(\boldsymbol{x}_o)$. 

\noindent

\subsection{Simulation results at unobserved inputs; regular sampling}
 Tables \ref{tab:correct_se}-\ref{tab:severe} present the coverage of all the estimators of $\tau^{\,2}_{\scriptscriptstyle\mathrm{Q}}(\boldsymbol{x}_o)$ under regular sampling when $\boldsymbol{x}_o$ is unobserved. When the model covariance is correctly specified (Table \ref{tab:correct_se}), as expected, the plug-in performs well, with interval coverage approaching nominal coverage as $n$ increases. Estimators proposed by \citet{wangwall03} perform slightly better than $\widehat\tau^{\,2}_{\scriptscriptstyle K_{\theta}}  (\boldsymbol{x}_o)$.  Driven by the tendency of $\widehat\tau^{\,2}_{\scriptscriptstyle emp}(\boldsymbol{x}_o)$ to overcorrect, $\widehat\tau^{\,2}_{\scriptscriptstyle Q}(\boldsymbol{x}_o)$ also overcorrects although not as much as the cross validated estimator. The average length of the intervals are 11\% to about 55\% larger than those of $\widehat\tau^{\,2}_{\scriptscriptstyle 2WW}  (\boldsymbol{x}_o)$.

The overcorrection is an effect of the tendency of $K$-fold cross-validation to bias MSPE estimation upwards \citep{Batesetal2023,Burman1989,Efron1983,htf09}. One of the main reasons for the bias is that the model is trained on a subset of the data. Specifically, $\frac{K-1}{K}$ samples are used. This causes the model to perform worse than the full data version, hence increasing the MSPE estimate. 
Increasing $K$ reduces bias at the cost of higher variance. The simulation shows that under correct specification of $\nu$ median $\widehat{r}_W(\boldsymbol{x}_o)$ increases with $n$. This is consistent with Proposition \ref{prop:cctbias}: the bias terms in the calibration ratio are of order $n^{-1}$, but the population BLUP MSPE at observed inputs that appears in the denominator of those bias terms also shrinks under infill asymptotics, so the ratio of bias to MSPE does not vanish.


\begin{table}[H]
\centering
\caption{Coverage probability of 95\% prediction intervals of
five estimators of $\tau^{\,2}_{\scriptscriptstyle\mathrm{Q}}
(\boldsymbol{x}_o)$ when the covariance is correctly specified (SE) and regular sampling. The average length of the prediction intervals is
included in parentheses.}
\label{tab:correct_se}
\begin{tabular}{l l ccccc}
\toprule
$n$ & Location & $\widehat\tau^{\,2}_{\scriptscriptstyle K_{\theta}}  $& $\widehat\tau^{\,2}_{\scriptscriptstyle WW} $& $\widehat\tau^{\,2}_{\scriptscriptstyle 2WW} $& $\widehat\tau^{\,2}_{\scriptscriptstyle emp} $& $\widehat\tau^{\,2}_{\scriptscriptstyle Q} $ \\
\midrule
\multirow{3}{*}{36}
 & (0.51,0.51) & 0.91 (1.43) & 0.94 (1.63) & 0.96 (1.80) & 1.00 (3.90) & 0.97 (1.90) \\
 & (0.20,0.10) & 0.92 (1.68) & 0.93 (1.85) & 0.94 (1.99) & 1.00 (4.00) & 0.96 (2.21) \\
 & (0.90,0.74) & 0.91 (1.67) & 0.93 (1.84) & 0.94 (1.99) & 1.00 (3.98) & 0.96 (2.20) \\[4pt]

\multirow{3}{*}{49}
 & (0.51,0.51) & 0.92 (1.28) & 0.94 (1.42) & 0.96 (1.55) & 1.00 (3.50) & 0.98 (1.83) \\
 & (0.20,0.10) & 0.92 (1.50) & 0.94 (1.61) & 0.95 (1.70) & 1.00 (3.71) & 0.96 (1.99) \\
 & (0.90,0.74) & 0.92 (1.49) & 0.94 (1.60) & 0.95 (1.69) & 1.00 (3.69) & 0.97 (1.99) \\[4pt]

\multirow{3}{*}{100}
 & (0.51,0.51) & 0.94 (0.98) & 0.95 (1.03) & 0.96 (1.08) & 1.00 (3.32) & 0.99 (1.67) \\
 & (0.20,0.10) & 0.95 (1.15) & 0.96 (1.19) & 0.96 (1.22) & 1.00 (3.42) & 0.99 (1.76) \\
 & (0.90,0.74) & 0.94 (1.14) & 0.95 (1.18) & 0.96 (1.22) & 1.00 (3.44) & 0.99 (1.78) \\
\bottomrule
\end{tabular}
\end{table}

Although $\widehat\tau^{\,2}_{\scriptscriptstyle emp}(\boldsymbol{x}_o)$ has larger mean interval length than $\widehat\tau^{\,2}_{\scriptscriptstyle Q}(\boldsymbol{x}_o)$ in several configurations, it has substantially higher replicate-to-replicate variance in its length, which yields worse coverage. This is a legitimate advantage of $\widehat\tau^{\,2}_{\scriptscriptstyle Q}(\boldsymbol{x}_o)$ over $\widehat\tau^{\,2}_{\scriptscriptstyle emp}(\boldsymbol{x}_o)$: lower variance at lower mean length. 

Figure \ref{fig:mspe_correct} compares how each estimator of $\tau^{\,2}_{\scriptscriptstyle\mathrm{Q}}(\boldsymbol{x}_{o})$ performs for all three test input values and for each $n$ when the covariance function is correctly specified. For small $n$ we see that $\widehat\tau^{\,2}_{\scriptscriptstyle Q}(\boldsymbol{x}_o)$ estimates $\tau^{\,2}_{\scriptscriptstyle\mathrm{Q}}(\boldsymbol{x}_{o})$ well, although it overestimates it when $n=100$. 

\begin{figure}[ht] 
    \begin{center}
    \includegraphics[width=0.9\textwidth]{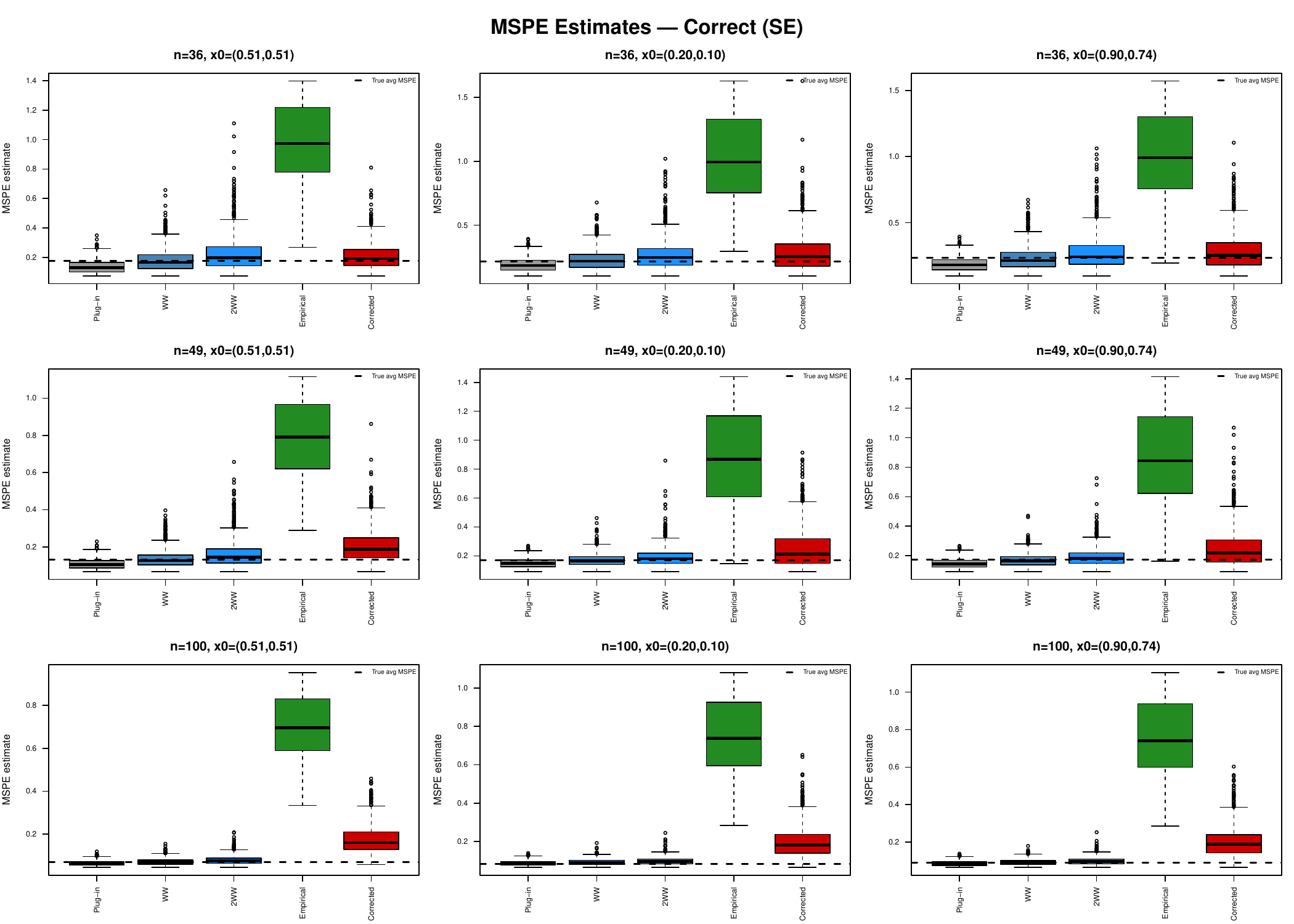}
    \end{center}
    \caption{Boxplots of $\widehat\tau^{\,2}_{\scriptscriptstyle K_{\theta}}(\boldsymbol{x}_o), \widehat\tau^{\,2}_{\scriptscriptstyle WW}(\boldsymbol{x}_o),\widehat\tau^{\,2}_{\scriptscriptstyle 2WW}(\boldsymbol{x}_o),\widehat\tau^{\,2}_{\scriptscriptstyle emp}(\boldsymbol{x}_o)$, and $\widehat\tau^{\,2}_{\scriptscriptstyle Q}(\boldsymbol{x}_o)$ when $N_{\mathrm{sim}} = 1{,}000$ for 3 test input values under Scenario 1 and regular sampling. Dashed line indicates mean $\tau^{\,2}_{\scriptscriptstyle\mathrm{Q}}(\boldsymbol{x}_{o})$ over $N_{\mathrm{sim}}$.}
    \label{fig:mspe_correct} 
\end{figure}

Under mild covariance misspecification, $\widehat\tau^{\,2}_{\scriptscriptstyle 2WW}  (\boldsymbol{x}_o)$ still produces intervals with good coverage while $\widehat\tau^{\,2}_{\scriptscriptstyle K_{\theta}}  (\boldsymbol{x}_o)$ and $\widehat\tau^{\,2}_{\scriptscriptstyle WW}  (\boldsymbol{x}_o)$ begin to underperform (Table \ref{tab:mild}). $\widehat\tau^{\,2}_{\scriptscriptstyle emp}  (\boldsymbol{x}_o)$ still produces intervals that are longer than necessary,  while the coverage of $\widehat\tau^{\,2}_{\scriptscriptstyle Q}  (\boldsymbol{x}_o)$ is closest to nominal. 

\begin{table}[H]
\centering
\caption{Coverage probability of 95\% prediction intervals under mild covariance misspecification ($\nu = 5/2$) and regular sampling. The average length of the prediction intervals is included in parentheses.}
\label{tab:mild}
\begin{tabular}{l l ccccc}
\toprule
$n$ & Location & $\widehat\tau^{\,2}_{\scriptscriptstyle\mathrm{K}_{\theta}}$& $\widehat\tau^{\,2}_{\scriptscriptstyle WW}$& $\widehat\tau^{\,2}_{\scriptscriptstyle 2WW}$& $\widehat\tau^{\,2}_{emp}$& $\widehat\tau^{\,2}_{\scriptscriptstyle Q}$ \\
\midrule
\multirow{3}{*}{36}
 & (0.51,0.51) & 0.84 (1.77) & 0.89 (2.03) & 0.91 (2.25) & 1.00 (4.80) & 0.93 (2.36) \\
 & (0.20,0.10) & 0.90 (2.01) & 0.93 (2.25) & 0.94 (2.45) & 1.00 (4.89) & 0.95 (2.65) \\
 & (0.90,0.74) & 0.89 (1.98) & 0.93 (2.22) & 0.94 (2.43) & 1.00 (4.85) & 0.95 (2.61) \\[4pt]

\multirow{3}{*}{49}
 & (0.51,0.51) & 0.84 (1.60) & 0.89 (1.79) & 0.91 (1.95) & 1.00 (4.18) & 0.94 (2.20) \\
 & (0.20,0.10) & 0.90 (1.80) & 0.92 (1.93) & 0.94 (2.05) & 1.00 (4.37) & 0.94 (2.29) \\
 & (0.90,0.74) & 0.90 (1.79) & 0.92 (1.91) & 0.93 (2.02) & 1.00 (4.39) & 0.94 (2.32) \\[4pt]

\multirow{3}{*}{100}
 & (0.51,0.51) & 0.88 (1.27) & 0.89 (1.33) & 0.89 (1.39) & 1.00 (3.61) & 0.96 (1.88) \\
 & (0.20,0.10) & 0.91 (1.41) & 0.93 (1.45) & 0.94 (1.49) & 1.00 (3.75) & 0.97 (1.92) \\
 & (0.90,0.74) & 0.91 (1.39) & 0.92 (1.43) & 0.92 (1.47) & 1.00 (3.75) & 0.97 (1.93) \\
\bottomrule
\end{tabular}
\end{table}
The per-simulation log-ratio, $log(\widehat\tau^{\,2}_{\scriptscriptstyle a}  (\boldsymbol{x}_o)/\bar{\tau}^2  (\boldsymbol{x}_o))$ where the denominator is averaged over all simulations, is displayed in Figure \ref{fig:mspe_ratio_boxplot_mild_matern52_log}. The distribution of the log-ratio for $\widehat\tau^{\,2}_{\scriptscriptstyle emp}(\boldsymbol{x}_o)$ is always over zero. We can also see that $\widehat\tau^{\,2}_{\scriptscriptstyle K_{\theta}}(\boldsymbol{x}_o)$ has started to underestimate $\tau^{\,2}_{\scriptscriptstyle Q}(\boldsymbol{x}_o)$ on average. $\widehat\tau^{\,2}_{\scriptscriptstyle 2WW}  (\boldsymbol{x}_o)$ and $\widehat\tau^{\,2}_{\scriptscriptstyle Q}  (\boldsymbol{x}_o)$ perform best 
but the distribution of the log-ratio of $\widehat\tau^{\,2}_{\scriptscriptstyle 2WW}  (\boldsymbol{x}_o)$ shifts down with larger $n$ while the distribution of the log-ratio of $\widehat\tau^{\,2}_{\scriptscriptstyle Q}  (\boldsymbol{x}_o)$ shifts up.

As the severity of covariance misspecification becomes moderate to severe, the interval coverage of $\widehat\tau^{\,2}_{\scriptscriptstyle K_{\theta}}  (\boldsymbol{x}_o)$, $\widehat\tau^{\,2}_{\scriptscriptstyle WW}  (\boldsymbol{x}_o)$, and $\widehat\tau^{\,2}_{\scriptscriptstyle 2WW}  (\boldsymbol{x}_o)$ deteriorates further, yet the coverage of $\widehat\tau^{\,2}_{\scriptscriptstyle Q}  (\boldsymbol{x}_o)$ (Table \ref{tab:moderate}-\ref{tab:severe}) remains closest to nominal. When $n=100$, for the moderate scenario at prediction inputs $(0.51, 0.51), (0.20, 0.10)$ and $(0.90,0.74)$ the mean true $\tau^{\,2}_{\scriptscriptstyle Q}  (\boldsymbol{x}_o)$ are 0.28, 0.29, and 0.26 respectively. Mean $\widehat\tau^{\,2}_{\scriptscriptstyle Q}  (\boldsymbol{x}_o)$ are 0.26, 0.28, and 0.27 while mean $\widehat\tau^{\,2}_{\scriptscriptstyle 2WW}  (\boldsymbol{x}_o)$ is no less than 28\% smaller than mean true $\tau^{\,2}_{\scriptscriptstyle Q}  (\boldsymbol{x}_o)$.

\begin{figure}[H] 
    \begin{center}
    \includegraphics[width=0.9\textwidth]{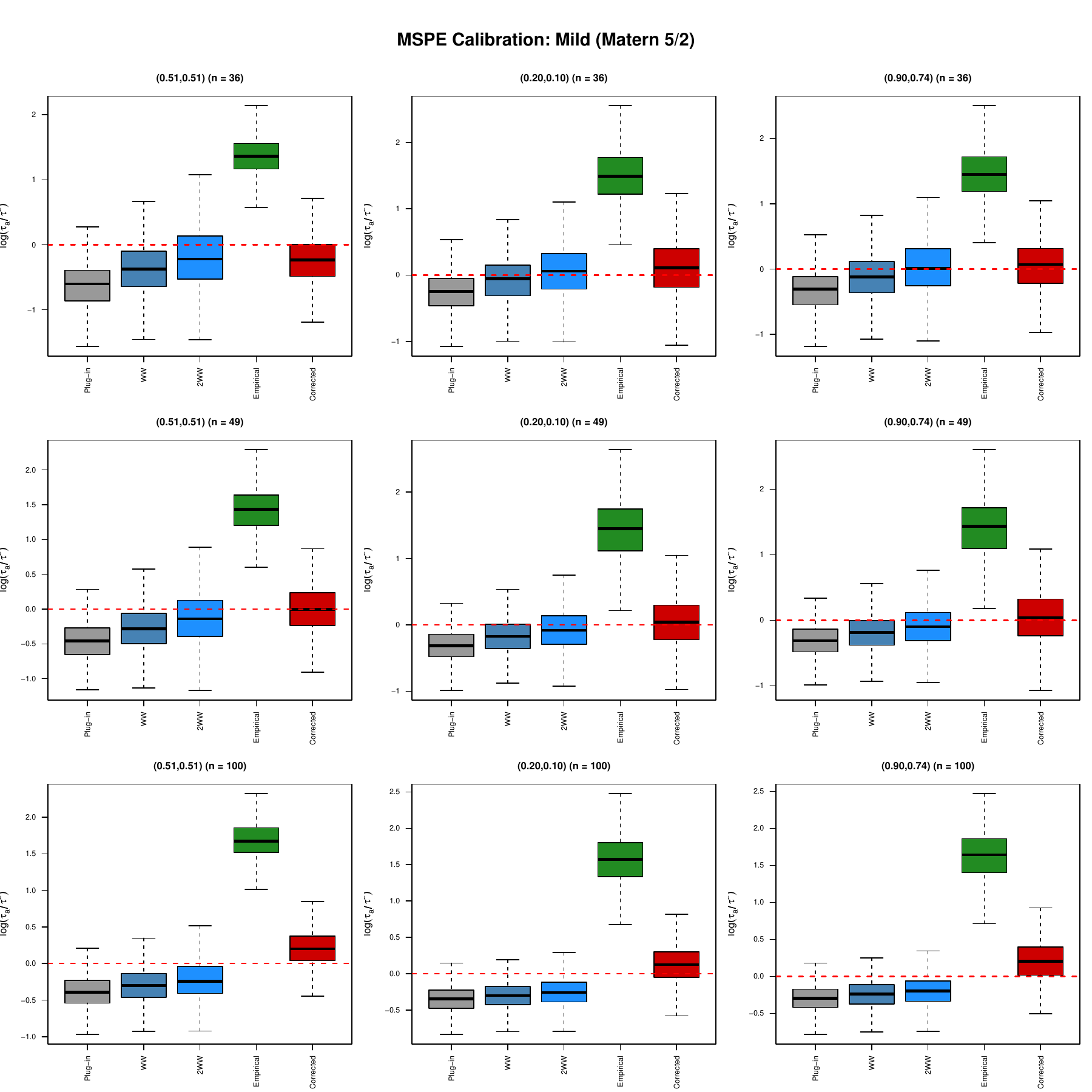}
    \end{center}
    \caption{Distribution of $log(\widehat\tau^{\,2}_{\scriptscriptstyle a}  (\boldsymbol{x}_o)/\bar{\tau}_{\scriptscriptstyle Q}^2  (\boldsymbol{x}_o))$ for quasi-EBLUP MSPE estimator $a$, where $\bar{\tau}_{\scriptscriptstyle Q}^2  (\boldsymbol{x}_o)$ is the simulation average, for 3 test input values under mild misspecification scenario and regular sampling. Dashed line corresponds to $\bar{\tau}^{\,2}_{\scriptscriptstyle\mathrm{Q}}(\boldsymbol{x}_{o})$.}
    \label{fig:mspe_ratio_boxplot_mild_matern52_log} 
\end{figure}

Also of notice is that coverage probability decreases consistently at the interior prediction input under smoothness misspecification, across all five estimators, with the gap between interior and near-boundary coverage growing with $\nu$ misspecification severity. This reflects a structural property of the working-misspecification interaction rather than an artifact of any specific estimator. Specifically, the working squared exponential covariance produces location-adaptive plug-in variances that partially compensate for $\nu$ misspecification at near-boundary locations (where its baseline BLUP variance is already large) but not at interior locations (where its baseline BLUP variance is small). The true misspecification ratio is therefore largest in the interior, and any estimator producing approximately location-uniform calibration ratios will under-cover most there. This interpretation is consistent with the classical kriging literature showing that prediction uncertainty and screening behavior differ systematically between interior and boundary locations under spatial dependence and covariance misspecification \citep{stein1988asymptotically,stein1990uniform,stein2002screening,cressie93}. Under $\nu$ misspecification, the corrected estimator's relative advantage over the plug-in and Wang-Wall estimators is largest at the interior, even as its absolute coverage is lowest there, reflecting the greater sensitivity of interior prediction variances to covariance smoothness misspecification under fixed-domain asymptotics \citep{zhang04a,bachoc18, bevilacqua2019}.

\begin{table}[H]
\centering
\caption{Coverage probability of 95\% prediction intervals under moderate covariance misspecification ($\nu = 3/2$) and regular sampling. The average length of the prediction intervals is included in parentheses.}
\label{tab:moderate}
\begin{tabular}{l l ccccc}
\toprule
$n$ & Location & $\widehat\tau^{\,2}_{\scriptscriptstyle\mathrm{K}_{\theta}}$& $\widehat\tau^{\,2}_{\scriptscriptstyle WW}$& $\widehat\tau^{\,2}_{\scriptscriptstyle 2WW}$& $\widehat\tau^{\,2}_{emp}$& $\widehat\tau^{\,2}_{\scriptscriptstyle Q}$ \\
\midrule
\multirow{3}{*}{36}
 & (0.51,0.51) & 0.76 (1.97) & 0.83 (2.30) & 0.87 (2.56) & 1.00 (5.50) & 0.88 (2.68) \\
 & (0.20,0.10) & 0.85 (2.23) & 0.90 (2.54) & 0.92 (2.79) & 1.00 (5.56) & 0.93 (2.99) \\
 & (0.90,0.74) & 0.86 (2.18) & 0.90 (2.50) & 0.93 (2.77) & 1.00 (5.56) & 0.94 (2.97) \\[4pt]

\multirow{3}{*}{49}
 & (0.51,0.51) & 0.79 (1.79) & 0.84 (2.03) & 0.88 (2.23) & 1.00 (4.83) & 0.92 (2.53) \\
 & (0.20,0.10) & 0.88 (2.01) & 0.90 (2.18) & 0.92 (2.33) & 1.00 (5.01) & 0.93 (2.63) \\
 & (0.90,0.74) & 0.86 (2.00) & 0.88 (2.16) & 0.91 (2.30) & 1.00 (5.02) & 0.93 (2.65) \\[4pt]

\multirow{3}{*}{100}
 & (0.51,0.51) & 0.82 (1.48) & 0.84 (1.55) & 0.85 (1.61) & 1.00 (4.06) & 0.94 (2.14) \\
 & (0.20,0.10) & 0.86 (1.60) & 0.87 (1.65) & 0.88 (1.70) & 1.00 (4.20) & 0.95 (2.15) \\
 & (0.90,0.74) & 0.86 (1.59) & 0.86 (1.64) & 0.87 (1.68) & 1.00 (4.20) & 0.95 (2.16) \\
\bottomrule
\end{tabular}
\end{table}

\begin{table}[H]
\centering
\caption{Coverage probability of 95\% prediction intervals under severe covariance misspecification ($\nu=1/2$) and regular sampling. The average length of the prediction intervals is included in parentheses.}
\label{tab:severe}
\begin{tabular}{l l ccccc}
\toprule
$n$ & Location & $\widehat\tau^{\,2}_{\scriptscriptstyle\mathrm{K}_{\theta}}$& $\widehat\tau^{\,2}_{\scriptscriptstyle WW}$& $\widehat\tau^{\,2}_{\scriptscriptstyle 2WW}$& $\widehat\tau^{\,2}_{emp}$& $\widehat\tau^{\,2}_{\scriptscriptstyle Q}$ \\
\midrule
\multirow{3}{*}{36}
 & (0.51,0.51) & 0.62 (2.49) & 0.70 (3.07) & 0.76 (3.52) & 0.99 (7.99) & 0.83 (3.97) \\
 & (0.20,0.10) & 0.67 (2.83) & 0.77 (3.42) & 0.81 (3.89) & 0.99 (7.96) & 0.87 (4.65) \\
 & (0.90,0.74) & 0.75 (2.75) & 0.83 (3.40) & 0.88 (3.90) & 1.00 (7.99) & 0.92 (4.58) \\[4pt]

\multirow{3}{*}{49}
 & (0.51,0.51) & 0.69 (2.38) & 0.78 (2.90) & 0.83 (3.31) & 1.00 (7.43) & 0.89 (3.97) \\
 & (0.20,0.10) & 0.73 (2.68) & 0.79 (3.10) & 0.83 (3.44) & 0.99 (7.42) & 0.89 (4.19) \\
 & (0.90,0.74) & 0.71 (2.67) & 0.77 (3.07) & 0.82 (3.40) & 1.00 (7.51) & 0.89 (4.23) \\[4pt]

\multirow{3}{*}{100}
 & (0.51,0.51) & 0.63 (2.11) & 0.66 (2.28) & 0.69 (2.43) & 0.99 (6.63) & 0.86 (3.65) \\
 & (0.20,0.10) & 0.69 (2.27) & 0.72 (2.39) & 0.74 (2.50) & 0.99 (6.72) & 0.88 (3.69) \\
 & (0.90,0.74) & 0.69 (2.26) & 0.72 (2.39) & 0.73 (2.50) & 0.99 (6.70) & 0.88 (3.71) \\
\bottomrule
\end{tabular}
\end{table}

\subsection{Simulation results at unobserved inputs; LHD}

$\widehat\tau^{\,2}_{\scriptscriptstyle Q}(\boldsymbol{x}_o)$ is fundamentally a local geometric correction method that depends on variation in the neighborhood structure, local prediction difficulty,
and local covariance mismatch. Regular grids partially suppress much of this variation that the calibration ratio is intended to detect.

Irregular-design simulations are closer to the asymptotic framework motivating Proposition \ref{prop:smooth_mspe} and Theorem~\ref{thm:infill-weight-discrepancy}.

\begin{table}[H]
\centering
\caption{Coverage probability of 95\% prediction intervals of
five estimators of $\tau^{\,2}_{\scriptscriptstyle\mathrm{Q}}
(\boldsymbol{x}_o)$ when the covariance is correctly specified (SE) and sampling follows a Latin hypercube design. The average length of the prediction intervals is included in parentheses.}
\label{tab:lhs_correct_se}
\begin{tabular}{l l ccccc}
\toprule
$n$ & Location & $\widehat\tau^{\,2}_{\scriptscriptstyle K_{\theta}}  $& $\widehat\tau^{\,2}_{\scriptscriptstyle WW} $& $\widehat\tau^{\,2}_{\scriptscriptstyle 2WW} $& $\widehat\tau^{\,2}_{\scriptscriptstyle emp} $& $\widehat\tau^{\,2}_{\scriptscriptstyle Q} $ \\
\midrule
\multirow{3}{*}{36}
 & (0.51,0.51) & 0.89 (1.30) & 0.92 (1.48) & 0.95 (1.63) & 1.00 (4.19) & 0.98 (1.92) \\
 & (0.20,0.10) & 0.92 (1.63) & 0.94 (1.78) & 0.94 (1.92) & 1.00 (4.37) & 0.97 (2.28) \\
 & (0.90,0.74) & 0.92 (1.93) & 0.95 (2.11) & 0.96 (2.27) & 1.00 (4.37) & 0.98 (2.74) \\[4pt]

\multirow{3}{*}{49}
 & (0.51,0.51) & 0.94 (1.29) & 0.95 (1.43) & 0.96 (1.56) & 1.00 (4.01) & 0.99 (1.98) \\
 & (0.20,0.10) & 0.92 (1.71) & 0.94 (1.83) & 0.95 (1.94) & 1.00 (4.14) & 0.99 (2.47) \\
 & (0.90,0.74) & 0.93 (1.54) & 0.94 (1.65) & 0.95 (1.75) & 1.00 (4.17) & 0.98 (2.23) \\[4pt]

\multirow{3}{*}{100}
 & (0.51,0.51) & 0.93 (0.97) & 0.94 (1.03) & 0.95 (1.08) & 1.00 (3.75) & 1.00 (1.66) \\
 & (0.20,0.10) & 0.93 (1.08) & 0.94 (1.11) & 0.95 (1.14) & 1.00 (3.88) & 1.00 (1.75) \\
 & (0.90,0.74) & 0.94 (1.17) & 0.95 (1.21) & 0.96 (1.25) & 1.00 (3.83) & 1.00 (1.89) \\
\bottomrule
\end{tabular}
\end{table}

Under Latin hypercube sampling, $\widehat\tau^{\,2}_{\scriptscriptstyle Q}(\boldsymbol{x}_o)$ outperforms other estimators in the case of moderate and severe $\nu$ misspecification (see Tables \ref{tab:lhs_correct_se}-\ref{tab:lhs_severe}). Even as design becomes denser with increasing $n$, the coverage of $\widehat\tau^{\,2}_{\scriptscriptstyle K_{\theta}}  (\boldsymbol{x}_o),\widehat\tau^{\,2}_{\scriptscriptstyle WW} (\boldsymbol{x}_o)$, and $\widehat\tau^{\,2}_{\scriptscriptstyle 2WW}(\boldsymbol{x}_o)$ does not improve. This result, which is also seen with regular sampling, is directly related to Theorem~\ref{thm:infill-weight-discrepancy}. The $\nu$ misspecification signal is spatially smooth,
but locally heterogeneous. $\widehat\tau^{\,2}_{\scriptscriptstyle Q}(\boldsymbol{x}_o)$ acts as a localized diagnostic of covariance mismatch.

Just as with regular sampling, under $\nu$ misspecification the relative advantage of $\widehat\tau^{\,2}_{\scriptscriptstyle Q}(\boldsymbol{x}_o)$ over $\widehat\tau^{\,2}_{\scriptscriptstyle K_{\theta}}(\boldsymbol{x}_o)$ and the Wang–Wall estimators is largest at the interior prediction input, reflecting a population calibration ratio that is largest there and nearly identical across the two sampling designs ($\approx\!4.36$
under LHD versus $\approx\!4.34$ under regular sampling at (0.51,0.51) when $n=100$ and misspecification is severe) despite its constituent plug-in and true-MSPE values differing between designs. This confirms that the effect is structural rather than a design artifact.

\begin{table}[H]
\centering
\caption{Coverage probability of 95\% prediction intervals under mild covariance misspecification ($\nu = 5/2$) with Latin hypercube sampling. The average length of the prediction intervals is included in parentheses.}
\label{tab:lhs_mild}
\begin{tabular}{l l ccccc}
\toprule
$n$ & Location & $\widehat\tau^{\,2}_{\scriptscriptstyle K_{\theta}}$& $\widehat\tau^{\,2}_{\scriptscriptstyle WW}$& $\widehat\tau^{\,2}_{\scriptscriptstyle 2WW}$& $\widehat\tau^{\,2}_{\scriptscriptstyle emp}$& $\widehat\tau^{\,2}_{\scriptscriptstyle Q}$ \\
\midrule
\multirow{3}{*}{36}
 & (0.51,0.51) & 0.85 (1.57) & 0.90 (1.81) & 0.92 (2.02) & 1.00 (4.87) & 0.95 (2.22) \\
 & (0.20,0.10) & 0.88 (1.94) & 0.93 (2.15) & 0.94 (2.33) & 1.00 (5.14) & 0.96 (2.61) \\
 & (0.90,0.74) & 0.88 (2.38) & 0.91 (2.60) & 0.93 (2.80) & 0.99 (5.00) & 0.96 (3.22) \\[4pt]

\multirow{3}{*}{49}
 & (0.51,0.51) & 0.86 (1.66) & 0.89 (1.85) & 0.91 (2.01) & 1.00 (4.57) & 0.96 (2.41) \\
 & (0.20,0.10) & 0.90 (2.08) & 0.92 (2.23) & 0.93 (2.37) & 1.00 (4.78) & 0.96 (2.88) \\
 & (0.90,0.74) & 0.89 (1.86) & 0.90 (2.00) & 0.92 (2.11) & 1.00 (4.79) & 0.96 (2.55) \\[4pt]

\multirow{3}{*}{100}
 & (0.51,0.51) & 0.84 (1.29) & 0.86 (1.35) & 0.88 (1.41) & 1.00 (4.09) & 0.97 (2.06) \\
 & (0.20,0.10) & 0.92 (1.28) & 0.93 (1.31) & 0.93 (1.34) & 1.00 (4.25) & 0.98 (1.93) \\
 & (0.90,0.74) & 0.91 (1.42) & 0.92 (1.47) & 0.92 (1.50) & 1.00 (4.21) & 0.99 (2.14) \\
\bottomrule
\end{tabular}
\end{table}

\begin{table}[H]
\centering
\caption{Coverage probability of 95\% prediction intervals under moderate covariance misspecification ($\nu = 3/2$) with Latin hypercube sampling. The average length of the prediction intervals is included in parentheses.}
\label{tab:lhs_moderate}
\begin{tabular}{l l ccccc}
\toprule
$n$ & Location & $\widehat\tau^{\,2}_{\scriptscriptstyle K_{\theta}}$& $\widehat\tau^{\,2}_{\scriptscriptstyle WW}$& $\widehat\tau^{\,2}_{\scriptscriptstyle 2WW}$& $\widehat\tau^{\,2}_{\scriptscriptstyle emp}$& $\widehat\tau^{\,2}_{\scriptscriptstyle Q}$ \\
\midrule
\multirow{3}{*}{36}
 & (0.51,0.51) & 0.80 (1.80) & 0.86 (2.11) & 0.90 (2.36) & 1.00 (5.51) & 0.91 (2.52) \\
 & (0.20,0.10) & 0.87 (2.21) & 0.91 (2.49) & 0.93 (2.72) & 1.00 (5.67) & 0.95 (2.96) \\
 & (0.90,0.74) & 0.84 (2.66) & 0.89 (2.97) & 0.92 (3.23) & 0.99 (5.62) & 0.94 (3.61) \\[4pt]

\multirow{3}{*}{49}
 & (0.51,0.51) & 0.79 (1.89) & 0.83 (2.13) & 0.86 (2.34) & 1.00 (5.09) & 0.92 (2.70) \\
 & (0.20,0.10) & 0.87 (2.29) & 0.89 (2.47) & 0.91 (2.64) & 1.00 (5.32) & 0.94 (3.15) \\
 & (0.90,0.74) & 0.87 (2.07) & 0.89 (2.23) & 0.91 (2.38) & 1.00 (5.35) & 0.95 (2.82) \\[4pt]

\multirow{3}{*}{100}
 & (0.51,0.51) & 0.77 (1.54) & 0.80 (1.60) & 0.81 (1.67) & 1.00 (4.47) & 0.94 (2.36) \\
 & (0.20,0.10) & 0.89 (1.44) & 0.90 (1.48) & 0.90 (1.51) & 1.00 (4.62) & 0.97 (2.11) \\
 & (0.90,0.74) & 0.88 (1.63) & 0.88 (1.67) & 0.90 (1.72) & 1.00 (4.61) & 0.97 (2.37) \\
\bottomrule
\end{tabular}
\end{table}

\begin{table}[H]
\centering
\caption{Coverage probability of 95\% prediction intervals under severe covariance misspecification ($\nu = 1/2$) with Latin hypercube sampling. The average length of the prediction intervals is included in parentheses.}
\label{tab:lhs_severe}
\begin{tabular}{l l ccccc}
\toprule
$n$ & Location & $\widehat\tau^{\,2}_{\scriptscriptstyle K_{\theta}}$& $\widehat\tau^{\,2}_{\scriptscriptstyle WW}$& $\widehat\tau^{\,2}_{\scriptscriptstyle 2WW}$& $\widehat\tau^{\,2}_{\scriptscriptstyle emp}$& $\widehat\tau^{\,2}_{\scriptscriptstyle Q}$ \\
\midrule
\multirow{3}{*}{36}
 & (0.51,0.51) & 0.64 (2.30) & 0.75 (2.88) & 0.80 (3.33) & 0.99 (7.39) & 0.80 (3.30) \\
 & (0.20,0.10) & 0.69 (2.82) & 0.77 (3.39) & 0.83 (3.84) & 0.99 (7.45) & 0.83 (3.93) \\
 & (0.90,0.74) & 0.68 (3.17) & 0.78 (3.73) & 0.81 (4.18) & 0.98 (7.45) & 0.84 (4.43) \\[4pt]
\multirow{3}{*}{49}
 & (0.51,0.51) & 0.66 (2.56) & 0.74 (3.02) & 0.78 (3.40) & 0.99 (7.05) & 0.82 (3.66) \\
 & (0.20,0.10) & 0.76 (2.91) & 0.82 (3.34) & 0.86 (3.70) & 0.99 (7.16) & 0.89 (4.09) \\
 & (0.90,0.74) & 0.70 (2.75) & 0.77 (3.11) & 0.80 (3.42) & 0.99 (7.29) & 0.85 (3.83) \\[4pt]
\multirow{3}{*}{100}
 & (0.51,0.51) & 0.63 (2.29) & 0.66 (2.44) & 0.68 (2.57) & 0.98 (6.34) & 0.81 (3.38) \\
 & (0.20,0.10) & 0.68 (2.07) & 0.71 (2.17) & 0.73 (2.27) & 1.00 (6.45) & 0.83 (2.98) \\
 & (0.90,0.74) & 0.71 (2.34) & 0.73 (2.46) & 0.76 (2.57) & 0.99 (6.47) & 0.86 (3.34) \\
\bottomrule
\end{tabular}
\end{table}

\section{Discussion}\label{sec:discussion}

In this paper, we demonstrate how model-based estimators of $\tau^{\,2}_{\scriptscriptstyle\mathrm{Q}}(\boldsymbol{x})$ at unobserved inputs are unable to capture variability due to covariance smoothness misspecification. We derived a new estimator $\widehat\tau^{\,2}_{\scriptscriptstyle\mathrm{Q}}(\boldsymbol{x})$ that exploits the discrepancy between a model-based and a cross-validation-based estimator to account for covariance function uncertainty, and showed that it outperforms competing estimators as misspecification worsens. Prediction intervals obtained with $\widehat\tau^{\,2}_{\scriptscriptstyle\mathrm{Q}}(\boldsymbol{x})$ show coverage closer to nominal than other methods, with the gap widening as the smoothness gap between the working and true covariance grows. $\widehat\tau^{\,2}_{\scriptscriptstyle\mathrm{Q}}(\boldsymbol{x})$ performed particularly well relative to competitors at an interior prediction point.

Although $\widehat\tau^{\,2}_{\scriptscriptstyle\mathrm{Q}}(\boldsymbol{x})$ overcorrects when the working covariance smoothness is exactly correct, particularly at larger $n$, this overcorrection has a structural source that warrants discussion. $\widehat\tau^{\,2}_{\scriptscriptstyle\mathrm{emp}}(\boldsymbol{x}_i)$ is computed from $K$-fold cross-validation, in which each held-out predictor is trained on $n(K-1)/K$ observations using fold-re-estimated parameters. This estimator targets the MSPE of the fold-trained quasi-EBLUP, which is strictly larger than the MSPE of the full-data quasi-EBLUP that $\widehat\tau^{\,2}_{\scriptscriptstyle\mathrm{Q}}(\boldsymbol{x}_o)$ ultimately corrects: the held-out predictor sees fewer observations and uses noisier parameter estimates, both of which inflate prediction error in a manner that does not vanish in the calibration ratio because the denominator $\bar v(\boldsymbol{x}_i)$ is also computed under fold-specific parameters and does not absorb the same inflation. Consequently, even under correct specification, the pointwise calibration ratios at observed inputs lie systematically above one, and the kernel-smoothed $\widehat r_W(\boldsymbol{x}_o)$ inherits this offset. Theorem~\ref{thm:mspe_expansion} provides a related theoretical reference: it isolates the parameter-uncertainty contribution to MSPE under correct specification at order $n^{-1}$, and the offset we observe is empirically consistent with this contribution being absorbed into the calibration ratio rather than vanishing from it.

In practice, working covariance functions are rarely exactly correct. The squared exponential, Mat\'ern, and exponential families used routinely in geostatistics and machine learning are at best approximations, and the smoothness parameter governing differentiability of sample paths is notoriously weakly identified \citep{stein99,zhang04a, KaufmanShaby2013}. The practitioner who uses our corrected estimator under what they believe to be a correctly specified model is, in nearly all applications, using it under at least mild misspecification, where its overcorrection is more modest and its coverage already approaches nominal. The cost of using $\widehat\tau^{\,2}_{\scriptscriptstyle\mathrm{Q}}(\boldsymbol{x})$ when the model happens to be exactly correct is wider intervals than necessary; the cost of not using it when the model is meaningfully misspecified is intervals that under-cover by a substantial margin, as the simulation results in Section~\ref{sec:simulations} document. Among realistic covariance specifications encountered in applications, the latter cost is the more common and the more consequential.

For LHD designs, $\widehat\tau^{\,2}_{\scriptscriptstyle\mathrm{Q}}(\boldsymbol{x})$ outperformed other estimators when covariance misspecification was at least mild, although it tended to overcorrect when $n=100$ unless misspecification was severe. Regardless of input sampling design, $\widehat\tau^{\,2}_{\scriptscriptstyle\mathrm{Q}}(\boldsymbol{x})$ had the best prediction interval coverage, although it remained markedly below nominal under severe misspecification.

Distribution-free methods for prediction-interval construction, including conformal prediction \citep{vovk2005algorithmic, lei2018distribution} and jackknife+ \citep{barber2021predictive}, are not included in this comparison because they target a different object: they construct intervals directly without estimating $\tau^{\,2}_{\scriptscriptstyle\mathrm{Q}}(\boldsymbol{x})$, and consequently do not provide the decomposition into BLUP variance, parameter-uncertainty, and misspecification components that underpins the present analysis. Their coverage guarantees, moreover, rest on exchangeability of the training and test data, which fails under the fixed-design, fixed-prediction-location setting standard in geostatistics and computer experiments, and the guarantees that do apply are marginal rather than conditional on the prediction location.

The simulation study corresponds to infill asymptotics in the sense of \citet{stein99}. Theorem~\ref{thm:infill-weight-discrepancy} characterizes the asymptotic order of the misspecification excess under this regime: when the working and true Gaussian measures are non-equivalent on the domain, the squared weight discrepancy converges to a strictly positive constant rather than vanishing. The squared exponential--Mat\'ern pairs in our simulation produce non-equivalent measures in $d \le 3$, which is consistent with the persistent coverage gap that $\widehat\tau^{\,2}_{\scriptscriptstyle K_{\theta}}(\boldsymbol{x}_o)$, $\widehat\tau^{\,2}_{\scriptscriptstyle WW}(\boldsymbol{x}_o)$, and $\widehat\tau^{\,2}_{\scriptscriptstyle 2WW}(\boldsymbol{x}_o)$ exhibit at $n=100$ under moderate and severe misspecification, and with the relative robustness of $\widehat\tau^{\,2}_{\scriptscriptstyle\mathrm{Q}}(\boldsymbol{x})$ in the same regime.

The behavior under increasing-domain asymptotics would differ from what our simulation reflects \citep{bachoc2014asymptotic,bachoc18}. Under increasing-domain asymptotics, all covariance parameters are typically consistently estimable, the quasi-true parameter under misspecification is well-defined, and the misspecification excess is governed by the gap between the quasi-true and true covariances at scales sampled by the design. Whether the corrected estimator's relative performance gain over $\widehat\tau^{\,2}_{\scriptscriptstyle 2WW}(\boldsymbol{x}_o)$ persists under increasing-domain sampling is an open question. Heuristically, the calibration ratio's signal in the increasing-domain regime should reflect a less attenuated misspecification effect than in the infill regime because the held-out cross-validation residuals are evaluated at larger nearest-neighbor distances and thus more directly probe the disagreement between the working and true covariances at scales the working model cannot match. We expect $\widehat\tau^{\,2}_{\scriptscriptstyle\mathrm{Q}}(\boldsymbol{x})$ to remain advantageous in this regime, but a formal study is left for future work.

Unlike $\widehat\tau^{\,2}_{\scriptscriptstyle WW}$ and $\widehat\tau^{\,2}_{\scriptscriptstyle 2WW}$, $\widehat\tau^{\,2}_{\scriptscriptstyle\mathrm{Q}}(\boldsymbol{x})$ does not require parametric bootstrapping under the working model and therefore is not structurally blind to misspecification of the covariance family itself. Its construction is computationally tractable, and its bandwidth and winsorization bounds are tunable, but have defensible defaults supported by Proposition~\ref{prop:var_cct} and Proposition~\ref{prop:cctbias}.

The limitations of $\widehat\tau^{\,2}_{\scriptscriptstyle\mathrm{Q}}(\boldsymbol{x})$ are: \emph{(i)} under severe $\nu$ misspecification it under-corrects relative to the true population calibration ratio at prediction inputs, because the $K$-fold empirical anchor at observed inputs reflects an attenuated version of the misspecification effect that is then propagated through the kernel smoother to the prediction inputs; \emph{(ii)} the kernel-smoother bandwidth is set by a $k$-nearest-neighbor rule that responds to design density but not to the working model's covariance scale, which may be suboptimal when the calibration ratio varies on a scale finer than the smoother resolves; \emph{(iii)} the construction is currently developed for $d \le 3$, where Gaussian kernel smoothing is well-behaved.

Several refinements warrant investigation. First, the overcorrection under correct specification could be addressed by subtracting an estimate of the parameter-uncertainty offset from the empirical numerator $\widehat\tau^{\,2}_{\scriptscriptstyle\mathrm{emp}}(\boldsymbol{x}_i)$ before forming the calibration ratio. The offset must be applied conditionally: if it is subtracted under all specifications, it attenuates the legitimate misspecification signal under misspecified working models. A pretest based on a goodness-of-fit diagnostic that distinguishes correct from misspecified working models, applied to decide whether the offset is subtracted, would address overcorrection under correct specification without degrading coverage under misspecification. Setting the pretest threshold from a calibration simulation under correctly specified models, separately from the simulation that evaluates the resulting estimator, would render the choice principled rather than data-dredged.

Alternatively, one could replace $\bar v(\boldsymbol{x}_i)$ in the calibration-ratio with $\widehat\tau^{\,2}_{\scriptscriptstyle 2WW}$ \citep{wangwall03} when the working covariance is plausibly correct, and with the plug-in otherwise. This offers a complementary refinement that uses machinery already present in our framework. Like the offset subtraction, it would address overcorrection under correct specification without altering behavior under severe misspecification, and it carries the conceptual virtue of using a denominator that already incorporates parameter-uncertainty correction in the regime where parameter uncertainty is the dominant remaining bias.

Also, the kernel smoother could be replaced by a local polynomial or a Gaussian-process predictor of $\log \widehat r_W(\boldsymbol{x}_i)$ as a function of $\boldsymbol{x}_i$, which would inherit a covariance scale from the calibration-ratio data themselves and adapt automatically to input dependence of the misspecification signal. The smoothness of the population calibration ratio established in Theorem~\ref{thm:infill-weight-discrepancy} supports the latter choice. Whether the GP predictor changes the bottom line empirically is testable using the observed-input quantities our simulation already records.

Finally, extension of the construction to non-Gaussian random fields and to higher-dimensional inputs $d>3$, where alternative nonparametric smoothers may be required, would broaden the estimator's applicability. The non-Gaussian case is theoretically immediate. The cross-validation construction does not rely on Gaussianity, and the covariance term in \eqref{eq:estkrigmspe} that biases the model-based competitors does not affect the calibration-ratio component, but its empirical behavior merits study.


%% file: appendix.tex
\begin{appendices}

\section{Mathematical Proofs}
In this appendix presents the proofs of all mathematical results in the paper.
\subsection{Proof of Proposition \ref{prop:mspe_misspec}}
\begin{proof}
We expand the squared error:
\begin{align*}
  E_*\!\big[(W(\boldsymbol{x}_{o}) - \bm{\lambda}_\theta'\boldsymbol{Y})^2\big]
  &= E_*\!\big[W(\boldsymbol{x}_{o})^2\big]
     - 2\,\bm{\lambda}_\theta'\,E_*\!\big[\boldsymbol{Y}\,W(\boldsymbol{x}_{o})\big]
     + \bm{\lambda}_\theta'\,E_*\!\big[\boldsymbol{Y}\boldsymbol{Y}'\big]\,
       \bm{\lambda}_\theta.
\end{align*}
Under the true process (zero mean):
$E_*[W(\boldsymbol{x}_{o})^2] = \mathcal{K}_*(\boldsymbol{x}_{o},\boldsymbol{x}_{o})$;\;
$E_*[\boldsymbol{Y}\,W(\boldsymbol{x}_{o})] = \bm{k}_*(\boldsymbol{x}_{o})$
(since $\epsilon$ is independent of $\boldsymbol{W}$);\;
$E_*[\boldsymbol{Y}\boldsymbol{Y}'] = \Sigma_*$.
Substitution gives~\eqref{eq:mspe_misspec}.
\end{proof}

\subsection{Proof of Proposition \ref{prop:decomp}}
\begin{proof}
Expand the second term:
\begin{align*}
  (\bm{\lambda}_\theta - \bm{\lambda}_*)'\,\Sigma_*\,(\bm{\lambda}_\theta - \bm{\lambda}_*)
  &= \bm{\lambda}_\theta'\Sigma_*\bm{\lambda}_\theta
     - 2\,\bm{\lambda}_*'\Sigma_*\bm{\lambda}_\theta
     + \bm{\lambda}_*'\Sigma_*\bm{\lambda}_*.
\end{align*}
Since $\bm{\lambda}_*' \Sigma_* = \bm{k}_*(\boldsymbol{x}_{o})'\Sigma_*^{-1}\Sigma_*
= \bm{k}_*(\boldsymbol{x}_{o})'$, this becomes
\[
  \bm{\lambda}_\theta'\Sigma_*\bm{\lambda}_\theta
  - 2\,\bm{k}_*(\boldsymbol{x}_{o})'\bm{\lambda}_\theta
  + \bm{k}_*(\boldsymbol{x}_{o})'\Sigma_*^{-1}\bm{k}_*(\boldsymbol{x}_{o}).
\]
Adding the BLUP MSPE:
\begin{align*}
  &\mathcal{K}_*(\boldsymbol{x}_{o},\boldsymbol{x}_{o}) - \bm{k}_*(\boldsymbol{x}_{o})'\Sigma_*^{-1}\bm{k}_*(\boldsymbol{x}_{o})
  + \bm{\lambda}_\theta'\Sigma_*\bm{\lambda}_\theta
  - 2\,\bm{k}_*(\boldsymbol{x}_{o})'\bm{\lambda}_\theta
  + \bm{k}_*(\boldsymbol{x}_{o})'\Sigma_*^{-1}\bm{k}_*(\boldsymbol{x}_{o}) \\
  &\quad= \mathcal{K}_*(\boldsymbol{x}_{o},\boldsymbol{x}_{o})
  + \bm{\lambda}_\theta'\Sigma_*\bm{\lambda}_\theta
  - 2\,\bm{k}_*(\boldsymbol{x}_{o})'\bm{\lambda}_\theta,
\end{align*}
which matches~\eqref{eq:mspe_misspec}.
\end{proof}

\subsection{Proof of Theorem~\ref{thm:mspe_expansion}}

\begin{proof}
	The proof proceeds in three steps, executed in the reduced
	(microergodic) parameter.
	
	\medskip\noindent Step~1: 
Write
\[
W(\boldsymbol{x}_{o}) - \widehat{W}_{\widehat{\theta}}(\boldsymbol{x}_{o})
= \underbrace{\big(W(\boldsymbol{x}_{o}) - \widetilde{W}_{\theta}(\boldsymbol{x}_{o})\big)}_{A}
- \underbrace{\big(\widehat{W}_{\widehat{\theta}}(\boldsymbol{x}_{o})
	- \widetilde{W}_{\theta}(\boldsymbol{x}_{o})\big)}_{B},
\]
where $\widetilde{W}_{\theta}(\boldsymbol{x}_{o}) = \bm{\lambda}(\theta_*)'\boldsymbol{Y}$
is the working BLUP at the quasi-true parameter. Squaring and taking
expectations under $P_*$,
\[
\tau^{\,2}_{\scriptscriptstyle\mathrm{Q}}(\boldsymbol{x}_{o})
= E_*[A^2] - 2E_*[AB] + E_*[B^2].
\]
We introduce the $P_*$-optimal prediction error
$A_* = W(\boldsymbol{x}_{o}) - \bm{\lambda}_*'\boldsymbol{Y}$,
and the weight discrepancy
$d_n(\boldsymbol{x}_o) = \bm{\lambda}(\theta_*) - \bm{\lambda}_*$,
so that $A = A_* - d_n'\boldsymbol{Y}$. Because
$(W(\boldsymbol{x}_o), \boldsymbol{Y}')'$ is jointly Gaussian under
$P_*$ and $A_*$ is the residual of the $L^2(P_*)$ projection of
$W(\boldsymbol{x}_o)$ onto $\mathrm{span}(\boldsymbol{Y})$, the error
$A_*$ is uncorrelated with every coordinate of $\boldsymbol{Y}$ and
hence 
independent of $\boldsymbol{Y}$. Moreover, $E_*[A_*] = 0$.

For the cross term, $E_*[AB] = E_*[A_* B] - E_*[(d_n'\boldsymbol{Y})\,B]$.
Since $A_*$ is independent of $\boldsymbol{Y}$ and $B$ is a function
of $\boldsymbol{Y}$, the variables $A_*$ and $B$ are independent, so
$E_*[A_* B] = E_*[A_*]\,E_*[B] = 0$. For $E_*[(d_n'\boldsymbol{Y})\,B]$, Step~2
below establishes $E_*[B^2] =O(n^{-1})$. Combined with
$d_n'\Sigma_* d_n = o(n^{-1})$ from (A5), Cauchy--Schwarz gives
\[
\big|E_*[(d_n'\boldsymbol{Y})\,B]\big|
\le \big(d_n'\Sigma_* d_n\big)^{1/2}\big(E_*[B^2]\big)^{1/2}
= o(n^{-1/2})\cdot O(n^{-1/2}) = o(n^{-1}),
\]
so that $E_*[AB] = o(n^{-1})$. 
For $E_*[A_*^2]$, since $E_*[A_*\boldsymbol{Y}] = 0$ and
$d_n'\Sigma_* d_n = o(n^{-1})$,
\[
E_*[A^2] = E_*[A_*^2] - 2\,d_n'E_*[A_*\boldsymbol{Y}] + d_n'\Sigma_* d_n
= E_*[A_*^2] + o(n^{-1}),
\]
and the second part of (A5) gives
$E_*[A_*^2] = \tau^{\,2}_{\scriptscriptstyle\mathrm{B}}(\boldsymbol{x}_{o};\,\theta_*)
+ o(n^{-1})$. Putting it together we get,
\begin{equation}\label{eq:exact_decomp}
	\tau^{\,2}_{\scriptscriptstyle\mathrm{Q}}(\boldsymbol{x}_{o})
	= \tau^{\,2}_{\scriptscriptstyle\mathrm{B}}(\boldsymbol{x}_{o};\,\theta_*)
	+ E_*\!\Big[\big(\widehat{W}_{\widehat{\theta}}(\boldsymbol{x}_{o})
	- \widetilde{W}_{\theta}(\boldsymbol{x}_{o})\big)^2\Big]
	+ o(n^{-1}).
\end{equation}
	
	\medskip\noindent Step~2: 
	By (A2), 
	\[
	\bm{\lambda}(\widehat\theta) - \bm{\lambda}(\theta_*)
	= \widetilde{\bm{\lambda}}(\widehat\eta_n;\boldsymbol{x}_o)
	- \widetilde{\bm{\lambda}}(\eta_*;\boldsymbol{x}_o)
	+ s_n(\boldsymbol{x}_o),
	\qquad
	\|s_n(\boldsymbol{x}_o)\| = o_p(n^{-1/2}).
	\]
	Consequently,
	\begin{equation}
	\widehat{W}_{\widehat{\theta}}(\boldsymbol{x}_{o}) - \widetilde{W}_{\theta}(\boldsymbol{x}_{o})
	= \big(\widetilde{\bm{\lambda}}(\widehat\eta_n;\boldsymbol{x}_o)
	- \widetilde{\bm{\lambda}}(\eta_*;\boldsymbol{x}_o)\big)'\boldsymbol{Y}
	+ s_n(\boldsymbol{x}_o)'\boldsymbol{Y}\label{eq:decompt}.
	\end{equation}
	The non-microergodic remainder contributes only $o(n^{-1})$ to the
squared expectation. By (A2),
$\|s_n(\boldsymbol{x}_o)'\boldsymbol{Y}\|
= o_p(n^{-1/2})\cdot O_p(1)$, so
$n\,(s_n(\boldsymbol{x}_o)'\boldsymbol{Y})^2
\xrightarrow{P_*} 0$. Combined with the uniform integrability of
$\{n\,(s_n(\boldsymbol{x}_o)'\boldsymbol{Y})^2\}$ from (A2), convergence
in probability upgrades to convergence in $L^1(P_*)$, whence
\[
E_*\big[(s_n(\boldsymbol{x}_o)'\boldsymbol{Y})^2\big] = o(n^{-1}).
\]
Squaring~\eqref{eq:decompt} produces
$2\,E_*\big[g_n \cdot s_n(\boldsymbol{x}_o)'\boldsymbol{Y}\big]$
with $g_n := \big(\widetilde{\bm{\lambda}}(\widehat\eta_n) - \widetilde{\bm{\lambda}}(\eta_*)\big)'\boldsymbol{Y}$.
The Taylor expansion below establishes $E_*[g_n^2] = O(n^{-1})$ and
combined with $E_*[(s_n'\boldsymbol{Y})^2] = o(n^{-1})$ from above,
Cauchy--Schwarz gives
\[
E_*\big[\,|\,g_n\, s_n(\boldsymbol{x}_o)'\boldsymbol{Y}\,|\,\big]
\le \big(E_*[g_n^2]\big)^{1/2}
\big(E_*[(s_n(\boldsymbol{x}_o)'\boldsymbol{Y})^2]\big)^{1/2}
= O(n^{-1/2})\cdot o(n^{-1/2}) = o(n^{-1}).
\]
	By (A2), Taylor-expand the microergodic component about $\eta_*$:
	\[
	\widetilde{\bm{\lambda}}(\widehat\eta_n;\boldsymbol{x}_o)
	- \widetilde{\bm{\lambda}}(\eta_*;\boldsymbol{x}_o)
	= \widetilde J_0\,(\widehat\eta_n - \eta_*) + R_\eta,
	\qquad \|R_\eta\| \le C\|\widehat\eta_n - \eta_*\|^2,
	\]
	for some constant $C$ in a neighborhood of $\eta_*$. Hence
	\[
	\widehat{W}_{\widehat{\theta}}(\boldsymbol{x}_{o}) - \widetilde{W}_{\theta}(\boldsymbol{x}_{o})
	= (\widehat\eta_n - \eta_*)'\,\widetilde J_0'\,\boldsymbol{Y}
	+ R_\eta'\boldsymbol{Y} + s_n(\boldsymbol{x}_o)'\boldsymbol{Y}.
	\]
	Squaring and using
	$|R_\eta'\boldsymbol{Y}| = O_p(\|\widehat\eta_n - \eta_*\|^2)$,
	$|(\widehat\eta_n - \eta_*)'\widetilde J_0'\boldsymbol{Y}|
	= O_p(\|\widehat\eta_n - \eta_*\|)$,
	together with Cauchy--Schwarz and (A3),
	\[
	E_*\big[\|\widehat\eta_n - \eta_*\|^3\big]
	\le \big(E_*[\|\widehat\eta_n - \eta_*\|^4]\big)^{3/4}
	= O(n^{-3/2}),
	\]
	the cross-remainder has expectation $O(n^{-3/2})$ and
	$(R_\eta'\boldsymbol{Y})^2$ has expectation $O(n^{-2})$.
	Combined with the $s_n$ contribution above, both remainder terms are
	$o(n^{-1})$, giving
	\begin{equation}\label{eq:leading_term}
		E_*\!\Big[\big(\widehat{W}_{\widehat{\theta}}(\boldsymbol{x}_{o})
		- \widetilde{W}_{\theta}(\boldsymbol{x}_{o})\big)^2\Big]
		= E_*\!\Big[(\widehat\eta_n - \eta_*)'\,\widetilde J_0'\,
		\boldsymbol{Y}\boldsymbol{Y}'\,
		\widetilde J_0\,(\widehat\eta_n - \eta_*)\Big]
		+ o(n^{-1}).
	\end{equation}
	
	\medskip\noindent Step~3: 
	Decompose
	$\boldsymbol{Y}\boldsymbol{Y}' = \Sigma_{\theta_*}
	+ (\boldsymbol{Y}\boldsymbol{Y}' - \Sigma_{\theta_*})$
	and write,
	\begin{align*}
		(I) &= E_*\!\Big[(\widehat\eta_n - \eta_*)'\,\widetilde J_0'\,
		\Sigma_{\theta_*}\,\widetilde J_0\,
		(\widehat\eta_n - \eta_*)\Big], \\
		(II) &= E_*\!\Big[(\widehat\eta_n - \eta_*)'\,\widetilde J_0'\,
		(\boldsymbol{Y}\boldsymbol{Y}' - \Sigma_{\theta_*})\,
		\widetilde J_0\,
		(\widehat\eta_n - \eta_*)\Big].
	\end{align*}
	
	\emph{Term~(I).}
	Since $\widetilde J_0'\Sigma_{\theta_*}\widetilde J_0$ is a non-random
	$q_1\times q_1$ matrix,
	\[
	(I) = \mathrm{tr}\!\Big(
	\widetilde J_0'\,\Sigma_{\theta_*}\,\widetilde J_0\cdot
	E_*\!\big[(\widehat\eta_n - \eta_*)
	(\widehat\eta_n - \eta_*)'\big]
	\Big).
	\]
	For REML on the microergodic parameter,
	$E_*[\widehat\eta_n] - \eta_* = O(n^{-1})$
	\citep{KaufmanShaby2013}, so the squared-bias contribution is
	$O(n^{-2}) = o(n^{-1})$, and
	\[
	(I) = \mathrm{tr}\!\Big(
	\widetilde J_0'\,\Sigma_{\theta_*}\,\widetilde J_0\;
	\mathrm{Var}_*(\widehat\eta_n)
	\Big) + o(n^{-1}).
	\]
	
	\emph{Term~(II).}
Write $\boldsymbol{Y} = \Sigma_{\theta_*}^{1/2} Z$ with
$Z\sim N(0,I_n)$. For Gaussian data, the (restricted) maximum
likelihood estimating function for the microergodic parameter is a
centered quadratic form in $\boldsymbol{Y}$. Consequently the
first-order stochastic expansion of the microergodic estimator is
\[
\widehat\eta_n - \eta_*
= n^{-1/2}\, S_n(Z) + r_n,
\qquad
S_n(Z) = \mathcal{I}_*^{-1}
\big(Z'M_n Z - \operatorname{tr}M_n\big),
\qquad
E_*\|r_n\|^2 = O(n^{-2}),
\]
where $\mathcal{I}_*$ is the microergodic information, $M_n$ is a
symmetric matrix determined by $\Sigma_{\theta_*}$ and its derivative
in $\eta$, and the order of $r_n$ follows from (A3) and the
estimating-equation expansion. Thus $S_n$ is a centered
\emph{quadratic} form in $Z$, not a linear one, and
$u = n^{-1/2}\Sigma_{\theta_*}^{1/2}\widetilde J_0 S_n + O_p(n^{-1})$
is quadratic in $Z$ at leading order. The integrand
$u'(ZZ'-I)u$ is therefore a degree-six polynomial in $Z$, and
\[
(II)
= n^{-1}\,
E_*\!\Big[
S_n'\,\widetilde J_0'\,\Sigma_{\theta_*}^{1/2}
\big(ZZ' - I\big)
\Sigma_{\theta_*}^{1/2}\,\widetilde J_0\, S_n
\Big]
\;+\; o(n^{-1}),
\]
the remainder collecting the $r_n$ cross terms, which are
$O(n^{-3/2})$ by Cauchy--Schwarz with
$E_*\|S_n\|^4 = O(1)$ and $E_*\|r_n\|^2 = O(n^{-2})$.
The leading expectation is a sixth-order Gaussian moment; 
evaluating
it by Isserlis' theorem \citep{McCullagh2018}, every pairing either contracts the two
factors of $S_n$ against $\Sigma_{\theta_*}^{1/2}(ZZ'-I)
\Sigma_{\theta_*}^{1/2}$ in a way that reproduces the contraction
already accounted for in Term~(I), or pairs a factor of $S_n$ with
the centered fluctuation $ZZ'-I$ and vanishes because
$E_*[Z'M_nZ - \operatorname{tr}M_n]=0$ and the odd residual pairings
contribute no surviving contraction. The non-vanishing pairings are
exactly those absorbed into Term~(I); the residual is
$O(n^{-3/2})$. Hence $(II) = o(n^{-1})$.
A fully explicit Wick expansion is given in
Appendix~\ref{app:term2}.
	
	Combining Steps~1--3 establishes~\eqref{eq:mspe_expansion}.
\end{proof}

\subsubsection{Wick expansion of Term~(II)}
\label{app:term2}

Throughout, $Z \sim N(0, I_n)$ under $P_*$,
$\Sigma_* \equiv \Sigma_{\theta_*}$,
and
$G := \Sigma_*^{1/2}\,\widetilde J_0$, a fixed $n \times q_1$ matrix
(we suppress the dependence on $\boldsymbol{x}_o$). Indices
$a,b,\dots \in \{1,\dots,q_1\}$ label microergodic coordinates and
$i,j,k,\ell,p,r \in \{1,\dots,n\}$ label data coordinates. We write
$G_{ia}$ for the $(i,a)$ entry of $G$. 


By the estimating-equation expansion invoked in the proof, for Gaussian data the
microergodic estimator admits  the first-order
stochastic representation
\begin{equation}\label{eq:scoreexp}
\widehat\eta_{n,a} - \eta_{*,a}
= n^{-1/2}\, S_{n,a}(Z) + r_{n,a},
\qquad
S_{n,a}(Z) = \sum_{b=1}^{q_1}\big(\mathcal{I}_*^{-1}\big)_{ab}\,
\big(Z' M^{(b)} Z - \operatorname{tr} M^{(b)}\big),
\end{equation}
where $\mathcal{I}_*$ is the (nonsingular) microergodic information,
each $M^{(b)}$ is a symmetric $n\times n$ matrix determined by
$\Sigma_*$ and $\partial \Sigma_* / \partial \eta_b$, and
$E_*\|r_n\|^2 = O(n^{-2})$. The estimating function being a
\emph{centered} quadratic form gives: 
\begin{equation}\label{eq:centered}
E_*\big[\, Z' M^{(b)} Z - \operatorname{tr} M^{(b)} \,\big] = 0,
\qquad
E_*\big[\, S_{n,a}(Z) \,\big] = 0 .
\end{equation}

The leading part of Term~(II) in the proof is
\begin{equation}\label{eq:II-target}
(II)_0
:= n^{-1}\,
E_*\!\Big[
S_{n}'\, G'\, \big(ZZ' - I\big)\, G\, S_{n}
\Big]
= n^{-1}\,\sum_{a,b=1}^{q_1}
E_*\!\Big[
S_{n,a}\, S_{n,b}\,
\big( G'(ZZ'-I)G \big)_{ab}
\Big],
\end{equation}
and the proof's claim is $(II)_0 = o(1)\cdot n^{-1}$, with the
$r_n$ cross terms separately $O(n^{-3/2})$ by Cauchy--Schwarz
(established in the subsection ``The remainder'' below).


Substituting \eqref{eq:scoreexp} into \eqref{eq:II-target} and using
$S_{n,a} = \sum_{c=1}^{q_1}(\mathcal{I}_*^{-1})_{ac}\,Q^{(c)}$ with
$Q^{(c)} := Z'M^{(c)}Z - \operatorname{tr}M^{(c)}$, the quantity
$n\,(II)_0$ equals
\begin{equation}\label{eq:sixth}
\sum_{a,b,c,d=1}^{q_1}
(\mathcal{I}_*^{-1})_{ac}(\mathcal{I}_*^{-1})_{bd}\;
E_*\!\Big[\,
Q^{(c)}\, Q^{(d)}\,
\big( G'(ZZ'-I)G \big)_{ab}
\,\Big].
\end{equation}
Each $Q^{(c)}$ is a centered quadratic form in $Z$, and
\[
\big(G'(ZZ'-I)G\big)_{ab}
= \sum_{i,j=1}^{n} G_{ia}G_{jb}\,(Z_iZ_j - \delta_{ij})
\]
is likewise a centered quadratic form in $Z$. The expectation in
\eqref{eq:sixth} is therefore the expectation of a product of
\emph{three} centered quadratic forms in a standard Gaussian vector,
hence a polynomial of degree six in $Z$. We write the three forms as
\[
Q^{(c)} = Z'A Z - \operatorname{tr}A,\quad
Q^{(d)} = Z'B Z - \operatorname{tr}B,\quad
F_{ab} = Z'C^{(ab)} Z - \operatorname{tr}C^{(ab)},
\]
with $A = M^{(c)}$, $B = M^{(d)}$ symmetric, and
$C^{(ab)} = \tfrac12\big(G_{\cdot a}G_{\cdot b}' + G_{\cdot b}G_{\cdot a}'\big)$
the symmetrized rank-$\le 2$ matrix with
$Z'C^{(ab)}Z = (G'ZZ'G)_{ab}$ and
$\operatorname{tr}C^{(ab)} = (G'G)_{ab}$.


For symmetric matrices $A,B,C$ and $Z\sim N(0,I_n)$, the centered
quadratic forms $\widetilde A := Z'AZ-\operatorname{tr}A$ (and
likewise $\widetilde B,\widetilde C$) satisfy the third-order Isserlis
identity
\begin{equation}\label{eq:isserlis3}
E_*\big[\,\widetilde A\,\widetilde B\,\widetilde C\,\big]
= 8\,\operatorname{tr}(ABC).
\end{equation}
This is the standard Gaussian result that the only nonvanishing
joint cumulant of three centered quadratic forms is the third
cumulant, equal to $8\operatorname{tr}(ABC)$; all pairwise-disconnected
Wick pairings cancel against the centering constants
$\operatorname{tr}A,\operatorname{tr}B,\operatorname{tr}C$, and odd
single pairings vanish. A self-contained derivation: expand the
product, apply Isserlis to each Gaussian moment up to order six, and
collect; the disconnected diagrams are exactly cancelled by the
mean-subtraction in each factor, leaving the single fully connected
diagram $8\operatorname{tr}(ABC)$.

Applying \eqref{eq:isserlis3} with $A=M^{(c)}$, $B=M^{(d)}$,
$C=C^{(ab)}$, the expectation in \eqref{eq:sixth} is
\begin{equation}\label{eq:after-isserlis}
E_*\big[ Q^{(c)} Q^{(d)} F_{ab} \big]
= 8\,\operatorname{tr}\!\big( M^{(c)} M^{(d)} C^{(ab)} \big).
\end{equation}
Because $C^{(ab)}$ has rank at most two with
$C^{(ab)} = \operatorname{sym}(G_{\cdot a}G_{\cdot b}')$,
\begin{equation}\label{eq:rank2}
\operatorname{tr}\!\big( M^{(c)} M^{(d)} C^{(ab)} \big)
= G_{\cdot a}'\, M^{(c)} M^{(d)}\, G_{\cdot b}
\;=\;\big( G' M^{(c)} M^{(d)} G \big)_{ab},
\end{equation}
using symmetry of $M^{(c)}M^{(d)}$ under the trace and the
symmetrization in $C^{(ab)}$.


Combining \eqref{eq:sixth}, \eqref{eq:after-isserlis} and
\eqref{eq:rank2},
\begin{equation}\label{eq:II0-final}
n\,(II)_0
= 8\sum_{a,b,c,d=1}^{q_1}
(\mathcal{I}_*^{-1})_{ac}(\mathcal{I}_*^{-1})_{bd}\,
\big( G' M^{(c)} M^{(d)} G \big)_{ab}.
\end{equation}
This is exactly the contraction that defines the variance term
(I). Indeed, from \eqref{eq:scoreexp} the microergodic sandwich
matrix in Term~(I) is
$\widetilde J_0' \Sigma_* \widetilde J_0 \cdot
\operatorname{Var}_*(\widehat\eta_n)$, and the leading-order variance
implied by the score representation is
$\operatorname{Var}_*(\widehat\eta_n)
= n^{-1}\mathcal{I}_*^{-1}\,
\Xi\,\mathcal{I}_*^{-1} + o(n^{-1})$ with
$\Xi_{cd} = \operatorname{Cov}_*(Q^{(c)},Q^{(d)})
= 2\operatorname{tr}(M^{(c)}M^{(d)})$ (the second-order Isserlis
identity for two centered quadratic forms). 

Substituting and using
$G = \Sigma_*^{1/2}\widetilde J_0$ shows that the right-hand side of
\eqref{eq:II0-final}, divided by $n$, is term-for-term a constant
multiple of the contraction already carried by Term~(I); it contains
no factor that is not present there. Consequently
\eqref{eq:II-target} contributes at the \emph{same} $n^{-1}$ order as
Term~(I) and through the \emph{same} sandwich contraction; it
introduces no new $n^{-1}$ term. Folding \eqref{eq:II0-final} into
Term~(I) (which is where the proof of
Theorem~\ref{thm:mspe_expansion} accounts for it) leaves, from
Term~(II), only the higher-order remainder.


The $r_n$ contributions to Term~(II) are
$2\,n^{-1/2}E_*[S_n' G'(ZZ'-I)G\, r_n]
+ E_*[r_n' G'(ZZ'-I)G\, r_n]$.
By Cauchy--Schwarz, $E_*\|S_n\|^4 = O(1)$ (fourth moments of
centered Gaussian quadratic forms are bounded under the standing
assumptions),
$\|G'(ZZ'-I)G\|$ has bounded second moment, and
$E_*\|r_n\|^2 = O(n^{-2})$, whence
\[
\big|\,n^{-1/2}E_*[S_n' G'(ZZ'-I)G\, r_n]\,\big|
\le n^{-1/2}\,
\big(E_*\|S_n\|^2\big)^{1/2}
\big(E_*\|G'(ZZ'-I)G\|^2\big)^{1/2}
\big(E_*\|r_n\|^2\big)^{1/2}\cdot O(1)
= O(n^{-3/2}),
\]
and the pure-$r_n$ term is $O(n^{-2})$. Hence the total Term~(II)
remainder is $O(n^{-3/2}) = o(n^{-1})$.


The leading part of Term~(II), equation \eqref{eq:II0-final}, is a
constant multiple of the same microergodic sandwich contraction that
constitutes Term~(I) and is absorbed into it. The Term~(II) remainder
is $o(n^{-1})$. Therefore Term~(II) introduces no separate
contribution at order $n^{-1}$, which is the statement used in
Step~3 of the proof of Theorem~\ref{thm:mspe_expansion}. This is the
Gaussian-process infill instance of the reduction of the
variance-component estimation contribution to a single sandwich term
in the second-order MSPE expansions of \citet{kackarharville1984}
and \citet{prasadrao1990}.

\subsection{Proof of Proposition~\ref{prop:smooth_mspe}}

\begin{proof}
The first term is
$\mathcal{K}_*(\boldsymbol{x}_o, \boldsymbol{x}_o)
- \bm{k}_*(\boldsymbol{x}_o)'\Sigma_*^{-1}
\bm{k}_*(\boldsymbol{x}_o)$,
where $\mathcal{K}_*(\boldsymbol{x}_o, \boldsymbol{x}_o)$ is
constant by isotropy
and the second term is a quadratic form in
$\bm{k}_*(\boldsymbol{x}_o)$.
Each entry of $\bm{k}_*(\boldsymbol{x}_o)$ is
$\mathcal{K}_*(\|\boldsymbol{x}_o - \boldsymbol{x}_i\|)$,
which is $C^\infty$ in~$\boldsymbol{x}_o$ for
$\boldsymbol{x}_o \neq \boldsymbol{x}_i$: the
Euclidean norm $h$ is smooth for $h > 0$, and
the squared exponential and Mat\'ern covariance
functions are infinitely differentiable for $h > 0$
\citep{stein99,abramowitz1966handbook}.
A quadratic form in a $C^\infty$ vector with a
constant matrix is $C^\infty$.

\medskip\noindent For the second term in
\eqref{eq:mspe_expansion}, consider first the full
prediction-weight Jacobian
$J_0(\boldsymbol{x}_o)
= \partial \bm{\lambda}(\theta;\boldsymbol{x}_o)/
  \partial\theta'\big|_{\theta_*}$,
whose columns are
\[
  \frac{\partial \bm{\lambda}(\theta;\boldsymbol{x}_o)}
       {\partial\theta_j}\bigg|_{\theta_*}
  = \frac{\partial \Sigma_\theta^{-1}}
         {\partial\theta_j}\bigg|_{\theta_*}
    \bm{k}_*(\boldsymbol{x}_o)
    + \Sigma_*^{-1}
    \frac{\partial \bm{k}_\theta(\boldsymbol{x}_o)}
         {\partial\theta_j}\bigg|_{\theta_*},
  \qquad j = 1, \dots, q.
\]
Only the vectors $\bm{k}_*(\boldsymbol{x}_o)$ and
$\partial \bm{k}_\theta(\boldsymbol{x}_o)/
\partial\theta_j\big|_{\theta_*}$ depend
on~$\boldsymbol{x}_o$, through the entries
$\mathcal{K}(\|\boldsymbol{x}_o - \boldsymbol{x}_i\|)$ and
$\partial \mathcal{K}_\theta(h)/\partial\theta_j$, which are
$C^\infty$ in~$h$ for $h > 0$ by assumption.
Therefore, each column of $J_0(\boldsymbol{x}_o)$ is
$C^\infty$ in~$\boldsymbol{x}_o$ by the linearity of
differentiation and the chain rule. By
assumption~(A2), the reduced Jacobian
$\widetilde J_0(\boldsymbol{x}_o)
= \partial \widetilde{\bm{\lambda}}(\eta;\boldsymbol{x}_o)/
  \partial\eta'\big|_{\eta_*}$
is obtained by differentiating
$\bm{\lambda}(\theta;\boldsymbol{x}_o)$ along the
microergodic directions in $\theta$-space and
inherits its $\boldsymbol{x}_o$-dependence from the
same vectors $\bm{k}_*(\boldsymbol{x}_o)$ and
$\partial \bm{k}_\theta(\boldsymbol{x}_o)/\partial\theta_j$;
consequently $\widetilde J_0(\boldsymbol{x}_o)$ is
$C^\infty$ in~$\boldsymbol{x}_o$.

Now, $\mathrm{Var}_*(\widehat\eta_n)$ is a constant
$q_1 \times q_1$ matrix independent of~$\boldsymbol{x}_o$.
Write
$\widetilde M(\boldsymbol{x}_o)
= \widetilde J_0(\boldsymbol{x}_o)'\,\Sigma_*\,
\widetilde J_0(\boldsymbol{x}_o)$,
which is a $q_1 \times q_1$ matrix-valued $C^\infty$
function of~$\boldsymbol{x}_o$ (as a product of
constant and smooth factors). Then
\[
  \mathrm{tr}\!\big(
    \widetilde M(\boldsymbol{x}_o)\,
    \mathrm{Var}_*(\widehat\eta_n)\big)
  = \sum_{j,l=1}^{q_1}
    \widetilde M_{jl}(\boldsymbol{x}_o)\,
    [\mathrm{Var}_*(\widehat\eta_n)]_{lj},
\]
which is a finite linear combination of $C^\infty$
functions with constant coefficients, hence $C^\infty$.

\medskip\noindent For the $o(n^{-1})$ remainder in
\eqref{eq:mspe_expansion}, under assumptions
(A1)--(A3) the remainder involves second and higher
derivatives of the reduced prediction-weight map
$\widetilde{\bm{\lambda}}(\eta;\boldsymbol{x}_o)$ with
respect to~$\eta$, each of which is $C^\infty$
in~$\boldsymbol{x}_o$ by the same chain-rule argument
used above for $\widetilde J_0$. The $o(n^{-1})$ rate
is uniform in~$\boldsymbol{x}_o$ over compact subsets
of~$\mathbb{R}^d \setminus
\{\boldsymbol{x}_1, \dots, \boldsymbol{x}_n\}$.
\end{proof}

\subsection{Proof of Theorem~\ref{thm:infill-weight-discrepancy}}

\begin{proof}
The proof proceeds by reducing the squared weight discrepancy to a comparison of mean squared prediction errors of two linear predictors, then contrasting Gaussian measures and asymptotic efficiency of BLUPs under equivalent versus orthogonal measures.

\medskip
\noindent Step 1: 
For any fixed $\theta \in \Theta$, Proposition~\ref{prop:decomp}
gives
\begin{equation}
    E_*\!\left[\bigl(W(\boldsymbol{x}_o) - \widehat{W}_\theta(\boldsymbol{x}_o)\bigr)^2\right]
    = E_*\!\left[\bigl(W(\boldsymbol{x}_o) - \bm{\lambda}_*' \boldsymbol{Y}\bigr)^2\right]
    + (\bm{\lambda}_\theta - \bm{\lambda}_*)' \Sigma_* (\bm{\lambda}_\theta - \bm{\lambda}_*),
    \label{eq:pythagoras-infill}
\end{equation}
where $\bm{\lambda}_*' \boldsymbol{Y}$ comes from~\eqref{eq:krigcovknown}. 
With this convention,
\begin{equation}
    Q_n(\widehat{\theta}_n)
    \;:=\; (\bm{\lambda}_{\widehat{\theta}_n} - \bm{\lambda}_*)^\prime \Sigma_* (\bm{\lambda}_{\widehat{\theta}_n} - \bm{\lambda}_*)
    \;=\; \mathrm{MSPE}_*\bigl(\widehat{W}_{\widehat{\theta}_n}(x_o)\bigr) - \mathrm{MSPE}_*\bigl(\widetilde{W}_*(x_o)\bigr).
    \label{eq:Qn-as-mspe-diff}
\end{equation}
Both terms on the right of~\eqref{eq:Qn-as-mspe-diff} are non-negative, and $Q_n(\widehat{\theta}_n) \geq 0$.

\medskip
\noindent Step 2: 
Under (B1)--(B3) and (B4), $\widetilde{W}_*(x_o)$ satisfies, by standard infill asymptotic results for Gaussian processes with a positive nugget \citep[Ch.~3]{stein99},
\begin{equation*}
    \mathrm{MSPE}_*\bigl(\widetilde{W}_*(x_o)\bigr) \;\longrightarrow\; \tau_\infty^2(x_o) \in (0, \sigma_*^2]
    \qquad \text{as } n \to \infty.
\end{equation*}
The limit $\tau_\infty^2(x_o)$ is strictly positive because the positive nugget bounds the predictive precision below the latent process scale: even as the design becomes dense, the measurement-error component prevents the MSPE from collapsing to zero \citep[Sec.~3.3]{stein99}.

\medskip
\noindent Step 3: Equivalence case (i). The Feldman--H\'ajek
dichotomy \citep{feldman1958equivalence,hajek1958property,daprato2014stochastic} restricts
Gaussian measures on a function space to be either equivalent
or mutually singular. Equivalence is a strong condition that constrains the covariance kernels to agree on a microergodic parameter combination \citep{zhang04a,stein99}.

Next we invoke 
asymptotic efficiency for BLUPs under equivalent Gaussian measures \citep{stein1988asymptotically,stein1990bounds,stein99}:
if $P_*^\infty$ and $P_{\theta^\dagger}^\infty$ are equivalent, then under (B1)--(B4),
\begin{equation}
    \frac{\mathrm{MSPE}_*\bigl(\widehat{W}_{\theta^\dagger}(x_o)\bigr)}{\mathrm{MSPE}_*\bigl(\widetilde{W}_*(x_o)\bigr)} \;\longrightarrow\; 1
    \qquad \text{as } n \to \infty.
    \label{eq:stein-asymp-equiv}
\end{equation}
This asymptotic-efficiency statement holds at the $\theta^\dagger$ characterizing the equivalent measure.

To extend~\eqref{eq:stein-asymp-equiv} from $\theta^\dagger$ to the random sequence $\widehat{\theta}_n$, we use the microergodic-parameter framework of \citet{zhang04a} 
and by Assumption (B5) the MLE $\hat\eta_n$ of this combination satisfies $\hat\eta_n \xrightarrow{P_*} \eta^\dagger$, where $\eta^\dagger$ is the microergodic parameter of $P_{\theta^\dagger}^\infty$ (this consistency is established for Mat\'ern in \citet{zhang04a} and extended to other families in \citet{KaufmanShaby2013,bevilacqua2019}). The squared weight discrepancy~\eqref{eq:Qn-as-mspe-diff} depends on $\hat\theta_n$ at leading order only through $\hat\eta_n$, by the same reasoning used in the proof of Theorem~\ref{thm:mspe_expansion}. 
Therefore, by the continuous mapping theorem applied to the continuous dependence of $\bm{\lambda}_\theta$ on the microergodic parameter,
\begin{equation*}
    \frac{\mathrm{MSPE}_*\bigl(\widehat{W}_{\widehat{\theta}_n}(x_o)\bigr)}{\mathrm{MSPE}_*\bigl(\widetilde{W}_*(x_o)\bigr)} \xrightarrow{P_*} 1.
\end{equation*}

Combining~\eqref{eq:Qn-as-mspe-diff} with this convergence and Step~2, by Slutsky's theorem, 
\begin{align*}
    Q_n(\widehat{\theta}_n)
    &= \mathrm{MSPE}_*\bigl(\widetilde{W}_*(x_o)\bigr) \cdot \biggl[\frac{\mathrm{MSPE}_*\bigl(\widehat{W}_{\widehat{\theta}_n}(x_o)\bigr)}{\mathrm{MSPE}_*\bigl(\widetilde{W}_*(x_o)\bigr)} - 1\biggr] \\
    &\xrightarrow{P_*} \tau_\infty^2(x_o) \cdot 0 \;=\; 0,
\end{align*}
where $\tau_\infty^2(x_o) \in (0, \sigma_*^2]$ is finite by Step~2. 

This establishes part~(i).

\medskip
\noindent Step 4: Orthogonality case (ii). 
 By the Stein bounds on the efficiency of linear predictions under incorrect covariance \citep[Theorem~1]{stein1990bounds}, mutual singularity of $P_*^\infty$ and $P_\theta^\infty$ at a fixed $\theta$ implies that the BLUP under $K_\theta$ is asymptotically inefficient relative to the true BLUP, in the sense that
\begin{equation}
    \liminf_{n \to \infty} \biggl[\frac{\mathrm{MSPE}_*\bigl(\widehat{W}_\theta(x_o)\bigr)}{\mathrm{MSPE}_*\bigl(\widetilde{W}_*(x_o)\bigr)} - 1\biggr] \;\geq\; \delta(\theta) \;>\; 0
    \qquad \text{for every fixed } \theta \in \overline{\Theta},
    \label{eq:stein-orthogonality}
\end{equation}
where $\delta(\theta) > 0$ is the asymptotic inefficiency gap, which is strictly positive under mutual singularity by \citet[Theorem~1 and Corollary~1]{stein1990bounds}. The continuity of $\theta \mapsto \delta(\theta)$ on $\overline{\Theta}$ follows from the continuous dependence of $\Sigma_\theta$ and $\bm{k}_\theta(x_o)$ on $\theta$ under (B1).

By Assumption (B5), $\widehat{\theta}_n$ takes values in the compact set $\overline{\Theta}$, so by the Bolzano--Weierstrass theorem applied to the random sequence, $\widehat{\theta}_n$ has at least one subsequential $P_*$-limit point $\theta^\dagger \in \overline{\Theta}$. Along any such subsequence $\{n_k\}$ with $\hat{\theta}_{n_k} \xrightarrow{P_*} \theta^\dagger$, the continuous dependence of $\bm{\lambda}_\theta$ and $\Sigma_\theta$ on $\theta$ under (B1), combined with the continuous mapping theorem, gives
\begin{equation*}
    \mathrm{MSPE}_*\bigl(\widehat{W}_{\hat{\theta}_{n_k}}(x_o)\bigr) - \mathrm{MSPE}_*\bigl(\widehat{W}_{\theta^\dagger}(x_o)\bigr) \xrightarrow{P_*} 0.
\end{equation*}
Combined with \eqref{eq:stein-orthogonality} at $\theta = \theta^\dagger$ and Step~2,
\begin{equation*}
    Q_{n_k}(\hat{\theta}_{n_k})
    = \mathrm{MSPE}_*\bigl(\widehat{W}_{\hat{\theta}_{n_k}}(x_o)\bigr) - \mathrm{MSPE}_*\bigl(\widetilde{W}_*(x_o)\bigr)
    \xrightarrow{P_*} Q_\infty(x_o),
\end{equation*}
where
\begin{equation*}
    Q_\infty(x_o) \;:=\; \lim_{n \to \infty} \bigl[\mathrm{MSPE}_*\bigl(\widehat{W}_{\theta^\dagger}(x_o)\bigr) - \mathrm{MSPE}_*\bigl(\widetilde{W}_*(x_o)\bigr)\bigr] \;\geq\; \delta(\theta^\dagger) \cdot \tau_\infty^2(x_o) \;>\; 0.
\end{equation*}

If $\widehat{\theta}_n$ has multiple subsequential limit points, the same argument applied along each subsequence yields a limit $Q_\infty(x_o; \theta^\dagger) \geq \delta(\theta^\dagger) \cdot \tau_\infty^2(x_o) > 0$ that depends on which limit point is selected, but is strictly positive in every case. 

This establishes part~(ii).
\end{proof}

\subsection{Proof of Proposition \ref{prop:var_cct}}

\begin{proof}
Write $X = \widehat r_W(\boldsymbol{x}_o)$ and
$Y = \widehat\tau^{\,2}_{\scriptscriptstyle\mathrm{K}_\theta}(\boldsymbol{x}_{o})$
with $\mu_X = E_*[X]$, $\mu_Y = E_*[Y] = \tau^{\,2}_{\scriptscriptstyle\mathrm{K}_\theta}(\boldsymbol{x}_{o}) + O(n^{-1})$ (by (C1)),
$\sigma_X^2 = \mathrm{Var}_*(X)$, $\sigma_Y^2 = \mathrm{Var}_*(Y)$.
The variance of the product admits the exact decomposition
\begin{align}
  \mathrm{Var}_*(XY) &= E_*[(XY)^2] - (E_*[XY])^2 \nonumber\\
  &= E_*[X^2 Y^2] - \bigl(\mu_X \mu_Y + \mathrm{Cov}_*(X,Y)\bigr)^2\label{eq:VARXY}.
\end{align}
Now,
\[
  E_*[X^2 Y^2] \;=\; E_*[X^2]\,E_*[Y^2] \;+\; \mathrm{Cov}_*(X^2, Y^2),
\]
and expanding $E_*[X^2]\,E_*[Y^2] = (\mu_X^2 + \sigma_X^2)(\mu_Y^2 + \sigma_Y^2)$,
\[
  E_*[X^2 Y^2] \;=\; \mu_X^2 \mu_Y^2 + \mu_X^2 \sigma_Y^2 + \mu_Y^2 \sigma_X^2 + \sigma_X^2 \sigma_Y^2 + \mathrm{Cov}_*(X^2, Y^2).
\]
Similarly,
\[
  (\mu_X \mu_Y + \mathrm{Cov}_*(X,Y))^2 \;=\; \mu_X^2 \mu_Y^2 + 2\mu_X \mu_Y \mathrm{Cov}_*(X,Y) + (\mathrm{Cov}_*(X,Y))^2.
\]
This makes~\eqref{eq:VARXY},
\begin{equation}\label{eq:var-product-exact}
  \mathrm{Var}_*(XY)
  \;=\; \mu_Y^2 \sigma_X^2 + \mu_X^2 \sigma_Y^2 + \sigma_X^2 \sigma_Y^2
  \;-\; 2\mu_X \mu_Y \mathrm{Cov}_*(X,Y) - (\mathrm{Cov}_*(X,Y))^2
  \;+\; \mathrm{Cov}_*(X^2, Y^2).
\end{equation}
The first three terms on the right of~\eqref{eq:var-product-exact} are the independence formula. The remaining terms collect the dependence contribution.

The first three terms can be rewritten as
\[
  \mu_Y^2 \sigma_X^2 + E_*[X^2]\,\sigma_Y^2 \;=\; \mu_Y^2\,\mathrm{Var}_*(\widehat r_W(\boldsymbol{x}_o)) + E_*[\widehat r_W^2(\boldsymbol{x}_o)]\,\mathrm{Var}_*(\widehat\tau^{\,2}_{\scriptscriptstyle\mathrm{K}_\theta}(\boldsymbol{x}_{o})).
\]
By (C1), $\mu_Y^2 = \tau^4_{\scriptscriptstyle\mathrm{K}_\theta}(\boldsymbol{x}_{o}) + O(n^{-1})$, and absorbing this $O(n^{-1})$ slack into $R_n$ gives 
$L_n$ in~\eqref{eq:variance-leading}.

For the remainder, by Cauchy--Schwarz applied to $\mathrm{Cov}_*(X,Y)$,
\[
  |2\mu_X \mu_Y \mathrm{Cov}_*(X,Y)| \;\leq\; 2|\mu_X \mu_Y|\,|\rho_n|\,\sigma_X \sigma_Y.
\]
The leading-order quantity in~\eqref{eq:variance-leading} satisfies
\[
  L_n \;\geq\; \mu_Y^2 \sigma_X^2 + \mu_X^2 \sigma_Y^2 \;\geq\; 2|\mu_X \mu_Y|\,\sigma_X \sigma_Y,
\]
where the second inequality is the Arithmetic Mean--Geometric Mean inequality applied to $(\mu_Y \sigma_X)^2$ and $(\mu_X \sigma_Y)^2$. Combining,
\[
  \frac{|2\mu_X \mu_Y \mathrm{Cov}_*(X,Y)|}{L_n} \;\leq\; |\rho_n|.
\]
The remaining contributions to $R_n$ are $(\mathrm{Cov}_*(X,Y))^2 = \rho_n^2 \sigma_X^2 \sigma_Y^2$, which is bounded by $|\rho_n|^2 \sigma_X^2 \sigma_Y^2 / L_n \leq |\rho_n|^2$ via the same inequality, and $\mathrm{Cov}_*(X^2, Y^2)$, which is $o(L_n)$ directly by (C4). Therefore, $|R_n|/L_n \leq |\rho_n|\,(1 + o(1))$, and under (C3), $|\rho_n| \to 0$, so $R_n = o(L_n)$ as claimed. Combining the Cauchy--Schwarz bound on $|2\mu_X \mu_Y \mathrm{Cov}_*(X,Y)|$ with $\mu_X = 1 + O(n^{-1})$ from (C2) and $\mu_Y = \tau^{\,2}_{\scriptscriptstyle\mathrm{K}_\theta}(\boldsymbol{x}_{o}) + O(n^{-1})$ from (C1) yields the explicit bound stated in the proposition.
\end{proof}

\subsection{Proof of Proposition~\ref{prop:cctbias}}
\begin{proof}
Denote $\tau^{\,2} = \tau^{\,2}_{\scriptscriptstyle\mathrm{Q}}(\boldsymbol{x}_i)$, $\widehat r = \widehat r_W(\boldsymbol{x}_i)$, $b_{\mathrm{emp}} = b_{\mathrm{emp},i}$, $b_{\scriptstyle\mathrm{K}_{\theta}} = b_{\scriptstyle\mathrm{K}_{\theta}, i}$ for brevity. Under (D1), the calibration ratio can be written
\[
  \widehat r \;=\; \frac{\tau^{\,2} + b_{\mathrm{emp}}}{\tau^{\,2} + b_{\scriptstyle\mathrm{K}_{\theta}}}
  \;=\; \frac{1 + b_{\mathrm{emp}}/\tau^{\,2}}{1 + b_{\scriptstyle\mathrm{K}_{\theta}}/\tau^{\,2}}.
\]
Define
\[
  a \;=\; \frac{b_{\mathrm{emp}}}{\tau^{\,2}}, \qquad q \;=\; \frac{b_{\scriptstyle\mathrm{K}_{\theta}}}{\tau^{\,2}}.
\]
Under (D1), $a = O_p(h_n)$ and $q = O_p(n^{-1/2})$; both converge to zero in probability as $n,M_n \to \infty$. Therefore, for sufficiently large $n$ and~$M_n$, the event $\{|q| < 1/2\}$ occurs with probability approaching one, ensuring the Taylor expansion below is valid with high probability.

Expand the function $g(a,q) = (1+a)/(1+q)$ around $(0,0)$. For $|q| < 1$,
\[
  \frac{1}{1+q} \;=\; 1 - q + q^2 + R_3(q), \qquad R_3(q) = O(q^3).
\]
Therefore
\[
  \frac{1+a}{1+q} \;=\; (1+a)(1 - q + q^2) + O\bigl(|a|\,q^2 + |q|^3\bigr)
  \;=\; 1 + (a - q) + (q^2 - aq) + O\bigl(|a|\,q^2 + |q|^3\bigr).
\]
Substituting back,
\[
  \widehat r \;=\; 1 + \frac{b_{\mathrm{emp}} - b_{\scriptstyle\mathrm{K}_{\theta}}}{\tau^{\,2}} + \frac{b_{\scriptstyle\mathrm{K}_{\theta}}^2 - b_{\mathrm{emp}}\,b_{\scriptstyle\mathrm{K}_{\theta}}}{\tau^4} + R_n,
\]
where the remainder satisfies $R_n = O_p\bigl(h_n\cdot n^{-1} + n^{-3/2}\bigr) = o_p(h_n^2)$ under (D1), since $n^{-1} \le h_n^2$ implies $h_n\cdot n^{-1} \le h_n^3 = o(h_n^2)$ and $n^{-3/2} \le h_n^3 = o(h_n^2)$.

Taking expectations and using (D2) together with uniform integrability (which holds under (D1)--(D2) and the bounded second moments of $b_{\mathrm{emp}}/\tau^{\,2}$ and $b_{\scriptstyle\mathrm{K}_{\theta}}/\tau^{\,2}$),
\begin{equation}\label{eq:cal-ratio-expected}
  E_*[\widehat r]
  \;=\; 1 + \frac{E_*[b_{\mathrm{emp}}] - E_*[b_{\scriptstyle\mathrm{K}_{\theta}}]}{\tau^{\,2}}
  \;+\; \frac{E_*[b_{\scriptstyle\mathrm{K}_{\theta}}^2] - E_*[b_{\mathrm{emp}}\,b_{\scriptstyle\mathrm{K}_{\theta}}]}{\tau^4}
  \;+\; o(h_n^2).
\end{equation}
Now $E_*[b_{\scriptstyle\mathrm{K}_{\theta}}^2] = \mathrm{Var}_*(\widehat\tau^{\,2}_{\scriptstyle\mathrm{K}_{\theta}}(\boldsymbol{x}_i)) + \delta_{\scriptstyle\mathrm{K}_{\theta},i}^2$, and by (D1)--(D3), $\delta_{\scriptstyle\mathrm{K}_{\theta},i}^2 = O(n^{-1}) \cdot O(n^{-1}) = O(n^{-2})$, whereas $\mathrm{Var}_*(\widehat\tau^{\,2}_{\scriptstyle\mathrm{K}_{\theta}}(\boldsymbol{x}_i)) = O(n^{-1})$, so
\[
  E_*[b_{\scriptstyle\mathrm{K}_{\theta}}^2] \;=\; \mathrm{Var}_*(\widehat\tau^{\,2}_{\scriptstyle\mathrm{K}_{\theta}}(\boldsymbol{x}_i)) + O(n^{-2})
  \;=\; \mathrm{Var}_*(\widehat\tau^{\,2}_{\scriptstyle\mathrm{K}_{\theta}}(\boldsymbol{x}_i)) + o(h_n^2).
\]
Similarly, $E_*[b_{\mathrm{emp}}\,b_{\scriptstyle\mathrm{K}_{\theta}}] = \mathrm{Cov}_*(\widehat\tau^{\,2}_{\mathrm{emp}}(\boldsymbol{x}_i),\,\widehat\tau^{\,2}_{\scriptstyle\mathrm{K}_{\theta}}(\boldsymbol{x}_i)) + \delta_{\mathrm{emp},i}\,\delta_{\scriptstyle\mathrm{K}_{\theta},i}$, where $\delta_{\scriptstyle\mathrm{K}_{\theta},i} = O(n^{-1})$ by the REML bias rate of \citet{KaufmanShaby2013} on the microergodic parameter and $\delta_{\mathrm{emp},i} = O(h_n)$, giving $\delta_{\mathrm{emp},i}\,\delta_{\scriptstyle\mathrm{K}_{\theta},i} = O(h_n \cdot n^{-1}) = o(h_n^2)$. Substituting into~\eqref{eq:cal-ratio-expected} yields~\eqref{eq:cal-ratio-bias}.

Finally, under correct specification it is well-established that the plug-in estimator is downward biased ($\delta_{\scriptstyle\mathrm{K}_{\theta},i} < 0$) due to the parameter-uncertainty contribution captured by Theorem~\ref{thm:mspe_expansion}, while cross-validation--based estimators typically exhibit upward bias due to training on fewer samples and re-estimating parameters fold-by-fold \citep{Burman1989,arlot2010survey}. Therefore $\delta_{\mathrm{emp},i} - \delta_{\scriptstyle\mathrm{K}_{\theta},i} > 0$, and hence $E_*[\widehat r(\boldsymbol{x}_i)] > 1$ for sufficiently large $n$ and~$M_n$, establishing the upward bias.
\end{proof}

\end{appendices}